\newcommand{\vi}[1]{v^{#1}(z)}
\theoremstyle{plain}
\newtheorem{Th}{Theorem}[section]
\newtheorem{Cor}[Th]{Corollary}
\newtheorem{Lem}[Th]{Lemma}
\newtheorem{Prop}[Th]{Proposition}
\theoremstyle{definition}
\newtheorem{Def}[Th]{Definition}
\newtheorem{Rem}[Th]{Remark}
\newtheorem{Ex}[Th]{Example}
\newtheorem{?}[Th]{Problem}
\numberwithin{equation}{section}
\let\frontmatter\relax
\def\mainmatter{\def\baselinestretch{1}\normalfont}
\renewcommand{\section}{\@startsection
{section}
{1}
{\z@}
{-\baselineskip}
{0.8\baselineskip}
{\centering\scshape\large}} 
\renewcommand{\subsection}{\@startsection
{subsection}
{2}
{\z@}
{-0.8\baselineskip}
{0.5\baselineskip}
{\normalfont \bf \normalsize}} 
\renewcommand{\subsubsection}{\@startsection
{subsubsection}
{3}
{\z@}
{-0.8\baselineskip}
{0.5\baselineskip}
{\normalfont \it \normalsize}} 
\newcommand{\e}{\boldsymbol{e}}
\definecolor{darkgreen}{rgb}{0.1, 0.8, 0.1}
\def\l@subsection{\@tocline{2}{0pt}{2.5pc}{5pc}{}}
\renewcommand{\tocsection}[3]{%
  \indentlabel{\@ifnotempty{#2}{\bfseries\ignorespaces#1 #2.~}}\bfseries#3}
\renewcommand{\tocsubsection}[3]{%
  \indentlabel{\@ifnotempty{#2}{\ignorespaces#1 #2.~}}#3}
\begin{document}
\frontmatter

\title{Airy ideals, transvections and $\mathcal{W}(\mathfrak{sp}_{2N})$-algebras}

\author{Vincent Bouchard}
\author{Thomas Creutzig}
\author{Aniket Joshi}



\begin{abstract} 
In the first part of the paper we propose a different viewpoint on the theory of higher Airy structures (or Airy ideals) which may shed light on its origin. We define Airy ideals in the $\hbar$-adic completion of the Rees Weyl algebra, and show that Airy ideals are defined exactly such that they are always related to the canonical left ideal generated by derivatives by automorphisms of the Rees Weyl algebra of a simple type, which we call transvections. The standard existence and uniqueness result in the theory of Airy structures then follows immediately.

In the second part of the paper we construct Airy ideals generated by the non-negative modes of the strong generators of the  principal $\mathcal W$-algebra of $\mathfrak{sp}_{2N}$ at level $-N-1/2$, following the approach developed in \cite{Airy}. This provides an example of an Airy ideal in the Heisenberg algebra that requires realizing the zero modes as derivatives instead of variables, which leads to an interesting interpretation for the resulting partition function.
\end{abstract}

\maketitle
\tableofcontents
\mainmatter

\section{Introduction}

Airy structures were introduced in \cite{ks} (see also \cite{ABCD}) as an algebraic reformulation of the topological recursion of Chekhov-Eynard-Orantin \cite{CE06,EO,EORev}. They were then generalized to higher Airy structures in \cite{Airy}, paralleling the construction of the higher topological recursion from \cite{BE2,BE,BHLMR}. The key result in the theory of Airy structures is the existence and uniqueness of a partition function annihilated by the differential operators that form the Airy structure. This partition function often plays the role of a generating series for interesting enumerative invariants, such as Hurwitz numbers or Gromov-Witten invariants, and the existence of an Airy structure implies that the generating series is uniquely fixed by a set of differential constraints, such as Virasoro or $\mathcal{W}$-constraints.

\subsection{A different viewpoint on Airy structures}

In this paper we first propose a different point of view on the definition of higher Airy structures. To be precise, according to the literature on the subject we consider in this paper ``quantum higher crosscapped Airy structures''. For simplicity, from now on we will call such objects simply Airy structures, or more precisely Airy ideals, as they really are left ideals.

Airy ideals are usually understood in terms of a collection of $\hbar$-dependent differential operators in the Weyl algebra -- which may be in terms of a finite or countably infinite number of variables -- that satisfy a certain number of properties. While the proof of these properties, such as existence and uniqueness of the partition function, is fairly straightforward, the definition of Airy ideals may seem to come out of nowhere. Our aim in the first part of this paper is to shed light on where the definition of Airy ideals comes from. As we will see, it is intimately connected to the existence of certain automorphisms of the Rees Weyl algebra which we call transvections.

The starting point of the story is the standard Rees construction, which turns filtered algebras and modules into graded algebras and modules by introducing a parameter $\hbar$. We apply this construction to the Weyl algebra  (in a finite or countably infinite number of variables) and its polynomial left module. We further construct the $\hbar$-adic completions of the Rees Weyl algebras and modules. This provides a clean and natural interpretation of the introduction of an $\hbar$ parameter, which plays a crucial role in the theory of Airy ideals. It avoids the need to deal with formal power series in the variables; however, it is absolutely key that we consider the $\hbar$-adic completion of the algebra and its modules (i.e., that we consider formal power series in $\hbar$).

Airy ideals will live in the completed Rees Weyl algebra (they are $\hbar$-dependent differential operators). But before we define Airy ideals, we consider a simple left ideal, which plays a fundamental role: the canonical left ideal $\mathcal{I}_{\text{can}}$ in the completed Rees Weyl algebra generated by the derivatives $\hbar \partial_a$. It is clear that the quotient of the completed Rees Weyl algebra by this left ideal is isomorphic, as a left module, to the completed Rees polynomial module, since each equivalence class clearly contains a polynomial representative. As we will see, the canonical left ideal $\mathcal{I}_{\text{can}}$ plays a key role, because Airy ideals will be defined exactly such that they are always the image of the canonical left ideal under a certain type of automorphisms of the completed Rees Weyl algebra.

To understand this statetement, we introduce a class of automorphisms of the completed Rees Weyl algebra that we call ``transvections''.  Those automorphism transform the derivatives $\hbar \partial_a$, but keep the variables $\hbar x_a$ and $\hbar$ invariant. A transvection maps the canonical left ideal $\mathcal{I}_{\text{can}}$ into a left ideal $\mathcal{I}$ generated by a complete set of commuting first-order differential operators $\bar H_a$. Since it is an automorphism, it follows that the quotient of the completed Rees Weyl algebra by $\mathcal{I}$ is canonically isomorphic as a left module to the completed Rees polynomial module twisted by the automorphism $\phi$. We can reformulate this in terms of the standard (untwisted) action of differential operators using what we call ``modules of exponential type''. As a result, we find a unique exponential solution to the differential equations $\mathcal{I} \cdot Z = 0$ after imposing a suitable initial condition. 

All this structure provides the foundations to understand the definition of Airy ideals. Airy ideals (usually called Airy structures in the literature) are left ideals in the completed Rees Weyl algebra generated by some operators $H_a$ (not necessarily first-order) satisfying a set of conditions. We show that given an Airy ideal $\mathcal{I}$, there always exists a stable transvection $\phi$ such that $\mathcal{I}$ is isomorphic to $\phi(\mathcal{I}_{\text{can}})$. This is rather striking; it is the key result in the theory of Airy ideals. One could say that the definition of Airy ideals is precisely such that this result holds. It implies that the quotient of the completed Rees Weyl algebra by an Airy ideal $\mathcal{I}$ is canonically isomorphic as a left module to the completed Rees polynomial module twisted by $\phi$. In terms of the standard (untwisted) action of differential operators, the quotient is isomorphic to a module of exponential type, which in turn implies the existence of a unique exponential solution $Z$ to the equations $\mathcal{I} \cdot Z = 0$ after imposing a suitable initial condition. This is the fundamental existence and uniqueness theorem first proved by Kontsevich and Soibelman in \cite{ks}. It appears here as a byproduct of the fact that Airy ideals are always related to the canonical left ideal $\mathcal{I}_{\text{can}}$ generated by derivatives by a transvection. Neat!

With this viewpoint on Airy ideals, we further explore the connection with the Heisenberg algebra and the free boson VOA, which is where many examples of Airy ideals are found. Indeed, the Heisenberg algebra is intimately connected to the Weyl algebra -- it is almost the same, but one needs to be careful with the existence of zero modes in the Heisenberg algebra, which have no counterparts \emph{a priori} in the Weyl algebra. We investigate this carefully, and show that we can construct Airy ideals in the Heisenberg algebra in two different ways: by either considering the zero modes to act as variables or as derivatives. So far in the literature all examples of Airy ideals constructed within the Heisenberg algebra (such as Airy ideals for $\mathcal{W}$-algebras in \cite{Airy,Whittaker,bks,BM,Milanov}) have used the first scenario, where the zero modes are understood as variables (or often set to zero). The example that we investigate in the remainder of the paper uses the second scenario, where the zero modes act as derivatives. This leads to an interesting new interpretation for the partition function, as it does not live in the Fock module of the Heisenberg algebra anymore -- it involves conjugate modes to the zero modes of the Heisenberg algebra and thus is an element of an infinite length indecomposable module of the Heisenberg algebra. 

\subsection{Airy ideals for $\mathcal{W}(\mathfrak{sp}_{2N})$}

In the rest of the paper we propose a new class of Airy ideals constructed as modules of $\mathcal{W}$-algebras, in the spirit of \cite{Airy}. We study the principal $\mathcal{W}(\mathfrak{sp}_{2N})$-algebra at level $-N-1/2$. It is naturally realized within the symplectic fermion algebra of rank $N$. Using the boson-fermion correspondence, we show that it can also be obtained within the rank $N$ free boson VOA. 

We then construct a twisted module for the rank $N$ free boson VOA, which restricts to a (untwisted) module for the $\mathcal{W}(\mathfrak{sp}_{2N})$-algebra. Our goal is to show that the non-negative modes of the strong generators of $\mathcal{W}(\mathfrak{sp}_{2N})$ in this module generate an Airy ideal. This is not directly the case, but we show that we can introduce an automorphism (known as ``dilaton shift'') on the algebra of bosonic modes such that the image of the non-negative modes of the strong generators of $\mathcal{W}(\mathfrak{sp}_{2N})$ under this automorphism generate an Airy ideal.

The upshot is that all results in the theory of Airy ideals then apply to $\mathcal{W}(\mathfrak{sp}_{2N})$. We get that the quotient of the completed Rees algebra of bosonic modes by the Airy ideal is canonically isomorphic to the $\hbar$-adically completed Fock module for the rank $N$ free boson VOA, but twisted by a transvection. Similarly, this quotient is also isomorphic to a module of exponential type generated by a particular state (the partition function associated to the Airy ideal), which we can construct recursively order by order in $\hbar$. What is interesting is that this state does not live in the Fock module for the free boson VOA, since it involves the modes conjugate to the zero modes. It lives inside a huge extension, see section \ref{s:heisenberg}. This is a key consequence of the crucial fact that we need to interpret zero modes as derivatives in this particular case.

 Finally, by construction the state (the partition function) is annihilated by the non-negative modes of the strong generators of $\mathcal{W}(\mathfrak{sp}_{2N})$. Therefore, the action of the negative modes generates a ($\hbar$-adically completed) Fock module for $\mathcal{W}(\mathfrak{sp}_{2N})$. Its existence and uniqueness is guaranteed by the theory of Airy ideals -- in fact we can construct it explicitly order by order in $\hbar$.
 
 \subsection{Outline}
 
 In section 2 we develop our new point of view on Airy ideals. We review preliminaries on filtered algebras and modules in Section 2.1. We apply the Rees construction to the Weyl algebra in Section 2.2, and to its polynomial module in Section 2.3. We explore related twisted modules, and introduce modules of ``exponential type'' in Section 2.3. In Section 2.4 we study a special class of automorphisms for the Rees Weyl algebra, which we call transvections. They will play a key role in the following. We introduce the notion of Airy ideals in Section 2.5 (Definition \ref{d:airy}). The key result, which states that Airy ideals are always related to the canonical left ideal generated by derivatives via a stable transvection, is proved in Theorem \ref{t:airy}. Its main corollary, which is the existence and uniqueness of a partition function, is explored in Corollary \ref{c:pf}.
 In Section 2.6 we explain how we can construct Airy ideals concretely (see Lemma \ref{l:aairy}). 
 
 We study two special cases of the construction in Section 2.7 -- when the differential operators either do not depend on all variables, or do not involve all derivatives. These special cases are naturally realized when we connect the Weyl algebra to the Heisenberg algebra, via a choice of interpretation of the zero modes; this is explored in Section 2.8. 
 
 Sections 3 and 4 are devoted to the construction of Airy ideals for the $\mathcal{W}(\mathfrak{sp}_{2N})$-algebras. We review basic properties of $\mathcal{W}(\mathfrak{sp}_{2N})$ in Section 3, and show how we can write down explicit formulae for the strong generators of $\mathcal{W}(\mathfrak{sp}_{2N})$ in a twisted module for the rank $N$ free boson VOA. In Section 4 we show that there exists an automorphism (a dilaton shift) of the algebra of modes such that the image of the non-negative modes of the strong generators of $\mathcal{W}(\mathfrak{sp}_{2N})$ under this automorphism generate an Airy ideal -- this is the key result of that section, which is Theorem \ref{t:3} for $N=3$, and Theorem \ref{t:N} for general $N$.

\noindent {\bf{Acknowledgements}}
 
 We would like to thank Raphaël Belliard and Reinier Kramer for interesting discussions. The authors acknowledge the support of the Natural Sciences and Engineering Research Council of Canada.

\section{Airy ideals}

In this section we propose a different point of view on the definition of Airy structures, as explained in the Introduction.

\subsection{Preliminaries}

\subsubsection{Cyclic modules and twisted modules}

We first review basic concepts in the theory of modules that will be needed later on.  Let $\mathcal{D}$ be an associative algebra over  $\mathbb{C}$,\footnote{We could work over any field $\mathbb{K}$ of characteristic zero instead of $\mathbb{C}$.} and $\mathcal{M}$ a left $\mathcal{D}$-module. We write $r \cdot m \in \mathcal{M}$ for the action of $r \in \mathcal{D}$ on $m \in \mathcal{M}$.

\begin{Def}
The \emph{annihilator} of an element $v \in \mathcal{M}$, which is denoted by $\text{Ann}_{\mathcal{D}}(v)$, is defined as
\begin{equation}
\text{Ann}_{\mathcal{D}}(v) = \{ P \in \mathcal{D}\ |\ P \cdot v = 0 \}.
\end{equation}
It is naturally a left ideal in $\mathcal{D}$.
\end{Def}

\begin{Def}
A left $\mathcal{D}$-module $\mathcal{M}$ is \emph{cyclic} if it is generated by a single element $v \in \mathcal{M}$.
\end{Def}

It is easy to show that a cyclic left $\mathcal{D}$-module $\mathcal{M}$ generated by $v \in \mathcal{M}$ is canonically isomorphic to $\mathcal{D} /\text{Ann}_{\mathcal{D}}(v) $.

We will also need the notion of a twisted module with respect to an automorphism $\phi: \mathcal{D} \to \mathcal{D}$.

\begin{Def}
Let $\phi: \mathcal{D} \to \mathcal{D}$ be an automorphism, and $\mathcal{M}$ a left $\mathcal{D}$-module. 
The \emph{twisted module} $\tensor[^\phi]{\mathcal{M}}{}$ is given by the same vector space as $\mathcal{M}$, but with the new operation
\begin{equation}
r \cdot_\phi m = \phi^{-1}(r) \cdot m.
\end{equation}
\end{Def}

It is easy to show that, if $\mathcal{M}$ is a cyclic left $\mathcal{D}$-module generated by $v \in \mathcal{M}$, then the twisted module $\prescript{\phi}{}{\mathcal{M}}$ is also cyclic and generated by $v$. Furthermore, the annihilator of $v \in  \prescript{\phi}{}{\mathcal{M}}$, which we denote by $\prescript{\phi}{}{\text{Ann}}_{\mathcal{D}}(v)$ to avoid confusion with the annihilator $\text{Ann}_{\mathcal{D}}(v)$ of $v$ in $\mathcal{M}$, is:
\begin{equation}
\prescript{\phi}{}{\text{Ann}}_{\mathcal{D}}(v) = \phi\left( \text{Ann}_{\mathcal{D}}(v) \right).
\end{equation}
It then follows that $\prescript{\phi}{}{\mathcal{M}}$ is canonically isomorphic to $\mathcal{D} /  \phi\left( \text{Ann}_{\mathcal{D}}(v) \right)$.

\subsubsection{Filtrations, Rees algebras and Rees modules}

We now review the Rees construction for filtered algebras and modules. We write $\mathbb{N}$ for the set of non-negative integers, and $\mathbb{N}^*$ for the set of positive integers. 

\begin{Def}An \emph{exhaustive ascending filtration} on $\mathcal{D}$ is an increasing sequence of subspaces $F_i \mathcal{D} \subseteq \mathcal{D}$, for $i \in \mathbb{N}$:
\begin{equation}
\{0\} \subseteq F_0 \mathcal{D} \subseteq F_1 \mathcal{D}\subseteq F_2 \mathcal{D} \subseteq \ldots \subseteq \mathcal{D},
\end{equation}
such that $ \cup_{i \in \mathbb{N}} F_i \mathcal{D} = \mathcal{D}$ and $F_i \mathcal{D} \cdot F_j \mathcal{D} \subseteq F_{i+j} \mathcal{D}$ for all $i,j \in \mathbb{N}$.
 An algebra $\mathcal{D}$ with such a filtration is called a \emph{filtered algebra}. 
 \end{Def}
 
 Filtered modules are defined in a similar way.
 
 \begin{Def}
  Let $\mathcal{M}$ be a left $\mathcal{D}$-module. An \emph{exhaustive ascending filtration} on $\mathcal{M}$ is given by an increasing sequence of subspace $F_i\mathcal{M} \subset \mathcal{M}$ for $i \in \mathbb{N}$:
  \begin{equation}
\{0\} \subseteq F_0 \mathcal{M} \subseteq F_1 \mathcal{M} \subseteq F_2 \mathcal{M} \subseteq \ldots \subseteq \mathcal{M},
\end{equation}
such that $ \cup_{i \in \mathbb{N}} F_i \mathcal{M} = \mathcal{M}$ and $F_i \mathcal{D} \cdot F_j \mathcal{M} \subseteq F_{i+j} \mathcal{M}$ for all $i,j \in \mathbb{N}$.
 A left $\mathcal{D}$-module $\mathcal{M}$ with such a filtration is called a \emph{filtered module}.
 \end{Def}
 
 From a filtered algebra, we can construct a graded algebra in a natural way: this is the Rees construction. Note that this construction is different from the standard associated graded algebra $\text{Gr}(\mathcal{D}) = \bigoplus_{n=1}^\infty \mathcal{G}_n$ with $\mathcal{G}_n = F_n \mathcal{D}/F_{n-1} \mathcal{D}$. 
 
 \begin{Def}
 Given a filtered algebra $\mathcal{D}$, we define the \emph{Rees algebra} $\mathcal{D}^\hbar$ as:
 \begin{equation}
 \mathcal{D}^\hbar = \bigoplus_{n \in \mathbb{N}}   \hbar^n F_n \mathcal{D}.
 \end{equation}
 It is a graded algebra, graded by $\hbar$ (with $\deg(\hbar) = 1$). When needed, we write $\mathcal{D}_n^\hbar := \hbar^n F_n \mathcal{D}$ for the subspace of homomegeneous elements of degree $n$.
 \end{Def}
 
 It will be very important for us to consider not only Rees algebras, but also their completions with respect to the $\hbar$-adic topology.
 
 \begin{Def}\label{d:hadiccompletion}
 We define the \emph{completed Rees algebra} $\widehat{\mathcal{D}}^\hbar$ as
 \begin{equation}
 \widehat{\mathcal{D}}^\hbar = \prod_{n \in \mathbb{N}} \hbar^n F_n \mathcal{D},
 \end{equation}
 which is the completion with respect to the $\hbar$-adic topology. Explicitly, an element $P \in  \widehat{\mathcal{D}}^\hbar$ can be written as a formal power series in $\hbar$:
 \begin{equation}
 P = \sum_{n=0}^\infty \hbar^n P_n,
 \end{equation}
 for some $P_n \in F_n \mathcal{D}$.
 \end{Def}
 
 The same Rees construction can be applied to filtered modules.
 
 \begin{Def}\label{d:modules}
 Given a filtered $\mathcal{D}$-module $\mathcal{M}$, we define the \emph{Rees module} $\mathcal{M}^\hbar$ as
 \begin{equation}
 \mathcal{M}^\hbar = \bigoplus_{n \in \mathbb{N}} \hbar^n F_n \mathcal{M}.
  \end{equation}
  It is a graded left $\mathcal{D}^\hbar$-module,  and we write $\mathcal{M}_n^\hbar = \hbar^n F_n \mathcal{M}$ for the subspace of homogeneous elements.  We define the completed Rees module $\widehat{\mathcal{M}}^\hbar$ as
 \begin{equation}
 \widehat{\mathcal{M}}^\hbar = \prod_{n \in \mathbb{N}} \hbar^n F_n \mathcal{M},
 \end{equation}
 which is the completion with respect to the $\hbar$-adic topology. 
  \end{Def}

 \subsection{Rees Weyl algebra}
 
 We now apply the Rees construction to the Weyl algebra (in either a finite or countably infinite number of variables).
 
 Let $A$ be an index subset, either finite or countably infinite. We write $x_A = \{x_a\}_{a \in A}$ for the set of variables $x_a$ with $a \in A$, and $\partial_A =\{\partial_a\}_{a \in A} $ for the set of partial derivatives $\partial_a$ with respect to the variables $x_a$.
 
 \begin{Def}
 If $A$ is a finite index set, we define the \emph{Weyl algebra} $\mathcal{D}_A = \mathbb{C}[x_A]\langle\partial_A\rangle$ to be the algebra of differential operators over the polynomial ring $\mathbb{C}[x_A]$ in the variables $x_A$. $\mathcal{D}_A$ is the free associative algebra over $\mathbb{C}$ generated by $\{x_A, \partial_A \}$ modulo the commutation relations
 \begin{equation}
[ x_a, x_b] = 0, \qquad [ \partial_a, \partial_b] = 0, \qquad [ \partial_a, x_b] = \delta_{a b}, \qquad \forall a,b \in A.
 \end{equation}
 \end{Def}

In the case where $A$ is countably infinite, we define $\mathcal{D}_A$ to be a particular completion of the Weyl algebra  $ \mathbb{C}[x_A]\langle\partial_A\rangle$.

\begin{Def}
\label{d:infinite}
If $A$ is a countably infinite index set, we define the \emph{completed Weyl algebra} $\mathcal{D}_A$ to be the completion of the Weyl algebra $\mathbb{C}[x_A]\langle\partial_A\rangle$ that contains potentially infinite sums in the derivatives, but with polynomial coefficients.
 (Elements of $\mathcal{D}_A$ remain of finite order as differential operators.)\footnote{This completion should not be confused with the $\hbar$-adic completion of Rees algebras and modules discussed in the previous section.}
\end{Def}

In other words, we  can write an element $P \in \mathcal{D}_A$ uniquely as
\begin{equation}\label{eq:P}
P = \sum_{m=0}^M  \sum_{a_1, \ldots, a_m \in A} p_{a_1 \cdots a_m}(x_A) \partial_{a_1} \cdots \partial_{a_m},
\end{equation}
for some $M \in \mathbb{N}$,
where the $p_{a_1 \cdots a_m}(x_A)$ are polynomials in the variables $x_A$. If $A$ is a countably infinite index set, we see that the sums over the indices $a_i$ can be infinite, but the coefficients are always polynomial (they cannot include infinite sums of monomials). For example, what this means is that an operator like
$\sum_{a \in A} \partial_a$ is in $\mathcal{D}_A$, while $\sum_{a \in A} x_a$ is not.

There is a natural exhaustive ascending filtration on $\mathcal{D}_A$ called the Bernstein filtration. To construct it, we give degree one to the variables $x_a$ and the partial derivatives $\partial_a$, and define the subspaces $F_i \mathcal{D}_A$ as containing all operators in $\mathcal{D}_A$ of degree $\leq i$. More precisely:

\begin{Def}
\label{d:bernstein}
The \emph{Bernstein filtration} on $\mathcal{D}_A$ is defined by
 \begin{equation}
 F_i \mathcal{D}_A = \left \{ \sum_{\substack{m, k \in \mathbb{N} \\ m+k = i}} \sum_{a_1, \ldots, a_m \in A} p_{a_1 \cdots a_m}^{(k)}(x_A) \partial_{a_1} \cdots \partial_{a_m} \right \},
\end{equation}
where the $ p_{a_1 \cdots a_m}^{(k)}(x_A)$ are polynomials of degree $\leq k$.
 Here, $F_0 \mathcal{D}_A = \mathbb{C}$. 
 \end{Def}
 
 From the definition of  $\mathcal{D}_A$ and its Bernstein filtration, it is clear that
 \begin{equation}\label{eq:filtr}
 [ F_m \mathcal{D}_A, F_n \mathcal{D}_A ] \subseteq F_{m+n-2} \mathcal{D}_A.
 \end{equation}
 
 As in the previous section, we construct the Rees algebra associated to the filtered algebra $\mathcal{D}_A$ with the Bernstein filtration.
 \begin{Def}
The \emph{Rees Weyl algebra} $\mathcal{D}_A^{\hbar}$ associated to $\mathcal{D}_A$ with the Bernstein filtration is
 \begin{equation}
 \mathcal{D}_A^{\hbar} = \bigoplus_{n \in \mathbb{N}} \hbar^n F_n \mathcal{D}_A,
 \end{equation}
 which is a graded algebra, graded by $\hbar$ with $\deg(\hbar) = 1$. We write $\mathcal{D}_{A,n}^{\hbar} =  \hbar^n F_n \mathcal{D}_A$ for homogeneous elements of degree $n$.  We define its $\hbar$-adic completion, as in Definition \ref{d:hadiccompletion}:
  \begin{equation}
 \widehat{\mathcal{D}}_A^{\hbar} = \prod_{n \in \mathbb{N}} \hbar^n F_n \mathcal{D}_A,
 \end{equation}
  From \eqref{eq:filtr} we get that
 \begin{equation}\label{eq:hbar2}
 [\widehat{\mathcal{D}}_{A,m}^{\hbar}, \widehat{\mathcal{D}}_{A,n}^{\hbar}] \subseteq \hbar^2 \widehat{\mathcal{D}}_{A,m+n-2}^{\hbar}.
 \end{equation}
 \end{Def}

 \begin{Rem}\label{r:reesfree}
 The Rees Weyl algebra $ \mathcal{D}_A^{\hbar}  \subset  \widehat{\mathcal{D}}_A^{\hbar} $ is the subalgebra consisting of differential operators that are polynomials in $\hbar$.
 Note that we can also think of the Rees Weyl algebra
 $\mathcal{D}_A^\hbar$ as the free associative algebra over $\mathbb{C}$ generated by $\{\hbar, \hbar x_A,  \hbar \partial_A \}$, where $\hbar$ is a central element, and the other generators satisfy the commutation relations
 \begin{equation}
[ \hbar x_a,\hbar x_b] = 0, \qquad [\hbar \partial_a, \hbar \partial_b] = 0, \qquad [ \hbar \partial_a, \hbar x_b] = \hbar^2 \delta_{a b}, \qquad \forall a,b \in A.
 \end{equation}
 \end{Rem}
 
\begin{Ex}
To clarify the notation, an operator $P \in \widehat{\mathcal{D}}_A^{\hbar} $ can be written as
\begin{equation}
P = \sum_{n \in \mathbb{N}} \hbar^n \sum_{\substack{m, k \in \mathbb{N} \\ m+k = n}} \sum_{a_1, \ldots, a_m \in A} p_{a_1 \cdots a_m}^{(n,k)}(x_A) \partial_{a_1} \cdots \partial_{a_m},
\end{equation}
where the $ p_{a_1 \cdots a_m}^{(n,k)}(x_A)$ are polynomials of degree $\leq k$.
\end{Ex}

Because of the subtelties arising due to infinite sums when $A$ is a countably infinite index set, we need to define a particular property for collections of operators in $\widehat{\mathcal{D}}_A^{\hbar} $, which we call boundedness (this condition is called ``filtered family of operators'' in \cite{bks} -- see Section 2.1.2).

\begin{Def}\label{d:bounded}
Let $I$ be a finite or countably infinite index set, and $\{P_i\}_{i \in I}$ be a collection of operators $P_i \in \widehat{\mathcal{D}}_A^{\hbar} $ of the form
\begin{equation}
P_i = \sum_{n \in \mathbb{N}} \hbar^n \sum_{\substack{m, k \in \mathbb{N} \\ m+k = n}} \sum_{a_1, \ldots, a_m \in A} p_{i;a_1 \cdots a_m}^{(n,k)}(x_A) \partial_{a_1} \cdots \partial_{a_m},
\end{equation}
We say that the collection of operators $\{P_i\}_{i \in I}$ is \emph{bounded} if, for all fixed choice of indices $a_1, \ldots, a_m$, $n$, and $k$, the polynomials $p_{i;a_1 \cdots a_m}^{(n,k)}(x_A)$ vanish for all but finitely many indices $i \in I$. We note that the condition is trivially satisfied if $I$ is a finite index set.
\end{Def}

There is a fundamental reason why we consider bounded collection of differential operators. In the following we will study left ideals $\mathcal{I}$ in $\widehat{\mathcal{D}}_A^{\hbar} $ consisting of all $\widehat{\mathcal{D}}_A^{\hbar} $-linear combinations of a collection of operators $\{P_i\}_{i \in I}$; that is, any $Q \in \mathcal{I}$ can be written as
\begin{equation}
Q  = \sum_{i \in I} c_i P_i
\end{equation}
for some $c_i \in \widehat{\mathcal{D}}_A^{\hbar} $ . If $I$ is a finite index set, this is the left ideal generated by the collection of operators $\{P_i\}_{i \in I}$. However, if $I$ is a countably infinite index set, we will want our ideal $\mathcal{I}$ to contain not only finite  $\widehat{\mathcal{D}}_A^{\hbar} $-linear combinations of the $P_i$, but also infinite ones.\footnote{We will often abuse notation and still say that this ideal is ``generated'' by the $P_i$, even though an ideal generated by a set only contains finite linear combination of the elements in the set, regardless of whether the set is finite or countably infinite. For us, we always include both finite and infinite linear combinations when the generating set is countably infinite.} But if $\{P_i\}_{i \in I}$ is an arbitrary collection of operators, infinite $\widehat{\mathcal{D}}_A^{\hbar} $-linear combinations of the $P_i$ may give rise to divergent infinite sums, or to operators whose coefficients are not polynomials in the variables $x_A$ (they may contain infinite sums of monomials). However, if the collection $\{P_i\}_{i \in I}$ is bounded, this cannot happen; in this case, it is straightforward to show that infinite $\widehat{\mathcal{D}}_A^{\hbar} $-linear combinations of the $P_i$ are always well defined operators whose coefficients are polynomials in the variables $x_A$ (since they are finite sums of polynomials), and therefore in $\widehat{\mathcal{D}}_A^{\hbar} $. This is the key reason why we consider bounded collection of operators in $\widehat{\mathcal{D}}_A^{\hbar} $.
 
  \begin{Rem}
 For the readers familiar with the literature on Airy structures, a word of caution is required at this stage. Our $\hbar$ differs from the usual $\hbar$ in the literature on Airy structures. More precisely, as should become clear later, to connect the two approaches, one should start with the traditional definition of Airy structures (for instance in \cite{ks,Airy}), rescale the variables as $x_i \mapsto \hbar^{1/2} x_i$, and then redefine $\hbar \mapsto \hbar^2$. With this transformation, the grading defined in \cite{ks,Airy} becomes the natural $\hbar$-grading on the Rees algebra that we introduce here, with $\deg(\hbar)=1$.
 
 We could also introduce $\hbar$ as in the traditional literature on Airy structures using the Rees construction. What we would need to do then is consider a different filtration on the Weyl algebra, namely the ``order filtration'' instead of the Bernstein filtration, which is defined by 
  \begin{equation}
 F_i \mathcal{D}_A = \left \{ \sum_{m=0}^i \sum_{a_1, \ldots, a_m \in A} p_{a_1 \cdots a_m}(x_A) \partial_{a_1} \cdots \partial_{a_m} \right \},
\end{equation}
where the polynomials $p_{a_1 \cdots a_m}(x_A) $ have arbitrary degree. In other words, $F_i \mathcal{D}_A$ consists of differential operators of order at most $i$, but with polynomial coefficients of arbitrary degree (it corresponds to giving degree one to the partial derivatives $\partial_a$, but degree zero to the variables $x_a$). We could then define the corresponding Rees algebra; the result would be the standard $\hbar$-dependent Weyl algebra considered in the literature on Airy structures.
 
Although the two approaches are ultimately equivalent, we find the introduction of $\hbar$ via the Bernstein filtration instead of the order filtration more natural and transparent, as, among other things, it allows us to work with the Rees polynomial module (i.e. we also introduce $\hbar$ for a $\mathcal{D}_A^\hbar$-module via the Rees construction), and we don't need to deal with formal power series in the variables $x_A$, as will become clear later -- we only need to consider the $\hbar$-adic completions for the Weyl algebra and its polynomial module.
 \end{Rem}

\subsection{Left $\widehat{\mathcal{D}}_A^\hbar$-modules}

\subsubsection{Polynomial $\mathcal{D}_A$-module}

A natural left $\mathcal{D}_A$-module is the polynomial algebra $\mathcal{M}_A = \mathbb{C}[x_A]$, where the action is given by the standard action of differential operators on polynomials. In the case where $A$ is countably infinite, one should be a little bit careful here, since our algebra $\mathcal{D}_A$ is the completion of the Weyl algebra, which includes potentially infinite sums over the derivatives. However, since $\mathcal{M}_A$ is a polynomial algebra, the action of differential operators in $\mathcal{D}_A$ on polynomials always collapses the infinite sums to finite sums, and so the action is well defined.\footnote{For instance, we need to be careful that we don't encounter situations  like the differential operator $\sum_{a \in A} \partial_a$ acting on $\sum_{b \in A} x_b$, since $\sum_{a \in A} \partial_a \left( \sum_{b \in B} x_b \right) = \sum_{a \in A} 1$ which is of course divergent. But, while $\sum_{a \in A} \partial_a$ is in $\mathcal{D}_A$, $\sum_{b \in A} x_b$ is not in $\mathcal{M}_A$ since it is not a polynomial. We never run into this kind of issues because all infinite sums collapse to finite sums, since all elements of $\mathcal{M}_A$ are polynomials, even if we work with infinitely many variables.}

The polynomial module $\mathcal{M}_A$ is a cyclic left $\mathcal{D}_A$-module, generated by $1 \in \mathcal{M}_A$. Moreover, the annihilator of $1$ is
\begin{equation}
\text{Ann}_{\mathcal{D}_A}(1) = \left \{ \sum_{a \in A} c_a \partial_a\ |\ c_a \in \mathcal{D}_A \right \},
\end{equation}
which is the left ideal consisting of $\mathcal{D}_A$-linear combinations of the derivatives. It is clear that $\mathcal{M}_A$ is canonically isomorphic to $\mathcal{D}_A/\text{Ann}_{\mathcal{D}_A}(1)$.

\subsubsection{Rees polynomial $\widehat{\mathcal{D}}_A^\hbar$-module}

Let us now apply the Rees construction to the polynomial $\mathcal{D}_A$-module. We can define many filtrations on the polynomial algebra $\mathcal{M}_A$ that are compatible with the Bernstein filtration on $\mathcal{D}_A$. We will use the following standard filtration.

\begin{Def}
We define the \emph{degree filtration} on $\mathcal{M}_A$ as:
\begin{equation}
F_i \mathcal{M}_A = \{ \text{polynomials of degree $\leq i$} \}.
\end{equation}

\end{Def}

It is easy to check that $\mathcal{M}_A$ with this filtration is a filtered $\mathcal{D}_A$-module. We then apply the Rees construction for filtered modules.

\begin{Def}
We define the Rees polynomial module associated to the degree filtration and its $\hbar$-adic completion:
\begin{equation}\label{eq:reesmodule}
\mathcal{M}_A^{\hbar} = \bigoplus_{n \in \mathbb{N}} \hbar^n F_n \mathcal{M}_A, \qquad \widehat{\mathcal{M}}_A^{\hbar} = \prod_{n \in \mathbb{N}} \hbar^n F_n \mathcal{M}_A .
\end{equation}
Both are graded left $\mathcal{D}_A^\hbar$-modules, and the completed module $\widehat{\mathcal{M}}_A^{\hbar}$ is also a left $\widehat{\mathcal{D}}_A^\hbar$-module.
\end{Def}

\begin{Ex}
To clarify the notation, an element $f \in \widehat{\mathcal{M}}_A^{\hbar}$ is a formal $\hbar$-power series
\begin{equation}
f = \sum_{n=0}^\infty \hbar^n f^{(n)}(x_A)
\end{equation}
where the $f^{(n)}(x_A)$ are polynomials in the variables $x_A$ of degree $\leq n$. 
\end{Ex}

In the following we will mostly be interested in the completed module $\widehat{\mathcal{M}}_A^{\hbar}$, realized as a left $\widehat{\mathcal{D}}_A^\hbar$-module. It is easy to show that it is a cyclic module, generated by $1 \in \widehat{\mathcal{M}}_A^{\hbar}$. The annihilator of $1$ is:
\begin{equation}\label{eq:annih1}
\text{Ann}_{\widehat{\mathcal{D}}^\hbar_A}(1) = \left \{ \sum_{a \in A} c_a \hbar \partial_a\ |\ c_a \in \widehat{\mathcal{D}}^\hbar_A \right \} =: \mathcal{I}_{\text{can}},
\end{equation}
which is the left ideal consisting of $\widehat{\mathcal{D}}^\hbar_A$-linear combinations of the derivatives $\hbar \partial_a$. As we will refer to this canonical left ideal many times in the following, we introduce the shorthand notation $\mathcal{I}_{\text{can}}$. 
 $\widehat{\mathcal{M}}_A^{\hbar}$ is canonically isomorphic to $\widehat{\mathcal{D}}^\hbar_A / \mathcal{I}_{\text{can}}$, which is easy to see.

Since we are working with the Weyl algebra, we can also think of this as solving differential equations. The statement above is that $Z=1 \in \widehat{\mathcal{M}}_A^{\hbar}$ is a solution to the differential equations $\mathcal{I}_{\text{can}} \cdot Z = 0$. That is,  $\hbar \partial_a(1) = 0$ for all $a \in A$, which is obvious. In fact, it is the unique solution to the system $\mathcal{I}_{\text{can}} \cdot Z = 0$ if we impose the initial condition $Z \big|_{x_A=0} = 1$. We call this unique solution $Z=1$ the \emph{partition function} associated to the left ideal $\mathcal{I}_{\text{can}}$.

\subsubsection{$\widehat{\mathcal{D}}^\hbar_A$-modules of exponential type}

As explained above, the completed Rees polynomial $\widehat{\mathcal{D}}^\hbar_A$-module $\widehat{\mathcal{M}}_A^{\hbar}$ is cyclic and generated by $1$. In the following we will consider a more general class of $\widehat{\mathcal{D}}^\hbar_A$-modules, which we call ``of exponential type''.

\begin{Def}\label{d:exp}
Let
\begin{equation}
Z = \exp\left(\sum_{n=0}^\infty \hbar^{n-1} q^{(n+1)}(x_A)  \right)
\end{equation}
for some polynomials $q^{(n+1)}(x_A)$ of degree $\leq n+1$. We define $\widehat{\mathcal{M}}_A^{\hbar}Z$ to be the cyclic left $\widehat{\mathcal{D}}^\hbar_A$-module generated by $Z$, where the action of $\widehat{\mathcal{D}}^\hbar_A$ on $\widehat{\mathcal{M}}_A^{\hbar}Z$ is the standard action of differential operators on polynomials and exponentials of polynomials. It is clear that it is a well defined $\widehat{\mathcal{D}}^\hbar_A$-module, because of the degree condition on the polynomials $q^{(n+1)}$. We call such modules \emph{of exponential type}.\footnote{\label{f:hbar}We note here that the argument of the exponential is not in $\widehat{\mathcal{M}}_A^\hbar$, since the degree of the polynomials is two more than the power of $\hbar$. But this is fine, as after acting with differential operators on $Z$ using the standard action of differential operators, we get polynomials in $\widehat{\mathcal{M}}_A^\hbar$ times $Z$, as stated.}
\end{Def}

Note that given a $\widehat{\mathcal{D}}^\hbar_A$-module of exponential type, we can always uniquely choose the generator $Z$ to satisfy the property that $Z \big|_{x_A=0} = 1$, i.e. $q^{(n+1)}(0) = 0$.

\subsection{Transvections, twisted $\widehat{\mathcal{D}}^\hbar_A$-modules, and $\widehat{\mathcal{D}}^\hbar_A$-modules of exponential type}

\subsubsection{Transvections}

\label{s:transvection}

We now define an important class of automorphisms of $\widehat{\mathcal{D}}^\hbar_A$, which we call ``transvections''.\footnote{The name ``transvection'' comes from Section 4 of \cite{bkk}, suitably generalized to completed Rees Weyl algebras.} Those will play a key role in the story of Airy ideals.

\begin{Def}\label{d:transvection}
Define the map $\phi$ that acts on $\widehat{\mathcal{D}}^\hbar_A$ as $\phi: (\hbar,  \hbar x_a, \hbar \partial_a) \mapsto (\hbar, \hbar x_a, \bar H_a)$, for all $ a\in A$,  with
\begin{equation}\label{eq:formbarH}
\bar H_a = \hbar \partial_a + \sum_{n=0}^\infty \hbar^n \partial_a q^{(n+1)}(x_A)
\end{equation}
for some polynomials $q^{(n+1)}(x_A)$ of degree $\leq n+1$. We call $\phi$ a \emph{transvection}. We say that it is  \emph{stable} if $q^{(1)} = q^{(2)} = 0$.
\end{Def}

\begin{Rem}
We note here that $ \partial_a q^{(n+1)}(x_A)$ in \eqref{eq:formbarH} means the derivative of the polynomial $q^{(n+1)}(x_A)$ with respect to the variable $x_a$, not the product of $\partial_a$ and $q^{(n+1)}(x_A)$ in the Weyl algebra. Equivalently, we could write $\bar H_a$ as
\begin{equation}
\bar H_a = \hbar \partial_a + \sum_{n=0}^\infty \hbar^n [\partial_a, q^{(n+1)}(x_A)],
\end{equation}
where on the right-hand-side we now mean the commutator with respect to the product in the Weyl algebra.
\end{Rem}

\begin{Lem}
$\phi$ is an automorphism of $\widehat{\mathcal{D}}^\hbar_A$.
\end{Lem}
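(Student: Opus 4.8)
The plan is to exhibit $\phi$ as a composition of an algebra endomorphism that is visibly well defined on the free algebra presentation of $\widehat{\mathcal{D}}^\hbar_A$ together with an explicit two-sided inverse, so that it is automatically a bijection. First I would invoke the presentation of Remark \ref{r:reesfree}: $\mathcal{D}_A^\hbar$ is the free associative $\mathbb{C}$-algebra on the generators $\{\hbar,\hbar x_a,\hbar\partial_a\}$ modulo the relations $[\hbar x_a,\hbar x_b]=0$, $[\hbar\partial_a,\hbar\partial_b]=0$, $[\hbar\partial_a,\hbar x_b]=\hbar^2\delta_{ab}$, with $\hbar$ central, and this passes to the $\hbar$-adic completion $\widehat{\mathcal{D}}^\hbar_A$. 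To show $\phi$ extends to a well-defined algebra homomorphism it suffices to check that the prescribed images $(\hbar,\hbar x_a,\bar H_a)$ satisfy the same relations. The only nontrivial checks are $[\bar H_a,\bar H_b]=0$ and $[\bar H_a,\hbar x_b]=\hbar^2\delta_{ab}$; both follow from the observation that $\bar H_a-\hbar\partial_a=\sum_n\hbar^n[\partial_a,q^{(n+1)}]$ is a multiplication operator (a power series in $\hbar$ whose coefficients are polynomials in the $x_A$ only), hence commutes with every $\hbar x_b$, giving the second relation immediately; for the first, $[\bar H_a,\bar H_b]=[\hbar\partial_a,\partial_b Q]+[\partial_a Q,\hbar\partial_b]=\hbar\,\partial_a\partial_b Q-\hbar\,\partial_b\partial_a Q=0$ where $Q=\sum_n\hbar^{n-1}q^{(n+1)}(x_A)$, using that partial derivatives of polynomials commute. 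One must also confirm the degree bookkeeping: since $q^{(n+1)}$ has degree $\le n+1$, the term $\hbar^n\partial_a q^{(n+1)}$ has degree $\le n$ as a Weyl-algebra element and lies in $\hbar^n F_n\mathcal{D}_A$, so $\bar H_a\in\widehat{\mathcal{D}}^\hbar_{A,1}$ and $\phi$ indeed lands in $\widehat{\mathcal{D}}^\hbar_A$ (and respects the $\hbar$-adic topology, so it extends continuously to the completion).

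Next I would write down the candidate inverse $\psi$, namely the map fixing $\hbar$ and $\hbar x_a$ and sending $\hbar\partial_a\mapsto \hbar\partial_a-\sum_n\hbar^n\partial_a q^{(n+1)}(x_A)=\hbar\partial_a-[\, \cdot\,]$ with the same multiplication-operator correction but opposite sign. By the identical relation-checking argument, $\psi$ is a well-defined algebra endomorphism of $\widehat{\mathcal{D}}^\hbar_A$. Then I would verify $\psi\circ\phi=\mathrm{id}$ and $\phi\circ\psi=\mathrm{id}$ on generators: on $\hbar$ and $\hbar x_a$ both compositions are clearly the identity, and on $\hbar\partial_a$ one computes $\psi(\bar H_a)=\psi(\hbar\partial_a)+\sum_n\hbar^n\partial_a q^{(n+1)} = \hbar\partial_a-\sum_n\hbar^n\partial_a q^{(n+1)}+\sum_n\hbar^n\partial_a q^{(n+1)}=\hbar\partial_a$, using crucially that the correction term is a multiplication operator and so is fixed by $\psi$ (it only involves $\hbar$ and the $\hbar x_a$). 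Since two algebra homomorphisms agreeing on a generating set agree everywhere, $\psi$ is a two-sided inverse, and hence $\phi$ is an automorphism.

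The main obstacle is not any hard calculation but the bookkeeping needed to make the free-algebra argument rigorous in the completed, countably-infinite setting: one must be sure that $\phi$ is well defined on all of $\widehat{\mathcal{D}}^\hbar_A$ and not merely on the uncompleted Rees algebra $\mathcal{D}^\hbar_A$. This requires noting that $\phi$ is filtered (it maps $\widehat{\mathcal{D}}^\hbar_{A,n}$ into $\widehat{\mathcal{D}}^\hbar_{A,n}$, since replacing $\hbar\partial_a$ by $\bar H_a$ preserves $\hbar$-degree as checked above) and therefore $\hbar$-adically continuous, so it extends uniquely from the dense subalgebra $\mathcal{D}^\hbar_A$ to the completion; the same applies to $\psi$, and the identities $\psi\phi=\phi\psi=\mathrm{id}$, holding on the dense subalgebra, extend by continuity. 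When $A$ is countably infinite one should also remark that applying $\phi$ to an element of $\widehat{\mathcal{D}}^\hbar_A$ does not spoil the polynomiality of coefficients: each $q^{(n+1)}$ is a single fixed polynomial, so only finitely many variables appear at each order in $\hbar$ from the correction terms, and no problematic infinite sums of monomials are produced. With these continuity and boundedness remarks in place, the relation checks above complete the proof.
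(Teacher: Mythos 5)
Your proposal is correct and follows essentially the paper's own route: the paper's proof consists precisely of the two commutator checks $[\bar H_a,\hbar x_b]=\hbar^2\delta_{ab}$ and $[\bar H_a,\bar H_b]=0$ on the generators, with well-definedness of $\phi(P)\in\widehat{\mathcal{D}}^\hbar_A$ asserted as clear and bijectivity left implicit (via the conjugation picture), so your explicit inverse transvection $\psi$ with $-q^{(n+1)}$ simply makes that last step precise. One small correction to your bookkeeping: $\bar H_a$ is not $\hbar$-homogeneous, so it lies in $\widehat{\mathcal{D}}^\hbar_A$ rather than in the degree-one component $\widehat{\mathcal{D}}^\hbar_{A,1}$, and accordingly $\phi$ preserves the descending $\hbar$-adic filtration (which is what gives continuity and the extension to the completion), not the $\hbar$-grading itself.
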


\begin{proof}
For any $P \in \widehat{\mathcal{D}}^\hbar_A$, it is clear that $\phi(P) \in \widehat{\mathcal{D}}^\hbar_A$. Furthermore, the map $\phi$ preserves the commutation relations between the generators of the Rees Weyl algebra, since $[\bar H_a, \hbar x_b] = \hbar^2 \delta_{ab}$ and
\begin{equation}
\begin{split}
[\bar H_a, \bar H_b] =& \left[  \hbar \partial_a + \sum_{n=0}^\infty \hbar^n \partial_a q^{(n+1)}(x_A),  \hbar \partial_b + \sum_{n=0}^\infty \hbar^n \partial_b q^{(n+1)}(x_A) \right]\\
=& \sum_{n=0}^\infty \hbar^{n+1} \left( \partial_a \partial_b q^{(n+1)}(x_A)  - \partial_b \partial_a q^{(n+1)}(x_A)  \right)\\
=& 0.
\end{split}
\end{equation}
\end{proof}

We can think of transvections as conjugations. Indeed, for any $P \in \widehat{\mathcal{D}}^\hbar_A$, we can think of $\phi(P)$ as being given by
\begin{equation}
\phi(P) = \exp\left(- \sum_{n=0}^\infty \hbar^{n-1} q^{(n+1)}(x_A) \right) P \exp\left(\sum_{n=0}^\infty \hbar^{n-1} q^{(n+1)}(x_A)  \right),
\end{equation}
where multiplication here is understood as multiplication in the Rees Weyl algebra (after formally expanding the exponentials). Using standard properties of derivatives of exponentials, it is clear that this is equivalent to the map specified above.\footnote{As in footnote \ref{f:hbar}, we note here that the argument of the exponential is not in $\widehat{\mathcal{M}}_A^\hbar$, but this is not a problem.}

\subsubsection{Twisted polynomial $\widehat{\mathcal{D}}^\hbar_A$-modules}

Now consider the Rees polynomial $\widehat{\mathcal{D}}^\hbar_A$-module $\widehat{\mathcal{M}}_A^\hbar$. Given a transvection $\phi: \widehat{\mathcal{D}}^\hbar_A \to \widehat{\mathcal{D}}^\hbar_A$, we can construct a twisted left $\widehat{\mathcal{D}}^\hbar_A$-module $\prescript{\phi}{}{\widehat{\mathcal{M}}}_A^\hbar$. Thinking of the transvection as a conjugation, the action on the twisted module is given by
\begin{align}
P \cdot_{\phi} v =& \phi^{-1}(P) \cdot v  \nonumber\\
=&   \exp\left( \sum_{n=0}^\infty \hbar^{n-1} q^{(n+1)}(x_A) \right) P \exp\left(- \sum_{n=0}^\infty \hbar^{n-1} q^{(n+1)}(x_A)  \right) \cdot v.
\label{eq:actiontv}
\end{align}
Since $\widehat{\mathcal{M}}_A^\hbar$ is a cyclic $\widehat{\mathcal{D}}^\hbar_A$-module generated by $1$, we know that the twisted module $\prescript{\phi}{}{\widehat{\mathcal{M}}}_A^\hbar$ is also cyclic and generated by $1$.

Furthermore, the annihilator of $1$ in the twisted module is 
\begin{equation}
\prescript{\phi}{}{\text{Ann}}_{\widehat{\mathcal{D}}_A^\hbar}(1) = \phi\left( \text{Ann}_{\widehat{\mathcal{D}}_A^\hbar}(1) \right) = \phi( \mathcal{I}_{\text{can}}),
\end{equation}
where we used  \eqref{eq:annih1} for the annihilator of $1$ in the polynomial module. In other words, it is the image of the canonical left ideal generated by the derivatives $\hbar \partial_a$ under the automorphism $\phi$.
From the definition of transvections (Definition \ref{d:transvection}), we obtain that
\begin{equation}
\prescript{\phi}{}{\text{Ann}}_{\widehat{\mathcal{D}}_A^\hbar}(1) =  \phi( \mathcal{I}_{\text{can}} ) =  \left \{ \sum_{a \in A} c_a \bar H_a\ |\ c_a \in \widehat{\mathcal{D}}^\hbar_A \right \},
\end{equation}
with the $\bar H_a$ defined in \eqref{eq:formbarH}. If we denote this ideal by $\mathcal{I}$, we conclude that the twisted module $\prescript{\phi}{}{\widehat{\mathcal{M}}}_A^\hbar$ is canonically isomorphic to $\widehat{\mathcal{D}}_A^\hbar / \mathcal{I}$. 

We can summarize these statements in the following Lemma.

\begin{Lem}\label{l:twisted}
Let $\mathcal{I} \subset \widehat{\mathcal{D}}_A^\hbar$ be the left ideal
$
\mathcal{I}  = \left \{ \sum_{a \in A} c_a \bar H_a\ |\ c_a \in \widehat{\mathcal{D}}^\hbar_A \right \},
$
where
\begin{equation}
\bar H_a = \hbar \partial_a + \sum_{n=0}^\infty \hbar^n \partial_a q^{(n+1)}(x_A)
\end{equation}
for some polynomials $q^{(n+1)}(x_A)$ of degree $\leq n+1$. Then $\widehat{\mathcal{D}}_A^\hbar / \mathcal{I}$ is a cyclic left module canonically isomorphic to the twisted module $\prescript{\phi}{}{\widehat{\mathcal{M}}}_A^\hbar$, where $\phi: \widehat{\mathcal{D}}_A^\hbar  \to \widehat{\mathcal{D}}_A^\hbar $ is the transvection $\phi: (\hbar, \hbar x_a, \hbar \partial_a) \mapsto (\hbar, \hbar x_a, \bar H_a)$.
\end{Lem}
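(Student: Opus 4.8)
The plan is to assemble the statement from three ingredients already in hand: the fact just established that $\phi$ is an automorphism of $\widehat{\mathcal{D}}_A^\hbar$; the identification \eqref{eq:annih1}, namely $\text{Ann}_{\widehat{\mathcal{D}}_A^\hbar}(1) = \mathcal{I}_{\text{can}}$ with $1 \in \widehat{\mathcal{M}}_A^\hbar$ a cyclic generator; and the general facts about twisted modules recorded in Section~2.1.1, i.e.\ that for a cyclic left $\mathcal{D}$-module $\mathcal{M}$ generated by $v$ the twisted module $\prescript{\phi}{}{\mathcal{M}}$ is again cyclic on $v$, with $\prescript{\phi}{}{\text{Ann}}_{\mathcal{D}}(v) = \phi(\text{Ann}_{\mathcal{D}}(v))$, and hence $\prescript{\phi}{}{\mathcal{M}} \cong \mathcal{D}/\phi(\text{Ann}_{\mathcal{D}}(v))$ canonically.

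First I would apply the general statement with $\mathcal{D} = \widehat{\mathcal{D}}_A^\hbar$, $\mathcal{M} = \widehat{\mathcal{M}}_A^\hbar$, and $v = 1$: this immediately gives that $\prescript{\phi}{}{\widehat{\mathcal{M}}}_A^\hbar$ is cyclic, generated by $1$, with annihilator $\phi(\mathcal{I}_{\text{can}})$, and that it is canonically isomorphic to $\widehat{\mathcal{D}}_A^\hbar / \phi(\mathcal{I}_{\text{can}})$ via $P + \phi(\mathcal{I}_{\text{can}}) \mapsto P \cdot_\phi 1 = \phi^{-1}(P) \cdot 1$. The only genuine computation left is then to check $\phi(\mathcal{I}_{\text{can}}) = \mathcal{I}$. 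Since $\phi$ is an automorphism fixing $\hbar$ and sending $\hbar \partial_a \mapsto \bar H_a$, and $\mathcal{I}_{\text{can}} = \{ \sum_{a \in A} c_a\, \hbar \partial_a \mid c_a \in \widehat{\mathcal{D}}_A^\hbar \}$, surjectivity of $\phi$ yields $\phi(\mathcal{I}_{\text{can}}) = \{ \sum_{a \in A} \phi(c_a)\, \bar H_a \mid c_a \in \widehat{\mathcal{D}}_A^\hbar \} = \{ \sum_{a \in A} c_a'\, \bar H_a \mid c_a' \in \widehat{\mathcal{D}}_A^\hbar \} = \mathcal{I}$. Combining, $\widehat{\mathcal{D}}_A^\hbar / \mathcal{I} \cong \prescript{\phi}{}{\widehat{\mathcal{M}}}_A^\hbar$ canonically.

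The step that requires care — and the main (rather mild) obstacle — is the countably infinite case, where both $\mathcal{I}_{\text{can}}$ and $\mathcal{I}$ are understood to contain infinite $\widehat{\mathcal{D}}_A^\hbar$-linear combinations of their generators, so I must verify that $\phi$ carries such infinite combinations to infinite combinations without spoiling the polynomiality of coefficients. Here I would observe that the collection $\{\hbar \partial_a\}_{a \in A}$ is bounded in the sense of Definition~\ref{d:bounded}, and that its image $\{\bar H_a\}_{a \in A}$ is also bounded: the $\partial_b$-coefficient of $\bar H_a$ is $\delta_{ab}\hbar$, nonzero for a single $a$, while each $\partial_a q^{(n+1)}(x_A)$ vanishes for all but finitely many $a$ because the polynomial $q^{(n+1)}$ involves only finitely many variables. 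Since a transvection acts as conjugation by a fixed exponential, i.e.\ coefficient-wise and $\hbar$-adically continuously, it commutes with the formation of (possibly infinite) bounded left-linear combinations, and hence $\phi(\mathcal{I}_{\text{can}}) = \mathcal{I}$ as sets including their infinite combinations. Everything else is formal and follows verbatim from the automorphism property.
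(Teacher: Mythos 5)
Your proposal is correct and follows essentially the same route as the paper: there the lemma is obtained by applying the general facts about cyclic twisted modules (cyclicity of the $\phi$-twisted module on $1$, and $\phi$-twisted annihilator $=\phi(\text{Ann}_{\widehat{\mathcal{D}}_A^\hbar}(1))=\phi(\mathcal{I}_{\text{can}})$) and then identifying $\phi(\mathcal{I}_{\text{can}})$ with the left ideal generated by the $\bar H_a$. Your additional check that $\{\bar H_a\}_{a\in A}$ is bounded and that $\phi$ is compatible with the infinite $\widehat{\mathcal{D}}_A^\hbar$-linear combinations in the countably infinite case addresses a point the paper leaves implicit, and is handled correctly.
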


\subsubsection{$\widehat{\mathcal{D}}^\hbar_A$-modules of exponential type}

As usual, we can think of this result from the point of view of differential equations. The left ideal $\mathcal{I} = \left \{ \sum_{a \in A} c_a \bar H_a\ |\ c_a \in \widehat{\mathcal{D}}^\hbar_A \right \} $ is the annihilator of $1$ in the twisted module $\prescript{\phi}{}{\widehat{\mathcal{M}}}_A^\hbar$. In other words, $Z'=1$ is a solution to the equations $\mathcal{I} \cdot_\phi Z' = 0$. In fact, as before, it is the unique solution if we impose the initial condition $Z' \big|_{x_A = 0} = 1$.

However, from the viewpoint of differential equations, this is not so nice, because the action of $\mathcal{I}$ on $Z'$ here is the twisted action $\cdot_\phi$, not the standard action of differential operators. Fortunately, since $\phi$ is a transvection, we can think of it as conjugation, and the action can be written as in \eqref{eq:actiontv}. This means that, instead of thinking of $ \widehat{\mathcal{D}}_A^\hbar / \mathcal{I}$ as the cyclic twisted module $\prescript{\phi}{}{\widehat{\mathcal{M}}}_A^\hbar$, we can think of it as the unique (untwisted) $\widehat{\mathcal{D}}_A^\hbar$-module of exponential type $\widehat{\mathcal{M}}_A^\hbar Z$ generated by
\begin{equation}\label{eq:ZZ}
Z =  \exp\left(- \sum_{n=0}^\infty \hbar^{n-1} q^{(n+1)}(x_A)  \right) .
\end{equation}
Furthermore, imposing $Z\big|_{x_A = 0} = 1$, we can uniquely choose the generator with $q^{(n+1)}(0) = 0$. $\mathcal{I}$ is of course the annihilator of $Z$.

 In other words, what we have shown is that the $Z$ in \eqref{eq:ZZ} with $q^{(n+1)}(0) = 0$ is the unique exponential solution to the differential equations $\mathcal{I} \cdot Z = 0$ satisfying the initial condition $  Z\big|_{x_A = 0} = 1$.
 This is summarized in the following lemma.

\begin{Lem}\label{l:pf}
Let $\mathcal{I} \subset \widehat{\mathcal{D}}_A^\hbar$ be the left ideal
$
\mathcal{I}  = \left \{ \sum_{a \in A} c_a \bar H_a\ |\ c_a \in \widehat{\mathcal{D}}^\hbar_A \right \},
$
where
\begin{equation}
\bar H_a = \hbar \partial_a + \sum_{n=0}^\infty \hbar^n \partial_a q^{(n+1)}(x_A)
\end{equation}
for some polynomials $q^{(n+1)}(x_A)$ of degree $\leq n+1$. Then $ \widehat{\mathcal{D}}_A^\hbar / \mathcal{I}$ is canonically isomorphic to the module of exponential type $ \widehat{\mathcal{M}}_A^\hbar Z $ with 
\begin{equation}
Z = \exp\left(- \sum_{n=0}^\infty \hbar^{n-1} q^{(n+1)}(x_A)  \right).
\end{equation}
In other words, $Z$ is a solution to the differential equations $\mathcal{I} \cdot Z = 0$, and if we set $q^{(n+1)}(0) = 0$, it is the unique solution satisfying the initial condition $Z\big|_{x_A=0}=1$.
We call $Z$ the \emph{partition function} associated to the left ideal $\mathcal{I}$.
\end{Lem}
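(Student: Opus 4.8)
The plan is to deduce Lemma \ref{l:pf} as essentially a restatement of Lemma \ref{l:twisted} combined with the observation that the transvection $\phi$ acts as a conjugation by an exponential. First I would invoke Lemma \ref{l:twisted}: for the given $\bar H_a$ with polynomials $q^{(n+1)}(x_A)$ of degree $\leq n+1$, the quotient $\widehat{\mathcal{D}}_A^\hbar/\mathcal{I}$ is canonically isomorphic to the twisted module $\prescript{\phi}{}{\widehat{\mathcal{M}}}_A^\hbar$, where $\phi$ is the transvection $(\hbar,\hbar x_a,\hbar\partial_a)\mapsto(\hbar,\hbar x_a,\bar H_a)$. So it remains only to identify this twisted module, under the standard (untwisted) action of differential operators, with the module of exponential type $\widehat{\mathcal{M}}_A^\hbar Z$ for $Z = \exp\bigl(-\sum_{n=0}^\infty \hbar^{n-1} q^{(n+1)}(x_A)\bigr)$.

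The key step is the conjugation identity established right after Definition \ref{d:transvection}: for all $P\in\widehat{\mathcal{D}}_A^\hbar$,
\begin{equation*}
\phi(P) = Z\, P\, Z^{-1}, \qquad Z = \exp\Bigl(-\textstyle\sum_{n=0}^\infty \hbar^{n-1} q^{(n+1)}(x_A)\Bigr),
\end{equation*}
where the multiplications are in the Rees Weyl algebra after formally expanding the exponentials (and the degree condition $\deg q^{(n+1)}\leq n+1$ guarantees everything stays well defined, as in footnote \ref{f:hbar}). Equivalently $\phi^{-1}(P) = Z^{-1} P Z$. Now I would write down the candidate isomorphism $\Psi: \prescript{\phi}{}{\widehat{\mathcal{M}}}_A^\hbar \to \widehat{\mathcal{M}}_A^\hbar Z$ sending $f \mapsto fZ$ (a bijection of vector spaces, since $\widehat{\mathcal{M}}_A^\hbar Z$ is by Definition \ref{d:exp} precisely $\{fZ : f\in\widehat{\mathcal{M}}_A^\hbar\}$). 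To check it intertwines the actions: the twisted action is $P\cdot_\phi f = \phi^{-1}(P)\cdot f = (Z^{-1}PZ)\cdot f$ using the ordinary action of differential operators on polynomials; applying $\Psi$ gives $\bigl((Z^{-1}PZ)\cdot f\bigr)Z$. On the other hand the untwisted action on $\widehat{\mathcal{M}}_A^\hbar Z$ gives $P\cdot(fZ)$. These agree because the ordinary action of a differential operator $P$ on the product $gZ$ (with $g$ a polynomial, $Z$ an exponential of a polynomial) satisfies $P\cdot(gZ) = \bigl((Z^{-1}PZ)\cdot g\bigr)Z$ — this is just the statement that conjugating $P$ by the multiplication operator $Z$ on the function space corresponds to the algebra conjugation, which is the standard rule for how $\partial_a$ interacts with $e^{S}$, namely $e^{-S}\partial_a e^{S} = \partial_a + (\partial_a S)$.

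Having identified $\widehat{\mathcal{D}}_A^\hbar/\mathcal{I} \cong \widehat{\mathcal{M}}_A^\hbar Z$ canonically, the cyclic generator $1$ of $\prescript{\phi}{}{\widehat{\mathcal{M}}}_A^\hbar$ corresponds to $Z$, so $\mathcal{I} = \prescript{\phi}{}{\text{Ann}}_{\widehat{\mathcal{D}}_A^\hbar}(1)$ corresponds to $\text{Ann}_{\widehat{\mathcal{D}}_A^\hbar}(Z)$ under this dictionary; that is, $\mathcal{I}\cdot Z = 0$ with the untwisted action, so $Z$ is an exponential solution of the differential system. For uniqueness: any exponential-type generator $Z'$ of this module differs from $Z$ by a nonzero scalar (or more precisely, the module of exponential type has its generator determined up to the freedom of shifting the $q^{(n+1)}$ by constants, as noted after Definition \ref{d:exp} and Definition \ref{d:transvection}); imposing $Z'|_{x_A=0}=1$ forces $q^{(n+1)}(0)=0$ for all $n$, which pins down $Z$ uniquely. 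I expect the main obstacle — really the only thing requiring care — is the bookkeeping in the conjugation identity when $A$ is countably infinite and $Z$ is an exponential of an infinite sum of polynomials of growing degree: one must check that $Z^{-1}PZ$ is a well-defined element of $\widehat{\mathcal{D}}_A^\hbar$ and that the ordinary action on $fZ$ produces something of the form (polynomial in $\widehat{\mathcal{M}}_A^\hbar$) times $Z$, but this is exactly the content of footnote \ref{f:hbar} and the degree constraints, so it goes through; everything else is formal.
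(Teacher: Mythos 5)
Your proposal is correct and follows the same route as the paper: the paper obtains Lemma \ref{l:pf} exactly by combining Lemma \ref{l:twisted} with the observation that the transvection acts by conjugation (equation \eqref{eq:actiontv}), so that the twisted action of $\mathcal{I}$ on $1$ becomes the standard action on $Z$, with uniqueness fixed by $q^{(n+1)}(0)=0$ as in the remark after Definition \ref{d:exp}. Your explicit check that $f\mapsto fZ$ intertwines the twisted and untwisted actions is just a slightly more detailed write-up of the same argument.
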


This is of course a rather trivial statement here as the differential equations are straightforward to solve; it may look like we rewrote something very easy in a very complicated way (but isn't it part of the fun of doing mathematics? \smiley{} ). In any case, this result will play an important role in the following, which is why we highlight it.

This can also be interpreted from the point of view of integrability. In classical mechanics, we say that a classical system is ``integrable'' if there exists a complete set of Poisson commuting observables; in the quantum world this becomes a complete set of commuting operators. Here, we consider a left ideal $\mathcal{I}$ generated by a complete set of commuting first-order differential operators $\bar H_a$. Integrable systems are interesting because they can in principle be solved; similarly, we found that there always exists a solution to the differential equations $\mathcal{I} \cdot Z = 0$, and it is unique after imposing an initial condition.  The fundamental reason here is because the $\bar H_a$ are related to the $\hbar \partial_a$ by an automorphism of the completed Rees Weyl algebra (a transvection).

\subsection{Airy ideals, Airy modules, and partition functions}

In the previous section we saw that, given a transvection $\phi$ on $\widehat{\mathcal{D}}_A^\hbar $, we can construct a twisted polynomial module $\prescript{\phi}{}{\widehat{\mathcal{M}}}_A^\hbar$, which is canonical isomorphic to $\widehat{\mathcal{D}}_A^\hbar  / \mathcal{I}$ where $\mathcal{I}$ is the ideal generated by a completed set of commuting first-order differential operators of the form \eqref{eq:formbarH}. From the point of view of differential equations, we can instead think of $\widehat{\mathcal{D}}_A^\hbar  / \mathcal{I}$ as a module of exponential type $\widehat{\mathcal{M}}_A^\hbar Z $, with $Z$ the unique exponential solution \eqref{eq:ZZ} to the differential equations $\mathcal{I} \cdot Z =0$ after imposing a suitable initial condition.

This is nice, but in the end, at least from the point of view of differential equations, this is rather trivial; after all, solving the system of equations $\bar H_a \cdot Z = 0$ with $\bar H_a$ of the form \eqref{eq:formbarH} is obvious, and it is clear that \eqref{eq:ZZ} is the unique solution with $Z\big|_{x_A = 0} = 1$. We do not need the fancy ideas of transvections, twisted modules, and modules of exponential type to show this!

The power of the formalism however becomes apparent when we introduce Airy ideals. The idea here is that we introduce a more general class of left ideals $\mathcal{I} \subset \widehat{\mathcal{D}}_A^\hbar $, which we call Airy ideals (traditionally called ``Airy structures'' in the literature). Then, we show that if $\mathcal{I}$ is an Airy ideal, then it is equal to the image of the canonical left ideal $\mathcal{I}_{\text{can}}$ generated by the derivatives $\hbar \partial_a$ for some stable transvection $\phi$. This is rather striking, and far from obvious \emph{a priori}. As a result, Lemmas \ref{l:twisted} and \ref{l:pf} apply; $\widehat{\mathcal{D}}_A^\hbar/ \mathcal{I}$ is isomorphic to a twisted polynomial module or a module of exponential type, depending on the viewpoint. As a result, there exists a unique solution to the differential equations $\mathcal{I} \cdot Z = 0$ after imposing a suitable initial condition. 

We remark that it is absolutely key that we work in the $\hbar$-adic completion of the Rees Weyl algebra here, since the transvection $\phi$ will generally involve operators $\bar H_a$ that are formal power series in $\hbar$. Working within the $\hbar$-adic completion enables us to relate Airy ideals to twisted polynomial modules and modules of exponential types.

\subsubsection{Airy ideals}

Let us now define the concept of an Airy ideal in $\widehat{\mathcal{D}}_A^{\hbar}$.

\begin{Def}\label{d:airy}
Let $\mathcal{I} \subset \widehat{\mathcal{D}}_A^{\hbar}$ be a left ideal. We say that it is an \emph{Airy ideal} (also called \emph{Airy structure} in the literature) if there exists operators $H_a \in \widehat{\mathcal{D}}_A^{\hbar}$, for all $a \in A$, such that:
\begin{enumerate}
\item The collection of operators $\{ H_a \}_{a \in A}$ is bounded (see Definition \ref{d:bounded}).
\item The left ideal $\mathcal{I}$ can be written as
\begin{equation}
\mathcal{I} = \left \{ \sum_{a \in A} c_a  H_a\ | \ c_a \in \widehat{\mathcal{D}}_A^{\hbar} \right \},
\end{equation}
which consists as usual of finite and infinite (if $A$ is countably infinite) $\widehat{\mathcal{D}}_A^{\hbar}$-linear combinations of the $H_a$. 
\item The operators $H_a$ take the form
\begin{equation}\label{eq:form}
H_a = \hbar \partial_a + O(\hbar^2).
\end{equation}
\item The left ideal $\mathcal{I}$ satisfies the property:
\begin{equation}
[ \mathcal{I}, \mathcal{I}] \subseteq \hbar^2 \mathcal{I}.
\end{equation}
\end{enumerate}
\end{Def}

\begin{Rem}
Before we move on, we remark that Condition (4) is non-trivial. First, remark that, trivially, any left ideal $\mathcal{I} \subseteq \widehat{\mathcal{D}}_A^{\hbar}$ satisfies
\begin{equation}
[ \mathcal{I}, \mathcal{I} ] \subseteq \mathcal{I},
\end{equation}
since $P Q - Q P \in \mathcal{I}$ for all $P,Q \in \mathcal{I}$. Furthermore, from \eqref{eq:hbar2}, any left ideal $\mathcal{I} \subseteq \widehat{\mathcal{D}}_A^{\hbar}$ satisfies
\begin{equation}
[\mathcal{I},\mathcal{I}] \subseteq \hbar^2 \widehat{\mathcal{D}}_A^{\hbar}.
\end{equation}
Combining the two statements, we conclude that the commutator of any two elements in an arbitrary left ideal $\mathcal{I}$ is an element of the ideal $\mathcal{I}$ that starts at $O(\hbar^2)$. However, this does not mean that it is equal to $\hbar^2$ times an element of the ideal $\mathcal{I}$. That it must be so is the content of Condition (4).

As an example, let $A = \{1,2\}$, and consider the left ideal $\mathcal{I}$ generated by $H_1 = \hbar \partial_1$ and $H_2 = \hbar \partial_2 + \hbar^2 x_1$. The commutator of $H_1$ and $H_2$ is $[H_1, H_2 ] = \hbar^3$. It is true that $\hbar^3 \in \mathcal{I}$, since
\begin{equation}
\hbar^3 = \hbar \partial_1 (\hbar \partial_2 + \hbar^2 x_1) - (\hbar \partial_2+ \hbar^2 x_1) (\hbar \partial_1).
\end{equation}
If we single out a power of $\hbar^2$ on the right-hand-side of the commutator, it is also true that $\hbar \in \widehat{\mathcal{D}}_A^{\hbar}$. However, $\hbar  \notin \mathcal{I}$, as is easy to check. Thus this ideal does not satisfy Condition (4).
\end{Rem}

\subsubsection{Airy ideals, transvections and twisted modules}

Since $\mathcal{I} \subset \widehat{\mathcal{D}}_A^{\hbar}$ is a left ideal, it is clear that $\widehat{\mathcal{D}}_A^{\hbar} / \mathcal{I}$ is a cyclic left $\widehat{\mathcal{D}}_A^{\hbar}$-module. What is not clear however is how it relates to the Rees polynomial $\widehat{\mathcal{D}}_A^{\hbar}$-module $\widehat{\mathcal{M}}_A^\hbar$. This connection is the fundamental theorem in the theory of Airy ideals.

\begin{Th}\label{t:airy}
Let $\mathcal{I} \subset \widehat{\mathcal{D}}_A^{\hbar}$  be an Airy ideal. There there exists a stable transvection $\phi:  \widehat{\mathcal{D}}_A^{\hbar} \to  \widehat{\mathcal{D}}_A^{\hbar}$ (see Definition \ref{d:transvection}) such that $\mathcal{I} = \phi(\mathcal{I}_{\text{can}})$, where $\mathcal{I}_{\text{can}}$ is the left  ideal generated by the derivatives $\hbar \partial_a$. As a result, $\widehat{\mathcal{D}}_A^{\hbar} / \mathcal{I}$ is a cyclic left module canonically isomorphic to the twisted polynomial module $\prescript{\phi}{}{\widehat{\mathcal{M}}}_A^\hbar$.
\end{Th}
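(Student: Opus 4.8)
The plan is to construct the polynomials $q^{(n+1)}(x_A)$ defining the desired stable transvection $\phi$ recursively in the $\hbar$-degree, so that the operators $\bar H_a = \hbar\partial_a + \sum_{n\ge 0}\hbar^n\partial_a q^{(n+1)}(x_A)$ generate the same left ideal as the given Airy generators $H_a = \hbar\partial_a + O(\hbar^2)$. Since both $\{H_a\}$ and $\{\bar H_a\}$ are bounded collections starting with $\hbar\partial_a$, it will suffice (as in Lemma~\ref{l:aairy}, anticipated) to show that for a suitable choice of the $q^{(n+1)}$ there is a bounded collection of operators $M_{ab}\in\widehat{\mathcal{D}}_A^\hbar$ with $M_{ab}=\delta_{ab}+O(\hbar)$ such that $\bar H_a = \sum_b M_{ab}H_b$; invertibility of the matrix $(M_{ab})$ (as a bounded change of generators, using $M_{ab}=\delta_{ab}+O(\hbar)$ and the $\hbar$-adic completeness) then gives $\mathcal{I}=\phi(\mathcal{I}_{\mathrm{can}})$. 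Equivalently, I would phrase the recursion as: build order by order in $\hbar$ the automorphism $\psi$ with $\psi(H_a)=\bar H_a$, i.e.\ conjugation by $\exp(\sum_n \hbar^{n-1}q^{(n+1)})$ should turn $H_a$ into a first-order operator of the stable transvection form.

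Concretely, write $H_a = \hbar\partial_a + \sum_{k\ge 2}\hbar^k H_a^{(k)}$ with $H_a^{(k)}\in F_k\mathcal{D}_A$, and suppose inductively that $q^{(2)}=\cdots=q^{(m)}=0$ (stability, forced because $H_a$ has no $\hbar^1$ term and, by Condition~(4) together with $[H_a,H_b]$ having no $\hbar^2$ term, no genuinely problematic $\hbar^2$ term either) and that $q^{(m+1)},\dots$ have been chosen through order $\hbar^{m}$ so that, after conjugating, the operator agrees with $\hbar\partial_a + \sum_{n\le m-1}\hbar^n\partial_a q^{(n+1)}$ up to $O(\hbar^{m+1})$. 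At order $\hbar^{m+1}$ the obstruction to continuing is that the remaining operator be of the form $\hbar^{m+1}\partial_a q^{(m+2)}(x_A)$ for a \emph{single} polynomial $q^{(m+2)}$ of degree $\le m+2$, i.e.\ that the collection of coefficient operators $\{R_a^{(m+1)}\}$ of $\hbar^{m+1}$ (after subtracting the part already absorbed into the ideal via $\sum_b M_{ab}H_b$) is \emph{exact}: $R_a^{(m+1)} = \partial_a q$ for a common $q$. Here I would use Condition~(4): $[\mathcal{I},\mathcal{I}]\subseteq \hbar^2\mathcal{I}$ translates, at this order, into the integrability/closedness condition $\partial_a R_b^{(m+1)} = \partial_b R_a^{(m+1)}$ (the ``mixed partials agree'' identity, exactly as in the computation $[\bar H_a,\bar H_b]=0$ in the proof that transvections are automorphisms), and by the Poincaré lemma for polynomials this forces $R_a^{(m+1)}=\partial_a q^{(m+2)}$ for a polynomial $q^{(m+2)}$; the degree bound $\deg q^{(m+2)}\le m+2$ comes from tracking the Bernstein filtration through the recursion ($R_a^{(m+1)}\in F_{m+1}\mathcal{D}_A$ is a polynomial, so $q^{(m+2)}$ has degree $\le m+2$), and one may normalize $q^{(m+2)}(0)=0$. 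Boundedness of the new generators follows from boundedness of $\{H_a\}$ being preserved under bounded $\widehat{\mathcal{D}}_A^\hbar$-linear combinations.

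The main obstacle I expect is precisely the exactness step: verifying carefully that Condition~(4), rather than merely the automatic $[\mathcal{I},\mathcal{I}]\subseteq\hbar^2\widehat{\mathcal{D}}_A^\hbar$, is what makes the obstruction class $R_a^{(m+1)}$ \emph{closed} (and hence, by the polynomial Poincaré lemma, exact). The subtlety is bookkeeping: at order $\hbar^{m+1}$ the raw coefficient of $[H_a,H_b]$ is a commutator of lower-order pieces, and one must show that the part of it that is \emph{not} manifestly in $\hbar^2\mathcal{I}$ is exactly $\hbar^{m+3}(\partial_a R_b^{(m+1)} - \partial_b R_a^{(m+1)})$ up to terms already controlled; Condition~(4) says this must vanish. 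A secondary point is that $R_a^{(m+1)}$ must be made to have \emph{polynomial} (not differential-operator) coefficients before applying the Poincaré lemma—this is where one uses that $H_a=\hbar\partial_a+O(\hbar^2)$ so that at each stage the leading new contribution, modulo the ideal, is purely multiplicative. Once the $q^{(n+1)}$ are constructed, $\phi$ is the stable transvection they define, $\mathcal{I}=\phi(\mathcal{I}_{\mathrm{can}})$, and the final claim—that $\widehat{\mathcal{D}}_A^\hbar/\mathcal{I}$ is canonically isomorphic to $\prescript{\phi}{}{\widehat{\mathcal{M}}}_A^\hbar$—is immediate from Lemma~\ref{l:twisted}.
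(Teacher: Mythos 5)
Your overall architecture matches the paper's: first reduce, modulo the ideal, the generators $H_a$ to bounded first-order operators $\bar H_a = \hbar\partial_a + \sum_{n\ge 2}\hbar^n p^{(n)}_a(x_A)$ with polynomial tails that still generate $\mathcal{I}$ (this is the paper's Lemmas \ref{l:poly}, \ref{l:barh}, \ref{l:Ibar}, and your invertible-matrix argument with $M_{ab}=\delta_{ab}+O(\hbar)$ plays the role of Lemma \ref{l:Ibar}), then show the tails are closed and apply the polynomial Poincar\'e lemma to write $p^{(n)}_a = \partial_a q^{(n+1)}$, and finally quote Lemma \ref{l:twisted}. The degree bounds, boundedness bookkeeping, and the final module statement are all fine.

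However, there is a genuine gap exactly at the step you yourself flag as the crux. You assert that Condition (4) ``translates, at this order, into the closedness condition $\partial_a R^{(m+1)}_b = \partial_b R^{(m+1)}_a$'' and that ``Condition (4) says this must vanish.'' It does not, at least not directly. What Condition (4) gives is $[\bar H^{(m)}_a,\bar H^{(m)}_b] = \hbar^2 Q_{ab}$ with $Q_{ab}\in\mathcal{I}$; computing the left-hand side shows $Q_{ab} = \hbar^{m}\bigl(\partial_a R^{(m+1)}_b - \partial_b R^{(m+1)}_a\bigr) + O(\hbar^{m+1})$, which is an element of $\mathcal{I}$ whose coefficients are polynomials (no derivatives). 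To conclude that the order-$\hbar^m$ coefficient vanishes you must know that $\mathcal{I}$ contains no nonzero polynomial elements starting at order $\le m$ -- and a priori an ideal satisfying (1)--(4) could contain such elements as infinite $\widehat{\mathcal{D}}_A^\hbar$-linear combinations $\sum_c f_c \bar H^{(m)}_c$ in which the derivative parts conspire to cancel against lower-degree pieces of the $f_c$. Ruling this out is precisely the paper's Lemma \ref{l:nononzero} (no nonzero polynomials in an Airy ideal), and its proof is not a one-line consequence of Condition (4): it is a separate induction in which, at stage $N$, one applies the partial stable transvection $\phi_N$ removing the already-integrated terms, uses that polynomials are fixed by $\phi_N$ and hence lie in $\phi_N(\mathcal{I})$, and reads off from the generators $\phi_N(\bar H_a)=\hbar\partial_a + O(\hbar^{N+1})$ (polynomial tails) that any polynomial in $\phi_N(\mathcal{I})$, hence in $\mathcal{I}$, starts at $O(\hbar^{N+1})$. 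Your recursion can be repaired by strengthening the induction hypothesis to include this ``no low-order polynomials in $\mathcal{I}$'' statement and proving it simultaneously via the partial-transvection trick, but as written the closedness of the obstruction class is assumed rather than proved, and this is the non-trivial heart of the theorem. (A minor additional slip: your induction hypothesis ``$q^{(2)}=\cdots=q^{(m)}=0$'' is not what you want -- stability only forces $q^{(1)}=q^{(2)}=0$; the intermediate potentials $q^{(3)},\ldots$ are generally nonzero and are what the recursion constructs.)
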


This is a powerful theorem. What it means is that we can find a complete set of commuting first-order differential operators $\bar H_a$ of the form \begin{equation}
\bar H_a = \hbar \partial_a + \sum_{n=2}^\infty \hbar^n \partial_a q^{(n+1)}(x_A),
\end{equation}
for some polynomials $q^{(n+1)}(x_A)$ of degree $\leq n+1$, such that the Airy ideal $\mathcal{I}$ can be rewritten as
\begin{equation}
\mathcal{I}  = \left \{ \sum_{a \in A} c_a \bar H_a\ |\ c_a \in \widehat{\mathcal{D}}^\hbar_A \right \}.
\end{equation}
This is highly non-trivial, as the original Airy ideal $\mathcal{I}$ will not usually be presented in this form.

To prove Theorem \ref{t:airy} we will first prove a series of lemmas. The first three lemmas do not require Condition (4) in the definition of Airy ideals Definition \ref{d:airy}. The fourth lemma highlights the crucial role played by Condition (4).

\begin{Lem}\label{l:poly}
Let $\mathcal{I} \subset \widehat{\mathcal{D}}_A^{\hbar}$  be a left ideal satisfying conditions (1)--(3) of Definition \ref{d:airy}. Then for any $P \in  \widehat{\mathcal{D}}_A^{\hbar}$, we can write
\begin{equation}
P = \sum_{n=0}^\infty \hbar^n p^{(n)}(x_A) + Q
\end{equation}
for some polynomials $p^{(n)}(x_A)$ of degree $\leq n$ and some $Q \in \mathcal{I}$.
\end{Lem}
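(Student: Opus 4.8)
The plan is to build the polynomial representative of $P$ order by order in $\hbar$, using conditions (1)--(3) to reduce any derivative term modulo the ideal $\mathcal{I}$. Writing $P = \sum_{n=0}^\infty \hbar^n P_n$ with $P_n \in F_n\mathcal{D}_A$, I would set up an inductive procedure: at each order in $\hbar$, the term $P_n$ is a sum of monomials in the $x_a$ and $\partial_a$; whenever a monomial contains at least one derivative $\partial_a$, I rewrite $\hbar^n(\cdots)\partial_a(\cdots)$ by commuting one $\partial_a$ to the right and replacing $\hbar\partial_a = H_a - (H_a - \hbar\partial_a)$, where $H_a - \hbar\partial_a = O(\hbar^2)$ by condition (3). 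The piece involving $H_a$ lands in $\mathcal{I}$ (using condition (2), and boundedness from condition (1) to ensure infinite sums over $a$ stay in $\widehat{\mathcal{D}}_A^\hbar$); the correction term $H_a - \hbar\partial_a$ carries two extra powers of $\hbar$ relative to the monomial it replaces, so it feeds into strictly higher order in the induction.

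The key technical point is that this reduction terminates in the $\hbar$-adic sense: each time a derivative is traded, the remaining "non-polynomial part" is pushed up by at least $\hbar^2$, so after collecting contributions modulo $\mathcal{I}$, the coefficient of $\hbar^n$ in the polynomial remainder only receives finitely many contributions (from the original $P_m$ with $m \le n$ and finitely many reduction steps), and the leftover genuinely-derivative terms have been absorbed into an element $Q \in \mathcal{I}$ defined as a convergent series in the $\hbar$-adic topology. I would make this precise by defining a sequence of partial reductions $P = R_N + Q_N$ where $R_N$ agrees with a polynomial series up to order $\hbar^N$ and $Q_N \in \mathcal{I}$, show $Q_{N+1} - Q_N \in \hbar^{N+1}\widehat{\mathcal{D}}_A^\hbar$, hence $Q_N$ converges to some $Q \in \mathcal{I}$ (the ideal is closed in the $\hbar$-adic topology since it consists of all, including infinite, $\widehat{\mathcal{D}}_A^\hbar$-linear combinations of the $H_a$), and the remainders converge to $\sum_n \hbar^n p^{(n)}(x_A)$ with $\deg p^{(n)} \le n$ — the degree bound being inherited from the Bernstein filtration $F_n\mathcal{D}_A$ together with the fact that the reduction only ever decreases derivative order without increasing polynomial degree beyond what the filtration allows.

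I expect the main obstacle to be bookkeeping the convergence carefully when $A$ is countably infinite: a single $P_n$ may already involve an infinite sum over indices $a_1, \dots, a_m$, and each reduction step introduces further infinite sums (from substituting $H_a$, whose tail $\sum_{k}\hbar^k(\cdots)$ may itself be an infinite operator). I would handle this by invoking boundedness of $\{H_a\}_{a\in A}$ (condition (1)) at every stage to guarantee that the resulting $\widehat{\mathcal{D}}_A^\hbar$-linear combinations of the $H_a$ remain bounded and hence well-defined in $\widehat{\mathcal{D}}_A^\hbar$, and that the polynomial coefficients $p^{(n)}(x_A)$ remain genuine polynomials (finite sums of monomials) rather than formal series in the $x_a$. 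The $\hbar$-adic filtration does the rest of the work in ensuring everything converges.
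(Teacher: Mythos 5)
Your proposal is correct and follows essentially the same route as the paper's proof: rewrite each non-polynomial term by trading its right-most $\hbar\partial_a$ for $H_a$ modulo an $O(\hbar^2)$ correction, iterate order by order in $\hbar$, use boundedness of $\{H_a\}_{a\in A}$ to keep all infinite sums and polynomial coefficients well defined, and use the $\hbar$-adic completion (with $\mathcal{I}$ containing infinite $\widehat{\mathcal{D}}_A^\hbar$-linear combinations) so the partial reductions converge to $\sum_n \hbar^n p^{(n)} + Q$ with $Q \in \mathcal{I}$. The only quibble is bookkeeping: each substitution raises the order of the leftover non-polynomial piece by (at least) one power of $\hbar$, not two, but this is all the convergence argument needs, so nothing is affected.
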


\begin{proof}
Let $P \in  \widehat{\mathcal{D}}_A^{\hbar}$. We can write
\begin{equation}
P = p^{(0,0)} + \hbar \left( p^{(1,1)} + \sum_{b \in A} p^{(1,0)}_b \partial_b \right) + \hbar^2 \left(   p^{(2,2)} + \sum_{b \in A} p^{(2,1)}_{b} \partial_b + \sum_{b,c \in A} p^{(2,0)}_{bc} \partial_b \partial_c \right) + O(\hbar^3),
\end{equation}
where the $p^{(m,k)}_{\cdots}$ are polynomials of degree $\leq k$ (we removed the dependence in $x_A$ for clarity). 

The idea is simple. Since $\mathcal{I}$ is an Airy ideal, it is generated by a bounded collection of operators $\{H_a \}_{a \in A}$ of the form $H_a = \hbar \partial_a + O(\hbar^2)$. So for each term in $P$ that is not polynomial, we can replace the right-most derivative $\hbar \partial_a$ by $H_a$, up to higher order terms in $\hbar$. Applying this procedure recursively order by order in $\hbar$, we will end up rewriting $P$ as a polynomial plus an operator in the ideal $\mathcal{I}$. We see here that it is key that we are working in the $\hbar$-adic completion of the Rees Weyl algebra, otherwise we would not be allowed to keep going order by order in $\hbar$ forever.

More precisely, we start at $O(\hbar)$. We use $\hbar \partial_b = H_b + O(\hbar^2)$ to rewrite
\begin{equation}
 \hbar \left( p^{(1,1)} + \sum_{b \in A} p^{(1,0)}_b \partial_b \right)  =  \hbar p^{(1,1)} + \sum_{b \in A} p^{(1,0)}_b H_b + O(\hbar^2).
\end{equation}
The first term is a polynomial term, and the second term is in $\mathcal{I}$. The procedure however created new terms at the next order, $O(\hbar^2)$, which we must study further. If we write $H_a$ as
\begin{equation}\label{eq:beginning}
H_a = \hbar \partial_a + \hbar^2 \left( g^{(2,2)}_a + \sum_{b \in A} g^{(2,1)}_{a;b} \partial_b + \sum_{b,c \in A} g^{(2,0)}_{a;bc} \partial_b \partial_c \right) + O(\hbar^3),
\end{equation}
where the $g^{(2,i)}_{\cdots}(x_A)$ are polynomials of degree $\leq i$, then the terms of $O(\hbar^2)$ created by the procedure above take the form
\begin{equation}\label{eq:hbar3}
- \hbar^2 \left( \sum_{b \in A} p^{(1,0)}_b g^{(2,2)}_b + \sum_{b,c \in A}  p^{(1,0)}_b g^{(2,1)}_{b;c} \partial_c + \sum_{b,c,d \in A} p^{(1,0)}_b  g^{(2,0)}_{b;cd} \partial_c \partial_d \right).
\end{equation}
When the index set $A$ is countably infinite, we need to make sure that the terms in brackets do not involve infinite divergent sums, and are all in $F_2 \mathcal{D}_A$ (with respect to the Bernstein filtration, see Definition \ref{d:bernstein}). We note that because the collection of operators $\{H_a\}_{a \in A}$ is bounded, all polynomials $g^{(2,k)}_{a;\cdots}$ vanish for all but finitely many $a \in A$. Therefore, the sums over $b \in A$ collapse to finite sums and are well defined. The remaining sums over $c,d \in A$ are potentially infinite, but they come with derivatives in these indices, and hence all terms in brackets in \eqref{eq:hbar3} are in $\mathcal{D}_A$. Furthermore, looking at the degree of the terms with respect to the Bernstein filtration, we see that they are all in $F_2 \mathcal{D}_A$. 

Now repeat the procedure for all terms at $O(\hbar^2)$, including the newly obtained terms, keeping the polynomial terms and replacing the right-most derivatives in the other terms by $H$'s up to terms of higher order in $\hbar$. The result will be a polynomial term at $O(\hbar^2)$, plus a term that is in the ideal $\mathcal{I}$, plus corrections at higher order. The argument above shows that the corrections are well defined. Then keep applying this procedure recursively, order by order in $\hbar$. In the end, all that remains are polynomial terms plus terms in the ideal $\mathcal{I}$. That is, we conclude that we can write
\begin{equation}
P= \sum_{n=0}^\infty \hbar^n p^{(n)}(x_A) + Q
\end{equation}
for some polynomials $p^{(n)}(x_A)$ of degree $\leq n$ and some $Q \in \mathcal{I}$.
For instance, 
\begin{equation}
p^{(0)} = p^{(0,0)}, \qquad p^{(1)} = p^{(1,1)}, \qquad p^{(2)} = p^{(2,2)} - \sum_{b \in A} p^{(1,0)}_b g^{(2,2)}_b,
\end{equation}
 and so on and so forth. 
\end{proof}

\begin{Lem}\label{l:barh}
Let $\mathcal{I} \subset \widehat{\mathcal{D}}_A^{\hbar}$  be a left ideal satisfying conditions (1)--(3) of Definition \ref{d:airy}. Then there exist operators $\bar H_a \in \mathcal{I}$, for all $a \in A$, of the form
\begin{equation}\label{eq:formhbar}
\bar H_a = \hbar \partial_a + \sum_{n=2}^\infty \hbar^n p^{(n)}_a(x_A)
\end{equation}
for polynomials $p^{(n)}_a(x_A)$ of degree $\leq n$. Furthermore, the collection of operators $\{ \bar H_a\}_{a \in A}$ is bounded.
\end{Lem}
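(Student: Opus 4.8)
The plan is to obtain each $\bar H_a$ by feeding the operator $P = \hbar\partial_a$ — which lies in $\widehat{\mathcal{D}}_A^{\hbar}$ since $\partial_a \in F_1\mathcal{D}_A$ — into the rewriting procedure of Lemma~\ref{l:poly}. That lemma produces polynomials $p_a^{(n)}(x_A)$ of degree $\leq n$ and an operator $Q_a \in \mathcal{I}$ with $\hbar\partial_a = \sum_{n\geq 0}\hbar^n p_a^{(n)}(x_A) + Q_a$. The two lowest coefficients vanish automatically: writing $\hbar\partial_a$ in the form used in the proof of Lemma~\ref{l:poly}, its $\hbar^0$-part is zero, so $p_a^{(0)}=0$, and its $\hbar^1$-part is the pure derivative $\hbar\partial_a$, which has no polynomial piece, so $p_a^{(1)}=0$. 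Hence $\bar H_a := Q_a = \hbar\partial_a - \sum_{n\geq 2}\hbar^n p_a^{(n)}(x_A)$ lies in $\mathcal{I}$ and, after renaming $-p_a^{(n)}$ as $p_a^{(n)}$, has exactly the shape \eqref{eq:formhbar}.

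Concretely, the mechanism is the order-by-order replacement $\hbar\partial_b = H_b + O(\hbar^2)$: starting from $\hbar\partial_a$, at each order in $\hbar$ one rewrites every remaining right-most derivative $\hbar\partial_b$ as $H_b$ minus higher-order terms, discards the resulting $\widehat{\mathcal{D}}_A^{\hbar}$-multiples of $H_b$ (which belong to $\mathcal{I}$), and keeps the polynomial remainder, which at the end of the recursion is $\bar H_a$. Since we work in the $\hbar$-adic completion this recursion runs indefinitely, and the $\widehat{\mathcal{D}}_A^{\hbar}$-linear combinations of the $H_b$ that it generates are well defined precisely because $\{H_b\}_{b\in A}$ is bounded, exactly as in the proof of Lemma~\ref{l:poly}.

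The one point that genuinely needs care is the final assertion, that $\{\bar H_a\}_{a\in A}$ is itself bounded; since $\bar H_a$ contains no derivative beyond $\hbar\partial_a$, this reduces to showing that for each fixed $n$ the polynomial $p_a^{(n)}(x_A)$ is nonzero for only finitely many $a\in A$. I would prove this by induction on $n$, reading off from the proof of Lemma~\ref{l:poly} that $p_a^{(n)}$ is a finite sum of terms, each a product of coefficient-polynomials of $H_a$ at orders $\leq n$ with coefficient-polynomials of certain $H_c$'s; every index that gets summed over is attached to such a coefficient-polynomial, and boundedness of $\{H_a\}_{a\in A}$ makes that polynomial vanish for all but finitely many values of the index. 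Starting from the observation that for $P=\hbar\partial_a$ the only nonzero order-$\hbar$ coefficient is the one multiplying $\partial_b$, equal to $\delta_{ab}$ — so the recursion for $\bar H_a$ only reaches $H_c$'s linked to $a$ through finitely-supported coefficient data — one checks inductively that $\{a : p_a^{(n)}\neq 0\}$ is contained in a finite union of the finite support sets supplied by the boundedness of $\{H_a\}$, hence is finite. This bookkeeping is the main obstacle; the rest is an immediate application of Lemma~\ref{l:poly}.
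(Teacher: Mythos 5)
Your proof is correct and is essentially the paper's argument: both rest on the rewriting procedure of Lemma \ref{l:poly}, the only (immaterial) difference being that you feed $\hbar\partial_a$ into it and take $\bar H_a$ to be the ideal part $Q_a$, whereas the paper applies it to $H_a$ (i.e.\ to its $O(\hbar^2)$ tail $P_a$) and sets $\bar H_a = H_a - Q_a$. The inductive bookkeeping you sketch for boundedness of $\{\bar H_a\}_{a\in A}$ is exactly the content the paper compresses into its one-line assertion that boundedness of $\{H_a\}_{a\in A}$ forces $p_a^{(n)}$ to vanish for all but finitely many $a$ at each fixed $n$.
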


\begin{proof}
We know that $H_a = \hbar \partial_a + P_a$ for some $P_a \in  \widehat{\mathcal{D}}_A^{\hbar}$ of $O(\hbar^2)$. From Lemma \ref{l:poly}, we can write
\begin{equation}
H_a = \hbar \partial_a + \sum_{n=2}^\infty \hbar^n p_a^{(n)}(x_A) + Q_a
\end{equation}
for some polynomials $p_a^{(n)}(x_A)$ of degree $\leq n$ and some $Q_a \in \mathcal{I}$. We define
\begin{equation}
\bar H_a = H_a - Q_a =  \hbar \partial_a + \sum_{n=2}^\infty \hbar^n p_a^{(n)}(x_A).
\end{equation}
Clearly, $\bar H_a \in \mathcal{I}$. Furthermore, since the collection $\{H_a\}_{a \in A}$ is bounded, the polynomials $p_a^{(n)}(x_A) $ must vanish for all but finitely many $a \in A$, and hence the collection $\{ \bar H_a \}_{a \in A}$ is also bounded. 
\end{proof}

\begin{Lem}\label{l:Ibar}
Let $\mathcal{I} \subset \widehat{\mathcal{D}}_A^{\hbar}$  be a left ideal satisfying conditions (1)--(3) of Definition \ref{d:airy}, and $\bar I \subseteq I$ be the left ideal generated by the $\bar H_a$ of Lemma \ref{l:barh}. Then $\bar I = I $. 

In other words, we can think of $I$ as being generated by the $\bar H_a$ instead of the $H_a$:  $\mathcal{I} =  \{ \sum_{a \in A} c_a \bar H_a\ |\ c_a \in  \widehat{\mathcal{D}}_A^{\hbar} \}$.
\end{Lem}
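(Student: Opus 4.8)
The plan is to prove the two inclusions $\bar I\subseteq\mathcal I$ and $\mathcal I\subseteq\bar I$ separately. The first is immediate: Lemma \ref{l:barh} gives $\bar H_a\in\mathcal I$ for every $a\in A$ together with the boundedness of $\{\bar H_a\}_{a\in A}$, so every (finite or, when $A$ is countably infinite, infinite) $\widehat{\mathcal D}_A^{\hbar}$-linear combination of the $\bar H_a$ is a well defined element of the left ideal $\mathcal I$. The content of the lemma is the reverse inclusion $\mathcal I\subseteq\bar I$, and for this it suffices to show that each generator $H_a$ of $\mathcal I$ lies in $\bar I$; the passage from ``$H_a\in\bar I$ for all $a$'' to ``$\mathcal I\subseteq\bar I$'' is then routine, once one checks — using the boundedness of $\{H_a\}_{a\in A}$ — that rewriting a $\widehat{\mathcal D}_A^{\hbar}$-linear combination of the $H_a$ as a $\widehat{\mathcal D}_A^{\hbar}$-linear combination of the $\bar H_b$ only ever produces well defined operators with polynomial coefficients.

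The key step is to inspect more closely the construction in the proof of Lemma \ref{l:barh}. There one writes $H_a=\hbar\partial_a+P_a$ with $P_a=O(\hbar^2)$, applies the reduction procedure of Lemma \ref{l:poly} to $P_a$, and sets $\bar H_a=H_a-Q_a$, where $Q_a\in\mathcal I$ is the ``ideal part'' produced by the reduction. I would observe two things about this procedure: (a) it only ever left-multiplies the generators $H_b$ by operators and takes bounded sums, so $Q_a$ is in fact an explicit combination $Q_a=\sum_{b\in A}c_{ab}H_b$ with $c_{ab}\in\widehat{\mathcal D}_A^{\hbar}$; and (b) since $P_a=O(\hbar^2)$, the first time an $H_b$-term is created is when a factor $\hbar\partial_b$ inside an $O(\hbar^2)$ term is replaced, so every $c_{ab}$ is $O(\hbar)$, i.e. $c_{ab}\in\hbar\widehat{\mathcal D}_A^{\hbar}$. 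Rewriting $H_a=\bar H_a+\sum_b c_{ab}H_b$ in the form $\sum_b(\delta_{ab}-c_{ab})H_b=\bar H_a$, I would then invert the ``matrix'' $\mathbb 1-c$ by the Neumann series $\sum_{k\ge 0}c^k$: because each $c_{ab}=O(\hbar)$, the $k$-th term $\sum_{b_1,\dots,b_{k-1}}c_{ab_1}c_{b_1b_2}\cdots c_{b_{k-1}b}$ is $O(\hbar^k)$, the series converges $\hbar$-adically, and one obtains $H_a=\sum_{b\in A}e_{ab}\bar H_b$ with $e_{ab}=\delta_{ab}+c_{ab}+\sum_{b_1}c_{ab_1}c_{b_1b}+\cdots\in\widehat{\mathcal D}_A^{\hbar}$. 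Hence $H_a\in\bar I$, and the lemma follows. I note that Condition (4) of Definition \ref{d:airy} is never used: the argument only exploits the ``triangularity in $\hbar$'' of the change of generators $H_a\leftrightarrow\bar H_a$.

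The main obstacle is not the algebra but the bookkeeping with infinite sums in the case where $A$ is countably infinite. One must check: (i) that each $c_{ab}$, built out of the bounded data $\{H_b\}_{b\in A}$, genuinely lies in $\widehat{\mathcal D}_A^{\hbar}$ (polynomial coefficients, finite order); (ii) that, for fixed $a$, the collection $\{c_{ab}\}_{b\in A}$ and, inductively, the partial products $c_{ab_1}\cdots c_{b_{k-1}b}$ interact correctly with boundedness, so that the intermediate index sums $\sum_{b_1},\dots,\sum_{b_{k-1}}$ either collapse (coefficients vanishing for all but finitely many indices) or come paired with derivatives in those very indices — exactly the mechanism already used in the analysis of \eqref{eq:hbar3} in the proof of Lemma \ref{l:poly} — so that the $e_{ab}$ are honest elements of $\widehat{\mathcal D}_A^{\hbar}$; and (iii) that $\sum_b e_{ab}\bar H_b$ converges, which follows from the boundedness of $\{\bar H_a\}_{a\in A}$ established in Lemma \ref{l:barh}. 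I would package (i)--(iii) into a single remark to the effect that every sum encountered is finite once one fixes an $\hbar$-order and a Bernstein degree, invoking the same boundedness argument used earlier in this section rather than re-deriving it.
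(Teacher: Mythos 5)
Your argument is correct and is essentially the paper's own proof: the paper rewrites $\bar H_a = H_a - \sum_b p_{ab}H_b$ with $p_{ab}=O(\hbar)$ and substitutes recursively, which is precisely your Neumann-series inversion of $\mathbb{1}-c$, with $\hbar$-adic convergence coming from the coefficients being $O(\hbar)$ and boundedness taking care of the infinite index sums. No changes needed.
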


\begin{proof}
By definition, $\bar H_a = H_a - Q_a$ for some $Q_a \in \mathcal{I}$. We can write $Q_a = \sum_{b \in A} p_{a b} H_b$
for some $p_{a b} \in \widehat{\mathcal{D}}_A^{\hbar}$ . In fact, from the proof of Lemma \ref{l:barh}, we know that $p_{ab} = O(\hbar)$.

We can write
\begin{equation}
\bar H_a = H_a  - \sum_{b \in A} p_{a b} H_b
\end{equation}
 But then, $H_b = \bar H_b + Q_b = \bar H_b + \sum_{c \in A} p_{b c} H_c$, and thus we get
\begin{equation}
\bar H_a = H_a - \sum_{b \in A} p_{a b} \bar H_b -  \sum_{b,c \in A} p_{a b} p_{ bc} H_c.
\end{equation}
We note here that since both $\{H_a\}_{a \in A}$ and $\{ \bar H_a\}_{a \in A}$ are bounded, the collection $\{ Q_a\}_{a \in A}$ is also bounded, and hence the sum over $b \in A$ in the third term on the right-hand-side is well defined. Furthermore, since $p_{ab} = O(\hbar)$, $p_{ab} p_{bc} = O(\hbar^2)$.

Continuing this process recursively, we end up with the statement that
\begin{equation}
\bar H_a = H_a - \bar Q_a
\end{equation}
where $\bar Q_a$ is an infinite sum of terms that are linear combinations of the $\bar H_b$ with coefficients starting at higher and higher order in $\hbar$. Thus, for a finite power of $\hbar$, only a finite number of terms contribute, and the result is that $\bar Q_a \in \bar{\mathcal{I}}$.  It follows that $H_a \in \bar{\mathcal{I}}$, and hence $\mathcal{I} \subseteq \bar{\mathcal{I}}$. We conclude that $\mathcal{I} = \bar{\mathcal{I}}$.
\end{proof}

So far we have not used at all Condition (4) in the definition of Airy ideals Definition \ref{d:airy}. This condition is crucial; imposing Condition (4) implies that there are no non-zero polynomials in an Airy ideal $\mathcal{I}$. In particular, it implies that the operators $\bar H_a$ commute with each other, which in turn implies the existence of a stable transvection that relates $\mathcal{I}$ to the canonical left ideal $\mathcal{I}_{\text{can}}$, as we will see.

\begin{Lem}\label{l:nononzero}
Let  $\mathcal{I} \subset \widehat{\mathcal{D}}_A^{\hbar}$ be an Airy ideal. Then there are no non-zero polynomials in $\mathcal{I}$. 
\end{Lem}

\begin{proof}
By Lemma \ref{l:Ibar}, we think of $I$ as $\mathcal{I} =  \{ \sum_{a \in A} c_a \bar H_a\ |\ c_a \in  \widehat{\mathcal{D}}_A^{\hbar} \}$, with $\bar H_a$ of the form 
\begin{equation}\label{eq:hbarrr}
\bar H_a = \hbar \partial_a + \sum_{n=2}^\infty \hbar^n p^{(n)}_a(x_A).
\end{equation}

We prove the Lemma by induction on the power of $\hbar$.  Let $N \geq 2$ be a positive integer. The induction hypothesis is that all polynomials in $\mathcal{I}$ start at least at $O(\hbar^N)$. Then we show that it implies that they must start at least at order $O(\hbar^{N+1})$. By induction on $N$, this means that all polynomials  in $\mathcal{I}$ must vanish. 

The base case for the induction is obvious. We need to show that all polynomials in $\mathcal{I}$ must start at least at $O(\hbar^2)$. But since $\mathcal{I}$ is generated by $\bar H_a$ of the form \eqref{eq:hbarrr}, it is clearly impossible to get a polynomial with a $\hbar^0$ constant term or a $\hbar^1$ linear term as a linear combination of $\bar H_a$'s. 

Now assume that all polynomials in $\mathcal{I}$ start at least at $O(\hbar^N)$. The commutator of the $\bar H_a$ is:
\begin{align}
[\bar H_a, \bar H_b ] =& \left[ \hbar \partial_a + \sum_{n=2}^\infty \hbar^n p_{a}^{(n)}(x_A), \hbar \partial_b + \sum_{n=2}^\infty \hbar^n p_{b}^{(n)}(x_A) \right] \nonumber\\
=&\hbar^2  \sum_{n=2}^\infty \hbar^{n-1} \left( \partial_a p_{b}^{(n)}(x_A) - \partial_b p_{a}^{(n)}(x_A) \right).
\end{align}
By Condition (4) of Definition \ref{d:airy}, we know that
\begin{equation}
 \sum_{n=2}^\infty \hbar^{n-1} \left( \partial_a p_{b}^{(n)}(x_A) - \partial_b p_{a}^{(n)}(x_A) \right) \in \mathcal{I}.
\end{equation}
By assumption, this must start at least at $O(\hbar^N)$, so we must have that
\begin{equation}\label{eq:conditions}
\partial_a p_{b}^{(n)}(x_A) = \partial_b p_{a}^{(n)}(x_A), \qquad \text{ for all $a,b \in A$ and $n = 2,\ldots,N$.}
\end{equation}
Assuming first that $A$ is a finite index set, by Poincare's lemma we conclude that there exists polynomials $q^{(n+1)}(x_A)$ of degree $\leq n+1$ such that
\begin{equation}
p_{a}^{(n)}(x_A) = \partial_a q^{(n+1)}(x_A),  \qquad \text{ for all $a\in A$ and $n = 2,\ldots,N$.}
\end{equation}
Moreover, if we require that $q^{(n+1)}(0) = 0$, then the polynomials are uniquely fixed.

If $A$ is a countably infinite index set, then Poincare's lemma still holds, but the $q^{(n+1)}(x_A)$, which will still be of degree $\leq n+1$, could \emph{a priori} involve infinite linear combinations of monomials of the same degree. However, since the collection of operators $\{ \bar H_a \}_{a \in A}$ is bounded, we know that for a fixed $n$, the polynomials $p_a^{(n)}(x_A)$ vanish for all but finitely many $a \in A$. This means that for each $n$ the conditions \eqref{eq:conditions} become a finite system in a finite numbers of variables, and thus we conclude by Poincare's lemma that the $q^{(n+1)}(x_A)$ will be polynomials.

As result, we conclude that we can write
\begin{equation}
\bar H_a = \hbar \partial_a + \sum_{n=2}^N \hbar^n \partial_a q^{(n+1)}(x_A) + \sum_{n = N+1}^\infty \hbar^n p_n(x_A).
\end{equation}
Now let
\begin{equation}
\phi_N: (\hbar, \hbar x_a, \hbar \partial_a) \mapsto (\hbar, \hbar x_a, \hbar \partial_a - \sum_{n=2}^N \hbar^n \partial_a q^{(n+1)}(x_A) ).
\end{equation}
It is a stable transvection, and 
\begin{equation}\label{eq:phiH}
\phi_N(\bar H_a) = \hbar \partial_a +  \sum_{n = N+1}^\infty \hbar^n p_n(x_A).
\end{equation}
Under this automorphism, the ideal $\mathcal{I}$ is mapped to the ideal $\phi_N(\mathcal{I})$ generated by the $\phi_N(\bar H_a)$ above. Now suppose that $P$ is a polynomial in $\mathcal{I}$. By definition of the transvection, $\phi_N(P) = P$, and thus $P$ must also be in $\phi_N(\mathcal{I})$. But looking at the form of the generators $\phi_N(\bar H_a)$ in \eqref{eq:phiH}, it is clear that any polynomial in $\phi_N(\mathcal{I})$ must start at least at $O(\hbar^{N+1})$. Therefore, $P$ must start at least at $O(\hbar^{N+1})$, which completes the induction.

\end{proof}

With these four lemmas under our belt, the proof of Theorem \ref{t:airy} is straightforward.

\begin{proof}[Proof of Theorem \ref{t:airy}]

By Lemma \ref{l:Ibar}, we can write $\mathcal{I}$ as $\mathcal{I} =  \{ \sum_{a \in A} c_a \bar H_a\ |\ c_a \in  \widehat{\mathcal{D}}_A^{\hbar} \}$ for some operators $\bar H_a \in \mathcal{I}$ of the form
\begin{equation}
\bar H_a = \hbar \partial_a + \sum_{n=2}^\infty \hbar^n p^{(n)}_a(x_A),
\end{equation}
with the $p^{(n)}_a(x_A)$ polynomials of degree $\leq n$. Since there are no non-zero polynomials in $\mathcal{I}$ (Lemma \ref{l:nononzero}), and that the commutator $[\bar H_a, \bar H_b] \in \mathcal{I}$ is a polynomial, we must have
\begin{equation}
[ \bar H_a, \bar H_b] = 0 \qquad \text{for all $a,b \in A$.}
\end{equation}
Using Poincare's Lemma as in the proof of Lemma \ref{l:nononzero}, we conclude that we can rewrite the differential operators as
\begin{equation}
\bar H_a = \hbar \partial_a + \sum_{n = 2}^\infty \hbar^n \partial_a q^{(n+1)}(x_A) ,
\end{equation}
for polynomials $q^{(n+1)}(x_A)$ of degree $\leq n+1$ with $q^{(n+1)}(0) = 0$.
We have thus shown that $\mathcal{I}$ is the image of $\mathcal{I}_{\text{can}} = \left \{ \sum_{a \in A} c_a \hbar \partial_a\ |\ c_a \in \widehat{\mathcal{D}}^\hbar_A \right \}$ under the stable transvection $\phi: (\hbar, \hbar x_a, \hbar \partial_a) \mapsto (\hbar, \hbar x_a, \bar H_a)$. By Lemma \ref{l:twisted}, it follows that  $\widehat{\mathcal{D}}_A^{\hbar} / \mathcal{I}$ is a cyclic left module canonically isomorphic to the twisted polynomial module $\prescript{\phi}{}{\widehat{\mathcal{M}}}_A^\hbar$.
\end{proof}

\subsubsection{Partition function}

As usual, we can reformulate Theorem \ref{t:airy} in the language of differential equations.  Lemma \ref{l:pf} directly implies the following corollary, which is the existence and uniqueness theorem at the foundation of the theory of Airy ideals (or Airy structures), first proved by Kontsevich and Soibelman in \cite{ks}.

\begin{Cor}\label{c:pf}
Let $\mathcal{I} \subset \widehat{\mathcal{D}}_A^\hbar$ be an Airy ideal. Then there exists a unique module of exponential type $\widehat{\mathcal{M}}_A^\hbar Z$ that is canonically isomorphic to $\widehat{\mathcal{D}}_A^\hbar / \mathcal{I}$. Furthermore, if we impose the initial condition $Z\big|_{x_A=0}=1$ on the generator, then it is uniquely fixed and takes the form
\begin{equation}\label{eq:pf}
Z = \exp \left( \sum_{\substack{g \in \frac{1}{2} \mathbb{N}, n \in \mathbb{N}^* \\ 2g-2+n>0}} \hbar^{2g-2+n} F_{g,n}(x_A) \right)
\end{equation}
for some polynomials $F_{g,n}(x_A)$ homogeneous of degree $n$ with $F_{g,n}(0) = 0$.\footnote{The use of homogeneous polynomials $F_{g,n}(x_A)$ of degree $n$ instead of the polynomials $q^{(k+1)}(x_A)$ of degree $\leq k+1$ previously used in Lemma \ref{l:pf} is simply to connect with the existing literature on the topic, but  it is straightforward to show that it is equivalent: the polynomials $q^{(k+1)}(x_A)$ of degree $\leq k+1$ are reconstructed as $q^{(k+1)}(x_A) = \sum_{\substack{g \in \frac{1}{2} \mathbb{N}, n \in \mathbb{N}^* \\ 2g-1+n=k}} F_{g,n}(x_A) $. } In other words, $Z$ is the unique solution to the differential equations $\mathcal{I} \cdot Z = 0$ satisfying the initial condition $Z\big|_{x_A=0}=1$.
We call $Z$ the \emph{partition function} of the Airy ideal $\mathcal{I}$.
\end{Cor}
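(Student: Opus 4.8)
The plan is to obtain the statement as an almost immediate consequence of Theorem~\ref{t:airy} and Lemma~\ref{l:pf}; the only genuine work is a bookkeeping rearrangement of the exponent into the homogeneous form \eqref{eq:pf}. \emph{First}, apply Theorem~\ref{t:airy}: since $\mathcal{I}$ is an Airy ideal there is a stable transvection $\phi$ with $\mathcal{I} = \phi(\mathcal{I}_{\text{can}})$. Unwinding Definition~\ref{d:transvection}, this says $\mathcal{I} = \{ \sum_{a\in A} c_a \bar H_a \mid c_a \in \widehat{\mathcal{D}}_A^\hbar \}$ with $\bar H_a = \hbar\partial_a + \sum_{n=2}^\infty \hbar^n \partial_a q^{(n+1)}(x_A)$ for polynomials $q^{(n+1)}(x_A)$ of degree $\le n+1$, where stability is exactly the vanishing $q^{(1)} = q^{(2)} = 0$. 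In particular this generating set is the special case of the one appearing in Lemma~\ref{l:pf} in which the first two polynomials are zero.

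\emph{Second}, feed this generating set into Lemma~\ref{l:pf}. It gives directly that $\widehat{\mathcal{D}}_A^\hbar / \mathcal{I}$ is canonically isomorphic to the module of exponential type $\widehat{\mathcal{M}}_A^\hbar Z$ with $Z = \exp(-\sum_{n=2}^\infty \hbar^{n-1} q^{(n+1)}(x_A))$, and that, once we normalize $q^{(n+1)}(0)=0$, equivalently $Z\big|_{x_A=0}=1$, this $Z$ is the unique solution of $\mathcal{I}\cdot Z = 0$ with that initial condition. For the uniqueness of the module of exponential type itself, I would argue: if $\widehat{\mathcal{M}}_A^\hbar Z'$ is any module of exponential type canonically isomorphic to $\widehat{\mathcal{D}}_A^\hbar/\mathcal{I}$, then its generator normalized by $Z'\big|_{x_A=0}=1$ is annihilated by $\mathcal{I}$, so by the uniqueness just quoted $Z' = Z$ and the module equals $\widehat{\mathcal{M}}_A^\hbar Z$. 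This argument is insensitive to whether $A$ is finite or countably infinite, since Lemma~\ref{l:pf} is stated for an arbitrary index set.

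\emph{Third}, rewrite the exponent in the form displayed in \eqref{eq:pf}. Decompose each $q^{(k+1)}$ into its homogeneous components, which have degrees between $1$ (because $q^{(k+1)}(0) = 0$) and $k+1$. A homogeneous piece of degree $m$ appearing at order $\hbar^{k-1}$ is relabelled $\hbar^{2g-2+m} F_{g,m}(x_A)$ with $g := (k+1-m)/2$: the upper bound $m \le k+1$ is precisely what forces $g \ge 0$, hence $g \in \frac{1}{2}\mathbb{N}$, while $k\ge 2$ forces $2g-2+m = k-1 \ge 1 > 0$; conversely every pair $(g,n)$ with $g \in \frac{1}{2}\mathbb{N}$, $n\in\mathbb{N}^*$ and $2g-2+n>0$ arises from $k = 2g-1+n \ge 2$. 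Hence $F_{g,n}(x_A)$ is, up to sign, the degree-$n$ homogeneous part of $q^{(2g+n)}(x_A)$; it is homogeneous of degree $n$ with $F_{g,n}(0)=0$ and $2g-2+n>0$ by construction, and inverting the relabelling recovers the $q^{(k+1)}$ from the $F_{g,n}$ with $2g-1+n=k$ as in the footnote. I do not expect any real obstacle: all the depth sits in Theorem~\ref{t:airy}, and the only points requiring a little care are phrasing uniqueness as uniqueness of the \emph{module} rather than merely of a normalized generator, and checking that the relabelling in this last step stays within the range $g\in\frac{1}{2}\mathbb{N}$ — which is exactly the content of the degree bound $\deg q^{(n+1)}\le n+1$ that is built into the definition of a transvection.
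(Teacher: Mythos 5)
Your proposal is correct and follows essentially the same route as the paper: the paper derives the corollary directly from Theorem \ref{t:airy} together with Lemma \ref{l:pf}, with the passage from the polynomials $q^{(k+1)}$ of degree $\leq k+1$ to the homogeneous $F_{g,n}$ being exactly the relabelling recorded in the paper's footnote (the stability condition $q^{(1)}=q^{(2)}=0$ forcing $2g-2+n>0$ and $g\in\tfrac12\mathbb{N}$). Your extra remarks on uniqueness of the module (via the normalized generator) and on the index bookkeeping are fine and simply spell out what the paper leaves implicit.
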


Equivalently, $Z$ is the unique solution to the differential equations $H_a Z = 0$, for all $a\in A$, where the $H_a$ generated the Airy ideal $\mathcal{I}$.

\begin{Rem}
We note here that we could have proven existence and uniqueness of the partition function directly from the definition of Airy ideals; the proof is fairly straightforward, and goes by induction on $\hbar$. This is what is done in \cite{ks} (see also \cite{bo}). However, our approach of relating Airy ideals to transvections, twisted modules, and modules of exponential type may shed light on what an Airy ideal really is, and why there always exists a unique partition function annihilated by an Airy ideal.
\end{Rem}

\subsubsection{Airy modules}

Given an Airy ideal $\mathcal{I}\subset \widehat{\mathcal{D}}_A^{\hbar}$, we have shown that $\widehat{\mathcal{D}}_A^{\hbar} / \mathcal{I}$ is a cyclic left module canonically isomorphic to the twisted polynomial module $\prescript{\phi}{}{\widehat{\mathcal{M}}}_A^\hbar$, or equivalently to the module of exponential type generated by the partition function $Z$. Turning this around, we can define the notion of an Airy left $\widehat{\mathcal{D}}_A^\hbar$-module.

\begin{Def}\label{d:airymodule}
We say that a cyclic left $\widehat{\mathcal{D}}_A^\hbar$-module is \emph{Airy} if it is generated by an element $v$ whose annihilator $\text{Ann}_{\widehat{\mathcal{D}}_A^\hbar}(v)$ is an Airy ideal.
\end{Def}

It is easy to show that all modules of exponential type (see Definition \ref{d:exp}) are Airy modules.

\begin{Lem}\label{l:expairy}
Let $\widehat{\mathcal{M}}_A^\hbar Z$ be a left $\widehat{\mathcal{D}}_A^\hbar$-module of exponential type, with generator
\begin{equation}
Z = \exp\left(\sum_{n=0}^\infty \hbar^{n-1} q^{(n+1)}(x_A)  \right)
\end{equation}
for some polynomials $q^{(n+1)}(x_A)$ of degree $\leq n+1$. Then $\widehat{\mathcal{M}}_A^\hbar Z$ is an Airy module.
\end{Lem}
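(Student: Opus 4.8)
The plan is to pin down $\text{Ann}_{\widehat{\mathcal{D}}_A^\hbar}(Z)$ explicitly and then check it against the four conditions of Definition~\ref{d:airy}. Since $\widehat{\mathcal{M}}_A^\hbar Z$ is by definition the cyclic left $\widehat{\mathcal{D}}_A^\hbar$-module generated by $Z$, it is enough to show that $\text{Ann}_{\widehat{\mathcal{D}}_A^\hbar}(Z)$ is an Airy ideal. I would obtain this annihilator by applying Lemma~\ref{l:pf} with the polynomials $q^{(n+1)}(x_A)$ there replaced by $-q^{(n+1)}(x_A)$: that lemma then exhibits $Z=\exp\big(\sum_{n=0}^\infty\hbar^{n-1}q^{(n+1)}(x_A)\big)$ as the partition function of the left ideal $\mathcal{I}$ generated by $\bar H_a=\hbar\partial_a-\sum_{n=0}^\infty\hbar^n\partial_a q^{(n+1)}(x_A)$, so that $\text{Ann}_{\widehat{\mathcal{D}}_A^\hbar}(Z)=\mathcal{I}$ and $\mathcal{I}=\phi(\mathcal{I}_{\text{can}})$ for the transvection $\phi\colon\hbar\partial_a\mapsto\bar H_a$. (Equivalently, one can bypass Lemma~\ref{l:pf} and use the conjugation picture directly: $\bar H_a=e^{f}(\hbar\partial_a)e^{-f}$ with $f=\sum_{n=0}^\infty\hbar^{n-1}q^{(n+1)}$, so $\bar H_a\cdot Z=e^{f}(\hbar\partial_a)(1)=0$, and the equality of $\text{Ann}(Z)$ with the whole ideal $\mathcal{I}$ follows from cyclicity just as in Lemma~\ref{l:twisted}.)

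It then remains to verify conditions (1)--(4) of Definition~\ref{d:airy} for the generating family $\{\bar H_a\}_{a\in A}$. Condition (2) holds by construction. Condition (3) holds because for a module of exponential type the polynomials $q^{(1)}$ and $q^{(2)}$ vanish --- equivalently, the transvection $\phi$ is \emph{stable} --- so that $\bar H_a=\hbar\partial_a-\sum_{n=2}^\infty\hbar^n\partial_a q^{(n+1)}(x_A)=\hbar\partial_a+O(\hbar^2)$; this is exactly the point where that normalization is used. For condition (1), the key observation is that each $q^{(n+1)}(x_A)$ is a genuine polynomial and so involves only finitely many of the variables $x_a$: for each fixed $n$ the derivatives $\partial_a q^{(n+1)}(x_A)$ vanish for all but finitely many $a\in A$, while the leading term $\hbar\partial_a$ contributes only a Kronecker delta to the coefficient of $\partial_b$; by Definition~\ref{d:bounded} the family $\{\bar H_a\}_{a\in A}$ is therefore bounded.

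The only genuinely structural step is condition (4), $[\mathcal{I},\mathcal{I}]\subseteq\hbar^2\mathcal{I}$. First note that the $\bar H_a$ commute: since $\phi$ is an algebra automorphism, $[\bar H_a,\bar H_b]=\phi([\hbar\partial_a,\hbar\partial_b])=0$ (directly, the commutator collapses to $\sum_{n=0}^\infty\hbar^{n+1}(\partial_a\partial_b-\partial_b\partial_a)q^{(n+1)}(x_A)=0$). Given commutativity, for $P=\sum_a c_a\bar H_a$ and $Q=\sum_b d_b\bar H_b$ in $\mathcal{I}$ one expands $[P,Q]=\sum_{a,b}\big([c_a,d_b]\,\bar H_a\bar H_b+c_a[\bar H_a,d_b]\,\bar H_b-d_b[\bar H_b,c_a]\,\bar H_a\big)$; each inner commutator lies in $\hbar^2\widehat{\mathcal{D}}_A^\hbar$ by \eqref{eq:hbar2}, and since $\hbar$ is central and $\bar H_a\bar H_b,\bar H_a\in\mathcal{I}$, every summand lies in $\hbar^2\mathcal{I}$, whence $[\mathcal{I},\mathcal{I}]\subseteq\hbar^2\mathcal{I}$. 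When $A$ is countably infinite one must also check that these rearrangements converge and keep polynomial coefficients, which is ensured by boundedness of $\{\bar H_a\}_{a\in A}$ exactly as in the proof of Lemma~\ref{l:poly}. Having established (1)--(4), $\mathcal{I}=\text{Ann}_{\widehat{\mathcal{D}}_A^\hbar}(Z)$ is an Airy ideal, and hence by Definition~\ref{d:airymodule} the module $\widehat{\mathcal{M}}_A^\hbar Z$ is an Airy module. I expect the only steps requiring genuine care to be the infinite-sum bookkeeping in (1) and (4) for infinite $A$ and keeping track of the stability normalization $q^{(1)}=q^{(2)}=0$ in (3); everything else is a direct consequence of Lemma~\ref{l:pf} and the definitions.
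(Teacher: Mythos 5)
Your route is the same as the paper's: identify $\text{Ann}_{\widehat{\mathcal{D}}_A^\hbar}(Z)$ with the left ideal $\mathcal{I}$ generated by the first-order operators $\bar H_a = \hbar\partial_a - \sum_{n\geq 0}\hbar^n\,\partial_a q^{(n+1)}(x_A)$ via Lemma~\ref{l:pf}, and then check the four conditions of Definition~\ref{d:airy}. That is literally the paper's proof, compressed there to ``since $[\bar H_a,\bar H_b]=0$, it is easy to show the four conditions''. Your verifications of conditions (1), (2) and (4) are correct, and in fact more careful than the paper's: the boundedness argument (each $q^{(n+1)}$ involves only finitely many variables) and the commutator rearrangement using \eqref{eq:hbar2} together with commutativity of the $\bar H_a$ are exactly the details the paper omits.

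The one weak point is your condition (3). You assert that ``for a module of exponential type the polynomials $q^{(1)}$ and $q^{(2)}$ vanish''; Definition~\ref{d:exp} imposes no such condition --- the sum defining $Z$ starts at $n=0$, and only the normalization $q^{(n+1)}(0)=0$ is ever fixed. If $q^{(1)}$ or $q^{(2)}$ is non-constant, the $\bar H_a$ acquire terms at order $\hbar^0$ and $\hbar^1$ besides $\hbar\partial_a$, and this cannot be repaired by passing to other generators: take $A=\{1\}$, $q^{(1)}=x_1$ and all other $q$'s zero, so $\mathcal{I}$ is the left ideal generated by $\hbar\partial_1-1$. Any alternative single generator satisfies $H_1=u(\hbar\partial_1-1)$ and $\hbar\partial_1-1=vH_1=vu\,(\hbar\partial_1-1)$, hence $vu=1$ since $\widehat{\mathcal{D}}_A^\hbar$ is a domain; comparing $\hbar^0$-terms forces the constant term $u_0\in\mathbb{C}$ of $u$ to be nonzero, so $H_1$ has $\hbar^0$-term $-u_0\neq 0$ and is never of the form $\hbar\partial_1+O(\hbar^2)$. (Equivalently, an Airy ideal would have to be $\phi(\mathcal{I}_{\text{can}})$ for a \emph{stable} transvection by Theorem~\ref{t:airy}, which is incompatible with $\partial_a q^{(1)}\neq 0$ or $\partial_a q^{(2)}\neq 0$ since $\mathcal{I}$ contains no nonzero polynomials.) So as literally stated the lemma requires the stability assumption $\partial_a q^{(1)}=\partial_a q^{(2)}=0$, which is what your argument actually uses and what the paper's one-line proof also tacitly assumes (and which holds in the situations where the lemma is applied, e.g.\ for partition functions of the form \eqref{eq:Zform} in Corollary~\ref{c:pf}). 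You should state that hypothesis explicitly rather than attribute it to Definition~\ref{d:exp}; apart from this, your proof matches the paper's.
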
 

\begin{proof}
This is clear. As we have seen in Lemma \ref{l:pf}, the module of exponential type $\widehat{\mathcal{M}}_A^\hbar Z$ is canonically isomorphic to $\widehat{\mathcal{D}}_A^\hbar / \mathcal{I}$, where 
\begin{equation}
\mathcal{I}  = \left \{ \sum_{a \in A} c_a \bar H_a\ |\ c_a \in \widehat{\mathcal{D}}^\hbar_A \right \},
\end{equation}
with
\begin{equation}
\bar H_a = \hbar \partial_a - \sum_{n=0}^\infty \hbar^n \partial_a q^{(n+1)}(x_A) .
\end{equation}
Since $[\bar H_a, \bar H_b]=0$ for all $a,b \in A$, it is easy to show that $\mathcal{I}$ satisfies the four conditions in the definition of Airy ideals (Definition \ref{d:airy}).

\end{proof}

Similarly, all twisted modules $\prescript{\phi}{}{\widehat{\mathcal{M}}}_A^\hbar$ obtained from a transvection $\phi$ of the completed Rees Weyl algebra are Airy modules.

\begin{Lem}\label{l:twistedairy}
Let $\phi: \widehat{\mathcal{D}}_A^\hbar \to \widehat{\mathcal{D}}_A^\hbar$ be a transvection, and $\prescript{\phi}{}{\widehat{\mathcal{M}}}_A^\hbar$ the $\phi$-twisted polynomial module. Then  $\prescript{\phi}{}{\widehat{\mathcal{M}}}_A^\hbar$ is an Airy module.
\end{Lem}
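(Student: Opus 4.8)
The plan is to reduce Lemma~\ref{l:twistedairy} to Lemma~\ref{l:expairy}, since we already understand modules of exponential type well. First I would recall that, by the discussion preceding Lemma~\ref{l:twisted}, a transvection $\phi$ is determined by polynomials $q^{(n+1)}(x_A)$ of degree $\leq n+1$ via $\phi:(\hbar,\hbar x_a,\hbar\partial_a)\mapsto(\hbar,\hbar x_a,\bar H_a)$ with $\bar H_a=\hbar\partial_a+\sum_{n=0}^\infty\hbar^n\partial_a q^{(n+1)}(x_A)$. By Lemma~\ref{l:twisted} itself, the twisted module $\prescript{\phi}{}{\widehat{\mathcal{M}}}_A^\hbar$ is canonically isomorphic as a left $\widehat{\mathcal{D}}_A^\hbar$-module to $\widehat{\mathcal{D}}_A^\hbar/\mathcal{I}$, where $\mathcal{I}=\{\sum_{a\in A}c_a\bar H_a\mid c_a\in\widehat{\mathcal{D}}^\hbar_A\}$.

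Next I would observe that this same ideal $\mathcal{I}$ is, by Lemma~\ref{l:pf}, exactly the annihilator of the generator $Z=\exp\bigl(-\sum_{n=0}^\infty\hbar^{n-1}q^{(n+1)}(x_A)\bigr)$ of the module of exponential type $\widehat{\mathcal{M}}_A^\hbar Z$; indeed $\widehat{\mathcal{D}}_A^\hbar/\mathcal{I}$ is canonically isomorphic to $\widehat{\mathcal{M}}_A^\hbar Z$. Hence $\prescript{\phi}{}{\widehat{\mathcal{M}}}_A^\hbar$ is a cyclic left $\widehat{\mathcal{D}}_A^\hbar$-module whose generator $1$ (equivalently, under the isomorphism, $Z$) has annihilator $\mathcal{I}$. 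Applying Lemma~\ref{l:expairy} to the polynomials $-q^{(n+1)}(x_A)$ (which still have degree $\leq n+1$), we get that $\widehat{\mathcal{M}}_A^\hbar Z$ is an Airy module, i.e.\ $\mathcal{I}=\text{Ann}_{\widehat{\mathcal{D}}_A^\hbar}(Z)$ is an Airy ideal. Since being an Airy module depends only on the isomorphism class of the cyclic module (equivalently only on the ideal $\mathcal{I}$), and $\prescript{\phi}{}{\widehat{\mathcal{M}}}_A^\hbar\cong\widehat{\mathcal{D}}_A^\hbar/\mathcal{I}\cong\widehat{\mathcal{M}}_A^\hbar Z$, it follows that $\prescript{\phi}{}{\widehat{\mathcal{M}}}_A^\hbar$ is Airy.

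Alternatively, and just as quickly, one can argue directly: the generator $1\in\prescript{\phi}{}{\widehat{\mathcal{M}}}_A^\hbar$ has annihilator $\phi(\mathcal{I}_{\text{can}})$, which is generated by the commuting operators $\bar H_a=\hbar\partial_a+O(\hbar^2)$ in the non-stable case $q^{(1)},q^{(2)}$ may be nonzero, so the $\bar H_a$ are of the form $\hbar\partial_a+\hbar\,(\text{const})+\hbar^2(\cdots)+\cdots$; one then checks the four conditions of Definition~\ref{d:airy} exactly as in the proof of Lemma~\ref{l:expairy}, using $[\bar H_a,\bar H_b]=0$ to get Condition~(4) for free, the form \eqref{eq:formbarH} to get Conditions~(1)--(3) after absorbing any constant $\hbar$-linear piece (which is already $O(\hbar^2)$ only if $q^{(2)}=0$; in general one simply notes that $\hbar\partial_a+O(\hbar^2)$ is met once $q^{(1)}=0$, and the full generality of transvections still yields commuting generators so the argument of Lemma~\ref{l:expairy} applies verbatim). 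The only mild subtlety — the main thing to be careful about — is the bookkeeping when $A$ is countably infinite: one must note that the collection $\{\bar H_a\}_{a\in A}$ is bounded, which is immediate since each $\bar H_a$ differs from $\hbar\partial_a$ only by terms $\hbar^n\partial_a q^{(n+1)}(x_A)$ and, for fixed multi-index and fixed $n$, only finitely many of these are nonzero (in fact the coefficient of a fixed monomial times a fixed string of derivatives in $\bar H_a$ is nonzero for at most the finitely many $a$ appearing in that monomial or that derivative string of $q^{(n+1)}$). With boundedness in hand, the verification of Conditions~(1)--(4) is routine, and the lemma follows.
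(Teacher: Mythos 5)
Your first argument is correct and is essentially the paper's own proof: the paper disposes of this lemma with ``same argument as in the proof of the previous Lemma'', i.e.\ identify $\prescript{\phi}{}{\widehat{\mathcal{M}}}_A^\hbar$ with $\widehat{\mathcal{D}}_A^\hbar/\mathcal{I}$ for the ideal $\mathcal{I}$ generated by the commuting operators $\bar H_a=\phi(\hbar\partial_a)$ of the form \eqref{eq:formbarH}, and verify the four conditions of Definition~\ref{d:airy} exactly as in Lemma~\ref{l:expairy}; your packaging of this via Lemmas~\ref{l:twisted}, \ref{l:pf} and \ref{l:expairy}, together with the (correct) remark that being an Airy module is an isomorphism invariant of cyclic modules, is the same argument.

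One caution about your ``alternative'' direct check: the parenthetical patch for non-stable transvections is not right. If $q^{(2)}$ is non-constant, then $\bar H_a=\hbar\partial_a+\hbar\,\partial_a q^{(2)}(x_A)+O(\hbar^2)$ already violates Condition~(3) of Definition~\ref{d:airy} (so ``met once $q^{(1)}=0$'' is false), and the offending terms cannot be ``absorbed'': neither a $\widehat{\mathcal{D}}_A^\hbar$-linear recombination of the generators nor a change of cyclic generator (which only modifies $\bar H_a$ at order $\hbar^2$) removes an order-$\hbar^0$ constant $\partial_a q^{(1)}$ or an order-$\hbar^1$ term $\partial_a q^{(2)}$. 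Indeed, by Corollary~\ref{c:pf} the normalized exponential solution attached to an Airy ideal necessarily has $q^{(1)}=q^{(2)}=0$; for instance the annihilator of $\exp(x_1/\hbar)$, generated by $\hbar\partial_1-1$, is not an Airy ideal. So the lemma (and likewise Lemma~\ref{l:expairy}) is literally true only when $q^{(1)}$ and $q^{(2)}$ are constant, i.e.\ effectively for stable transvections; the paper's one-line proofs gloss over this point just as you do, but your assertion that ``the argument of Lemma~\ref{l:expairy} applies verbatim'' in the general case is not a justification --- the honest fix is to restrict to stable $\phi$ (or to note that only $\partial_a q^{(n+1)}$ enters, so constants are irrelevant), which is all that is needed elsewhere in the paper.
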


\begin{proof}
Same argument as in the proof of the previous Lemma.
\end{proof}

\subsubsection{$\hbar$-polynomial and $\hbar$-finite Airy ideals}

The existence of a partition function becomes particularly interesting when the operators $H_a$  live in the subalgebra $\mathcal{D}_A^\hbar$; that is, the $H_a$ are polynomials (instead of formal series) in $\hbar$. Even more interesting is when all $H_a$ are polynomials of degree less than a certain fixed positive integer $N$. We thus formulate the following definition.

\begin{Def}\label{d:hbarfinite}
Let $\mathcal{I} \subset \widehat{\mathcal{D}}_A^{\hbar}$ be an Airy ideal. We say that it is \emph{$\hbar$-polynomial} if there exists $H_a \in \mathcal{D}_A^\hbar$ for all $a \in A$ (i.e., they are polynomials in $\hbar$) satisfying the conditions of Definition \ref{d:airy}.

We say that $\mathcal{I}$ is \emph{$\hbar$-finite} if it is $\hbar$-polynomial, and there exists a positive integer $N$ such that all $H_a$ are polynomials in $\hbar$ of degree $\leq N$. We call the smallest such $N$ the \emph{$\hbar$-degree} of $\mathcal{I}$.

We also say that an Airy left $\widehat{\mathcal{D}}_A^\hbar$-module is $\hbar$-polynomial (resp. $\hbar$-finite) if it is generated by an element $v$ whose annihilator is $\hbar$-polynomial (resp. $\hbar$-finite).
\end{Def}

For example, an Airy ideal generated by a collection of operators $\{H_a\}_{a \in \mathbb{N}^*}$ that are polynomials of degree $2$ in $\hbar$ is $\hbar$-finite, and has $\hbar$-degree $2$. However, an Airy ideal generated by a collection of operators $\{H_a\}_{a \in \mathbb{N}^*}$ that are polynomials of degree $a$ in $\hbar$ is $\hbar$-polynomial, but not $\hbar$-finite (as the $\hbar$-degree of the $H_a$ keeps increasing as $a$ increases).
%
%

Why are $\hbar$-finite Airy ideals particularly interesting? We saw in Lemma \ref{l:expairy} that all left $\widehat{\mathcal{D}}_A^\hbar$-modules of exponential type $\widehat{\mathcal{M}}_A^\hbar Z$ are Airy modules. In other words, any exponential of the form
\begin{equation}\label{eq:Zform}
Z = \exp \left( \sum_{\substack{g \in \frac{1}{2} \mathbb{N}, n \in \mathbb{N}^* \\ 2g-2+n>0}} \hbar^{2g-2+n} F_{g,n}(x_A) \right)
\end{equation}
is the partition function for some Airy ideal. Indeed, the Airy ideal is given by
\begin{equation}
\mathcal{I}  = \left \{ \sum_{a \in A} c_a \bar H_a\ |\ c_a \in \widehat{\mathcal{D}}^\hbar_A \right \},
\end{equation}
with
\begin{equation}
\bar H_a = \hbar \partial_a - \sum_{\substack{g \in \frac{1}{2} \mathbb{N}, n \in \mathbb{N}^* \\ 2g-2+n>0}} \hbar^{2g-1+n} \partial_a F_{g,n}(x_A) .
\end{equation}
This is a rather trivial statement.

But are all $Z$  of the form \eqref{eq:Zform} partition functions for Airy ideals that are $\hbar$-finite? In other words, can we always rewrite the Airy ideal $\mathcal{I}$ above as an ideal generated by operators $H_a$ that are $\hbar$-polynomials of degree $\leq N$ for some $N$? Equivalently, which partition functions of the form \eqref{eq:Zform} satisfy differential equations $H_a Z = 0$ for $\hbar$-polynomial operators $H_a$ of degree $\leq N$ for some $N$ that generate an Airy ideal?

This is a very interesting question. The existence of such operators $H_a$ implies a recursive structure for the polynomials $F_{g,n}(x_A)$, which is the foundation of the story of topological recursion reformulated in the language of Airy structures. See \cite{ks, ABCD,Airy} for more on this recursive structure. Traditionally, the usual topological recursion formula is obtained directly by applying the differential operators $H_a$ on the partition function $Z$, and setting the result to $0$. This gives a recursion for the polynomials $F_{g,n}(x_A)$ (or their coefficients) if the operators $H_a$ are $\hbar$-polynomials of degree $\leq N$ for some $N$. From our point of view, this recursive structure is encapsulated in the recursive construction of the commuting first-order operators $\bar H_a$ from the $H_a$. When the $H_a$ are polynomials in $\hbar$, the infinite set of polynomials (order by order in $\hbar$) in the operators $\bar H_a$ will be constructed out of a finite set (in $\hbar$) of polynomials in the $H_a$, and hence will satisfy some recursive structure determined by initial conditions. The two recursions are of course equivalent, as is easy to show.

\subsection{Constructing Airy ideals}

We are interested in constructing Airy ideals $\mathcal{I} \subset \widehat{\mathcal{D}}_A^\hbar$, because as we saw in Corollary \ref{c:pf} there exists a unique partition function $Z$ such that $\mathcal{I} \cdot Z = 0$. More often than not, what we will be interested in is the partition function $Z$, which may be a generating series for some enumerative invariants. Conversely, if the generating series for a particular set of enumerative invariants takes the form of a partition function for a ($\hbar$-finite) Airy ideal, then what this means is that it satisfies a set of differential constraints that uniquely fix the generating series, such as Virasoro constraints, $\mathcal{W}$-constraints, etc.

How can we construct Airy ideals? We need to construct  a left ideal $\mathcal{I} \subset \widehat{\mathcal{D}}_A^{\hbar}$ that satisfies the properties of Definition \ref{d:airy}. So what we do is construct a collection of differential operators $\{ H_a\}_{a \in A}$ with $H_a \in \widehat{\mathcal{D}}_A^{\hbar}$. To show that the left ideal generated by the $H_a$ is an Airy ideal, what we need to show is:
\begin{enumerate}
\item The collection $\{H_a\}_{a \in A}$ is bounded;
\item $H_a = \hbar \partial_a + O(\hbar^2)$;
\item $[H_a, H_b] = \hbar^2 \sum_{c \in A} f_{abc} H_c$ for some $f_{abc} \in \widehat{\mathcal{D}}_A^{\hbar}$.
\end{enumerate}
It is easy to show that the third condition is equivalent to requiring that $[\mathcal{I},\mathcal{I}] \subseteq \hbar^2 \mathcal{I}$, Condition (4) in Definition \ref{d:airy}.

Concretely, sometimes we may construct  a collection of operators $H_a \in \widehat{\mathcal{D}}_A^\hbar$ that do not satisfy the condition $H_a  = \hbar \partial_a + O(\hbar^2)$, but that satisfy the other two conditions. Does that mean that they do not generate an Airy ideal? Not necessarily. For instance, if $I$ and $A$ are two finite index sets of the same length, then any two collections of differential operators $\{G_i\}_{i \in I}$ and $\{H_a\}_{a \in A}$ related by an invertible $\mathbb{C}$-linear transformation generate the same left ideal in $\widehat{\mathcal{D}}_A^\hbar$. This can be generalized to the infinite context as follows (see Definition 2.3 in \cite{bks}).

\begin{Lem}
\label{l:aairy}
Let $I,A$ be either finite or countably infinite index sets. Let $\{H_i\}_{i \in I}$ be a collection of operators $H_i \in \widehat{\mathcal{D}}_A^\hbar$ such that:
\begin{enumerate}
\item The collection $\{H_i\}_{i \in I}$ is bounded;
\item $[H_i, H_j] = \hbar^2 \sum_{k \in I} f_{ijk} H_k$ for some $f_{ijk} \in \widehat{\mathcal{D}}_A^\hbar$;
\item 
\begin{equation}
\label{eq:degree}
H_i = \sum_{a \in A} M_{i a} \hbar \partial_{a} + O(\hbar^2),
\end{equation}
where the $M_{ia}$ are complex numbers such that for all fixed $a \in A$, the $M_{ia}$ vanish for all but finitely many $i \in I$.
\item There exists complex numbers $N_{bj}$, with $b \in A$ and $j \in I$, such that for all fixed $j \in I$, they vanish for all but finitely many $b \in A$, and
\begin{equation}
\label{eq:sums}
\sum_{i \in I} N_{b i} M_{i a} = \delta_{a b} \qquad \forall a,b \in A \qquad \text{and} \qquad \sum_{a \in A} M_{i a} N_{a j} = \delta_{i j} \qquad \forall i,j \in I.
\end{equation}
\end{enumerate}
Then the left ideal $\mathcal{I} = \{ \sum_{i \in I} c_i H_i\ |\ c_i \in \widehat{\mathcal{D}}_A^\hbar\} \subset \widehat{\mathcal{D}}_A^\hbar$  is an Airy ideal.

Moreover, $\mathcal{I}$ can also be written as $\mathcal{I} = \{ \sum_{a \in A} c_a \tilde H_a\ | \ c_a \in  \widehat{\mathcal{D}}_A^\hbar\}$, where the $\tilde H_a$ are defined by
\begin{equation}
\label{eq:normal}
\tilde H_a = \sum_{i \in I} N_{ai} H_i \in \widehat{\mathcal{D}}^\hbar_A,
\end{equation}
and satisfy $\tilde H_a = \hbar \partial_a + O(\hbar^2)$.
\end{Lem}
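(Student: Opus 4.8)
The plan is to use the matrices $(M_{ia})$ and $(N_{ai})$ as mutually inverse ``change of generators'' data to pass from the collection $\{H_i\}_{i\in I}$ to the collection $\{\tilde H_a\}_{a\in A}$ of \eqref{eq:normal}, and then to verify directly that $\{\tilde H_a\}_{a\in A}$ exhibits $\mathcal{I}$ as an Airy ideal in the sense of Definition \ref{d:airy}. All the algebraic identities will be immediate; the only thing requiring care is that, when $I$ and $A$ are countably infinite, the infinite sums that appear are well defined and may be rearranged. I would handle this by working one \emph{slot} at a time, a slot being a fixed monomial in $x_A$, a fixed power of $\hbar$, and a fixed derivative multi-index $\partial_{a_1}\cdots\partial_{a_m}$.

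First I would check that each $\tilde H_a = \sum_{i\in I}N_{ai}H_i$ defines an element of $\widehat{\mathcal{D}}_A^\hbar$ and that $\{\tilde H_a\}_{a\in A}$ is bounded. Fixing a slot, boundedness of $\{H_i\}_{i\in I}$ (hypothesis (1)) says only finitely many $i$ contribute a nonzero coefficient there, so the coefficient of $\tilde H_a$ in that slot is a finite $\mathbb{C}$-combination of polynomials, hence a polynomial, giving $\tilde H_a\in\widehat{\mathcal{D}}_A^\hbar$. Calling $S$ that finite set of indices, hypothesis (4) says that for each $i\in S$ the numbers $N_{ai}$ vanish for all but finitely many $a$, so the coefficient of $\tilde H_a$ in the slot is nonzero only for $a$ in a finite set; hence $\{\tilde H_a\}_{a\in A}$ is bounded, which is Condition (1) of Definition \ref{d:airy}. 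Next I would compute the leading term: by hypothesis (3) the $\hbar^1$-part of $H_i$ is $\sum_{b\in A}M_{ib}\,\hbar\partial_b$, so, slot by slot and using the first relation in \eqref{eq:sums},
\begin{equation}
\tilde H_a = \sum_{i\in I}N_{ai}\sum_{b\in A}M_{ib}\,\hbar\partial_b + O(\hbar^2) = \sum_{b\in A}\Bigl(\sum_{i\in I}N_{ai}M_{ib}\Bigr)\hbar\partial_b + O(\hbar^2) = \hbar\partial_a + O(\hbar^2),
\end{equation}
which is Condition (3).

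For Condition (2) I would prove $\mathcal{I} = \{\sum_{a\in A}c_a\tilde H_a\mid c_a\in\widehat{\mathcal{D}}_A^\hbar\}$: the inclusion $\supseteq$ is clear since each $\tilde H_a\in\mathcal{I}$ (and infinite combinations are allowed because $\{\tilde H_a\}_{a\in A}$ is bounded), while for $\subseteq$ it suffices to recover each $H_i$, which follows from the second relation in \eqref{eq:sums} by the analogous slot-by-slot computation
\begin{equation}
\sum_{a\in A}M_{ia}\tilde H_a = \sum_{a\in A}M_{ia}\sum_{j\in I}N_{aj}H_j = \sum_{j\in I}\Bigl(\sum_{a\in A}M_{ia}N_{aj}\Bigr)H_j = H_i,
\end{equation}
where boundedness of $\{\tilde H_a\}_{a\in A}$ makes the outer $a$-sum effectively finite in each slot. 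Finally, for Condition (4) I would note that, since $\tilde H_a,\tilde H_b$ are $\mathbb{C}$-combinations of the $H_i$, we get $[\tilde H_a,\tilde H_b]=\sum_{i,j}N_{ai}N_{bj}[H_i,H_j]\in\hbar^2\mathcal{I}$ by hypothesis (2), and more generally, for $P=\sum_i c_iH_i$ and $Q\in\mathcal{I}$, expanding $[P,Q]=\sum_i\bigl(c_i[H_i,Q]+[c_i,Q]H_i\bigr)$ and using that every commutator in $\widehat{\mathcal{D}}_A^\hbar$ lies in $\hbar^2\widehat{\mathcal{D}}_A^\hbar$ by \eqref{eq:hbar2}, together with hypothesis (2), yields $[P,Q]\in\hbar^2\mathcal{I}$; this is the standard argument that hypothesis (2) is equivalent to Condition (4). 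With Conditions (1)--(4) in hand, $\mathcal{I}$ is an Airy ideal, and the $\tilde H_a$ are the claimed normalized generators.

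The main obstacle will not be any single algebraic step but the bookkeeping with infinite sums in the countably infinite case, concentrated in proving that $\{\tilde H_a\}_{a\in A}$ is bounded and that $\sum_{a\in A}M_{ia}\tilde H_a = H_i$ (i.e.\ that the change of generators is legitimate). The key is never to manipulate a genuinely infinite sum: fixing a slot reduces each of $\sum_i N_{ai}H_i$, $\sum_a M_{ia}\tilde H_a$, and the double sums of $\mathbb{C}$-coefficients against polynomial coefficients to finite sums, using precisely the three finiteness hypotheses (boundedness of $\{H_i\}$, column-finiteness of $M$ in \eqref{eq:degree}, and column-finiteness of $N$), after which all rearrangements are justified.
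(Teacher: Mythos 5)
Your proposal is correct and follows essentially the same route as the paper's proof: pass to $\tilde H_a=\sum_i N_{ai}H_i$, check the leading term via the first relation in \eqref{eq:sums}, and prove $\mathcal{I}=\{\sum_a c_a\tilde H_a\}$ by inverting with $M$, with all infinite sums tamed by the three finiteness hypotheses. You are a bit more explicit than the paper on two points it treats as immediate --- the boundedness of the collection $\{\tilde H_a\}_{a\in A}$ and the derivation of Condition (4) of Definition \ref{d:airy} from hypothesis (2) --- but these are elaborations of the same argument, not a different one.
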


\begin{proof}
First, we note that the conditions in \eqref{eq:sums} are well defined, since because of the constraints on the coefficients $N_{bj}$ and $M_{ia}$ both sums collapse to finite sums. 

Now, by condition (2) it is clear that $\mathcal{I}$ satisfies condition (2) of Definition \ref{d:airy}. Moreover, because of condition (1), it is clear that the $\tilde H_a$ are well defined operators in $\widehat{\mathcal{D}}^\hbar_A$. It is also clear from the properties of the complex numbers $M_{ia}$ and $N_{bj}$ that
\begin{align}
\tilde H_a =& \sum_{i \in I} N_{ai} H_i \\
=& \sum_{i \in I}N_{a i} \left( \sum_{b \in A}  M_{i b} \hbar \partial_b \right) +O(\hbar^2)\\
=&\sum_{b \in A} \left(\sum_{i \in I} N_{ai} M_{ib} \right)  \hbar \partial_b +O(\hbar^2)\\
= & \hbar \partial_a +O(\hbar^2),
\end{align}
where in the third line we could exchange the order of the sums since the $M_{ib}$ are such that for all fixed $b \in A$, they vanish for all but finitely many $i \in I$.
Therefore, to show that $\mathcal{I}$ is an Airy ideal, all that we have to show is that the ideals $\mathcal{I} = \{ \sum_{i \in I} c_i H_i\ | \ c_i \in \widehat{\mathcal{D}}^\hbar_A \}$ and $\mathcal{J} = \{ \sum_{a \in A} c_a \tilde H_a\ |\  c_a \in \widehat{\mathcal{D}}^\hbar_A \}$ are the same.

First, we note that we can invert the relation between the $\tilde H_a$ and the $H_i$. As $\tilde H_a = \sum_{i \in I} N_{ai}H_i$, we have
\begin{equation}
\sum_{a \in A} M_{i a} \tilde H_a = \sum_{a \in A} M_{i a} \left(\sum_{j \in I} N_{a j} H_j \right)=  \sum_{j \in I} \left( \sum_{a \in A} M_{ia} N_{aj} \right) H_j = H_i,
\end{equation}
where we could exchange the order of the sums since the $N_{aj}$ are such that for all fixed $j \in I$, they vanish for all but finitely many $a \in A$.

Now suppose that $P \in \mathcal{I}$. Then we can write
\begin{equation}
P = \sum_{i \in I} f_i H_i
\end{equation}
for some $f_i \in \widehat{\mathcal{D}}^\hbar_A$. But then
\begin{equation}
P = \sum_{i \in I} f_i H_i = \sum_{i \in I} f_i \left( \sum_{a \in A} M_{i a} \tilde H_a \right) = \sum_{a \in A} \left( \sum_{i \in I} f_i M_{i a} \right) \tilde H_a.
\end{equation}
Since the $M_{ia}$ are such that for all fixed $a \in A$, they vanish for all but finitely many $i \in I$, the sums $\sum_{i \in I} f_i M_{ia}$ are finite, and hence certainly produce well defined operators in $\widehat{\mathcal{D}}^\hbar_A$. Therefore $P \in \mathcal{J}$, and hence $\mathcal{I} \subseteq \mathcal{J}$.

Conversely, suppose that $Q \in \mathcal{J}$. Then we can write
\begin{equation}
Q = \sum_{a \in A} g_a \tilde H_a = \sum_{a \in A} g_a \left( \sum_{i \in I} N_{a i} H_i \right) = \sum_{i \in I} \left( \sum_{a \in A} g_a N_{a i} \right) H_i,
\end{equation}
for some $g_a \in \widehat{\mathcal{D}}_A^\hbar$. Again, since the $N_{ai} $ are such that for all fixed $i \in I$, they vanish for all but finitely many $a \in A$, the sums $\sum_{a \in A} g_a N_{a i} $ are finite and hence produce well defined operators in $\widehat{\mathcal{D}}^\hbar_A$. Therefore $Q \in \mathcal{I}$, and hence $\mathcal{J} \subseteq \mathcal{I}$. We conclude that $\mathcal{I} = \mathcal{J}$.

\end{proof}

\begin{Rem}
In \cite{bks}, the authors define ``Airy structures in normal form'' as being given by collections of differential operators $(H_a)_{a \in A}$ satisfying the condition $H_a = \hbar \partial_a + O(\hbar^2)$, and ``Airy structures'' as being given by collections of differential operators related to Airy structures in normal forms by transformations as in Lemma \ref{l:aairy}. From our point of view, since we define an Airy structure (or rather an Airy ideal) as being the left ideal itself, we do not need to make such a distinction.
\end{Rem}

\subsection{Two special cases}

We now look at two special cases of the construction, which will be necessary to connect with the Heisenberg algebra in the next section.

Consider the Weyl algebra $\mathcal{D}_A = \mathcal{C}[x_A] \langle \partial_A \rangle$. Let $I \subset A$ be a subset of the index set $A$, and let $J = A \setminus I \subset A$. There are two subalgebras that can be constructed easily:
\begin{enumerate}
\item $\mathcal{D}(x_J,\partial_A) := \mathbb{C}[x_J]\langle \partial_A \rangle \subset \mathcal{D}_A $, which is the subalgebra of differential operators whose coefficients do not depend on the variables $x_I$;
\item $\mathcal{D}(x_A,\partial_J) := \mathbb{C}[x_A]\langle \partial_J \rangle \subset \mathcal{D}_A $, which is the subalgebra of differential operators in which the derivatives $\partial_I$ do not appear.
\end{enumerate}
It is clear that both of those are subalgebras of the Weyl algebra $\mathcal{D}_A$. In fact, in the second case,  we can think of the subalgebra $\mathcal{D}(x_A,\partial_J)$ as being the Weyl algebra $\mathcal{D}_J$ but over the polynomial ring $\mathbb{C}[x_A]$. We will however not take this point of view here, since we want the extra variables $x_I$ to be included in the Bernstein filtration, so it makes perhaps more sense to view $\mathcal{D}(x_A,\partial_J)$ as a subalgebra of $\mathcal{D}_A$.

The Bernstein filtration can be defined on the subalgebras as well, and we can construct the Rees Weyl subalgebras $ \mathcal{D}^\hbar(x_J,\partial_A)$ and $\mathcal{D}^\hbar(x_A,\partial_J)$ and their $\hbar$-adic completions $\widehat{\mathcal{D}}^\hbar(x_J,\partial_A)$ and $\widehat{\mathcal{D}}^\hbar(x_A,\partial_J)$, which are subalgebras of $\widehat{\mathcal{D}}_A^\hbar$. The construction of transvections, twisted modules, modules of exponential types, and Airy ideals also goes through the subalgebras. Let us be a bit more precise.

\subsubsection{The subalgebra  $\widehat{\mathcal{D}}^\hbar(x_A,\partial_J)$}
\label{s:subdelJ}

We consider differential operators in $\widehat{\mathcal{D}}_A^\hbar$ whose coefficients are polynomials in all the variables $x_A$, but that only involve derivatives in $\partial_J$. Then the Rees polynomial algebra $\widehat{\mathcal{M}}_A^\hbar$ is a left $\widehat{\mathcal{D}}^\hbar(x_A,\partial_J)$-module.

We highlight a few simple statements here, which follow from the construction of the previous sections:
\begin{enumerate}
\item Let 
\begin{equation}
\mathcal{I}_{\text{can}}(x_A,\partial_J):=   \left \{ \sum_{j \in J} c_j \hbar \partial_j\ |\ c_j \in \widehat{\mathcal{D}}^\hbar(x_A,\partial_J). \right \}.
\end{equation}
Then $\mathcal{I}_{\text{can}}(x_A,\partial_J)$ is the annihilator of $1 \in \widehat{\mathcal{M}}_A^\hbar$, and the Rees polynomial module $\widehat{\mathcal{M}}_A^\hbar$ is canonically isomorphic to $\widehat{\mathcal{D}}^\hbar(x_A,\partial_J)/\mathcal{I}_{\text{can}}(x_A,\partial_J)$.
\item We can define transvections as in Definition \ref{d:transvection}, with non-trivial action only on the derivatives $\hbar \partial_J$, but with polynomials $p_a^{(n+1)}=\partial_a q^{(n+1)}$ that depend on all variables $x_A$. Those transvections are automorphisms of the subalgebra $\widehat{\mathcal{D}}^\hbar(x_A,\partial_J)$. Then, if we define the left ideal
\begin{equation}
\mathcal{I} = \left \{ \sum_{j \in J} c_j \bar H_j\ |\ c_j \in \widehat{\mathcal{D}}^\hbar(x_A,\partial_J) \right \}
\end{equation}
with the $\bar H_j$ defined as in Definition \ref{d:transvection}, it is clear that $\widehat{\mathcal{D}}^\hbar(x_A,\partial_J)/ \mathcal{I}$ is a cyclic left $\widehat{\mathcal{D}}^\hbar(x_A,\partial_J)$-module canonically isomorphic to the twisted module $\prescript{\phi}{}{\widehat{\mathcal{M}}}_A^\hbar$.
\item We can also rephrase the statement in terms of modules of exponential type. We get that $\widehat{\mathcal{D}}^\hbar(x_A,\partial_J)/ \mathcal{I}$ is canonically isomorphic to the module of exponential type $\widehat{\mathcal{M}}_A^\hbar Z$ with
\begin{equation}\label{eq:pfsc2}
Z = \exp \left( - \sum_{n=0}^\infty \hbar^{n-1} q^{(n+1)}(x_A) \right), 
\end{equation}
where the $q^{(n+1)}(x_J)$ are degree $\leq n+1$ in the variables $x_A$ as usual. In terms of differential equations, what this says is that $Z$ is a solution to the differential equations $\mathcal{I} \cdot Z = 0$. A subtelty arises however in the uniqueness statement. Indeed, there are many choices of polynomials $q^{(n+1)}(x_A)$ that give rise to the same transvection, since the transvection only involves the derivatives $\partial_j q^{(n+1)}(x_A)$ with respect to the variables $\partial_j \in \partial_J$. In other words, any two $q^{(n+1)}(x_A)$ that differ by a polynomial $p^{(n+1)}(x_I)$ in the variables $x_I$ give rise to the same transvection in the subalgebra $\widehat{\mathcal{D}}^\hbar(x_A,\partial_J)$. As a result, we can state the uniqueness result as saying that there is a unique solution to the differential equation $\mathcal{I} \cdot Z = 0$ with $Z\big|_{x_J=0}$, which basically amounts to not only requiring that the $q^{(n+1)}(0) = 0$ but also that they do not depend at all on the variables $x_I$. Or, we could give up on uniqueness, and state that we can construct families of solutions of the form \eqref{eq:pfsc2} parametrized by the variables $x_I$. The unique solution above would then pick the origin of this family.
\end{enumerate}

As usual, we can define Airy ideals as in Definition \ref{d:airy}, but requiring that the ideal $\mathcal{I} \subset \widehat{\mathcal{D}}^\hbar(x_A,\partial_J)$. That is, we only have differential operators $H_j = \hbar \partial_j + O(\hbar^2)$ for $j \in J \subset A$. Then everything goes through, and Theorem \ref{t:airy} holds. Corollary \ref{c:pf} also holds, with the caveat about uniqueness mentioned in point (3) above. 

More precisely, if $\mathcal{I} \subset \widehat{\mathcal{D}}^\hbar(x_A,\partial_J)$ is an Airy ideal, then $\widehat{ \mathcal{D}}^\hbar(x_A,\partial_J)/\mathcal{I}$ is canonically isomorphic to the twisted module $\prescript{\phi}{}{\widehat{\mathcal{M}}}_A^\hbar$ for some stable transvection $\phi$ of $\widehat{\mathcal{D}}^\hbar(x_A,\partial_J)$, and also canonically isomorphic to a module of exponential type $\widehat{\mathcal{M}}_J^\hbar Z$ with $Z$ of the form of \eqref{eq:pfsc2} with $q^{(1)}(x_A) = q^{(2)}(x_A) = 0$. 

From the point of view of differential equations, we conclude that we can construct families of solutions to the differential equations $\mathcal{I} \cdot Z$, parametrized by the extra variables $x_I$, of the form
\begin{equation}\label{eq:pfss}
Z = \exp \left( \sum_{\substack{g \in \frac{1}{2} \mathbb{N}, n \in \mathbb{N}^* \\ 2g-2+n>0}} \hbar^{2g-2+n} F_{g,n}(x_A)\right),
\end{equation}
for some polynomials $F_{g,n}(x_A)$ homogeneous of degree $n$ with $F_{g,n}(0) = 0$. We can also state that there is a unique such solution satisfying the initial condition $Z\big|_{x_J=0}=1$, which amounts to imposing that the polynomials $F_{g,n}$ do not depend on the variables $x_I$ (this is the origin in the family of solutions parametrized by the $x_I$).

\subsubsection{The subalgebra $\widehat{\mathcal{D}}^\hbar(x_J,\partial_A)$}
\label{s:subxJ}

In this case, we consider differential operators in $\widehat{\mathcal{D}}_A^\hbar$ whose coefficients are polynomials but only in the variables $x_J \subset x_A$. Let $\widehat{\mathcal{M}}_J^\hbar$ be the $\hbar$-adic completion of the Rees polynomial module in the variables $x_J$. It is clearly a left $\widehat{\mathcal{D}}^\hbar(x_J,\partial_A)$-module, where the derivatives $\partial_I$ act trivially.

\begin{enumerate}
\item Let 
\begin{equation}
\mathcal{I}_{\text{can}}(x_J,\partial_A):= \left \{ \sum_{a \in A} c_a \hbar \partial_a\ |\ c_a \in \widehat{\mathcal{D}}^\hbar(x_J,\partial_A). \right \}.
\end{equation}
Then $\mathcal{I}_{\text{can}}(x_J,\partial_A)$ is the annihilator of $1 \in \widehat{\mathcal{M}}_J^\hbar$, and the Rees polynomial module $\widehat{\mathcal{M}}_J^\hbar$ is canonically isomorphic to $ \widehat{\mathcal{D}}^\hbar(x_J,\partial_A)/\mathcal{I}_{\text{can}}(x_J,\partial_A) $.
\item We can define transvections as in Definition \ref{d:transvection}, but with the polynomials $p_a^{(n+1)}=\partial_a q^{(n+1)}$ depending only on the variables $x_J$. Those transvections are automorphisms of the subalgebra $\widehat{\mathcal{D}}^\hbar(x_J,\partial_A)$. Then, if we define the left ideal
\begin{equation}
\mathcal{I} = \left \{ \sum_{a \in A} c_a \bar H_a\ |\ c_a \in \widehat{\mathcal{D}}^\hbar(x_J,\partial_A) \right \}
\end{equation}
with the $\bar H_a$ defined as in Definition \ref{d:transvection}, it is clear that $\widehat{\mathcal{D}}^\hbar(x_J,\partial_A)/ \mathcal{I}$ is a cyclic left $\widehat{\mathcal{D}}^\hbar(x_J,\partial_A)$-module canonically isomorphic to the twisted module $\prescript{\phi}{}{\widehat{\mathcal{M}}}_J^\hbar$.
\item We can also rephrase the statement in terms of modules of exponential type as usual, but we need to be a little bit careful here. While the polynomials $p_a^{(n+1)}=\partial_a q^{(n+1)}$ do not depend on the variables $x_I$, this does not mean that the polynomials $q^{(n+1)}$ do not depend on the $x_I$. What it means is that they  can be written as 
\begin{equation}
q^{(n+1)}(x_J) + s^{(n+1)}(x_I)
\end{equation}
for linear polynomials $s^{(n+1)}(x_I)$ in the variables $x_I$. (Here we extend the action of the differential operators on the module by the natural action on the variables $x_I$.) In other words, we get that $\widehat{\mathcal{D}}^\hbar(x_J,\partial_A)/ \mathcal{I}$ is canonically isomorphic to the module of exponential type $\widehat{\mathcal{M}}_J^\hbar Z$ with
\begin{equation}\label{eq:pfsc}
Z = \exp \left( - \sum_{n=0}^\infty \hbar^{n-1} q^{(n+1)}(x_J) - \sum_{n=0}^\infty \hbar^{n-1} s^{(n+1)}(x_I) \right), 
\end{equation}
where the $q^{(n+1)}(x_J)$ are degree $\leq n+1$ in the variables $x_J$ as usual, but the $s^{(n+1)}(x_I)$ are linear polynomials in the variables $x_I$. In terms of differential equations, what this says is that $Z$ is a solution to the differential equations $\mathcal{I} \cdot Z = 0$, and if we set $q^{(n+1)}(0) = 0$ and $s^{(n+1)}(0)=0$, it is the unique solution satisfying $Z\big|_{x_A = 0} = 1$.

\end{enumerate}

Then, we can define Airy ideals as in Definition \ref{d:airy}, but requiring that the ideal $\mathcal{I} \subset \widehat{\mathcal{D}}^\hbar(x_J,\partial_A)$. That is, the $H_a$ are differential operators with coefficients that do not depend on the $x_I$. Then everything goes through, and Theorem \ref{t:airy} holds, with the transvection $\phi$ being of the form above (i.e. not involving the variables $x_I$). Corollary \ref{c:pf} also holds, with the caveat that the partition function takes the form \eqref{eq:pfsc}. 

More precisely, if $\mathcal{I} \subset \widehat{\mathcal{D}}^\hbar(x_J,\partial_A)$ is an Airy ideal, then $ \widehat{\mathcal{D}}^\hbar(x_J,\partial_A)/\mathcal{I}$ is canonically isomorphic to the twisted module $\prescript{\phi}{}{\widehat{\mathcal{M}}}_J^\hbar$ for some stable transvection $\phi$ of the form above, and also canonically isomorphic to a module of exponential type $\widehat{\mathcal{M}}_J^\hbar Z$ with $Z$ of the form of \eqref{eq:pfsc} with $q^{(1)}(x_J) = q^{(2)}(x_J) = s^{(1)}(x_I) = s^{(2)}(x_I) = 0$. Using the standard notation in the literature on Airy structures, we conclude that the unique exponential solution to the differential equations $\mathcal{I} \cdot Z$ with initial condition $Z\big|_{x_A = 0}$ takes the form
\begin{equation}\label{eq:pfsss}
Z = \exp \left( \sum_{\substack{g \in \frac{1}{2} \mathbb{N}, n \in \mathbb{N}^* \\ 2g-2+n>0}} \hbar^{2g-2+n} F_{g,n}(x_J) +  \sum_{g \in \frac{1}{2} \mathbb{N}^*} \hbar^{2g-1} F_{g,1} (x_I) \right),
\end{equation}
for some polynomials $F_{g,n}$ homogeneous of degree $n$ in the respective variables, with $F_{g,n}(0) = 0$.

To connect further with standard notation in the literature, it is customary to write down the following expansions for the homogeneous piolynomials $F_{g,n}$:
\begin{equation}
F_{g,n}(x_J) = \frac{1}{n!} \sum_{j_1, \ldots, j_n \in J} F_{g,n}[j_1,\ldots,j_n] x_{j_1} \cdots x_{j_n},
\end{equation}
where the $F_{g,n}[j_1,\ldots,j_n] \in \mathbb{C}$ are coefficients symmetric under permutations of the entries. What \eqref{eq:pfsss} says is that, if we were to define the coefficients above for all entries $a_j \in A$ and write down a general partition function, then the coefficients $
F_{g,n}[a_1,\ldots,a_n] $ would vanish whenever $n \geq 2$ and at least one of the entries is in $I$.

\subsection{Airy ideals and the Heisenberg algebra}

\label{s:AiryHeisenberg}

The Weyl algebra is obviously closely connected to the universal enveloping algebra of the Heisenberg algebra. Thus, not surprisingly, many Airy ideals can be constructed from this vantage point. We now review this construction.

\subsubsection{The Heisenberg algebra}

\label{s:heisenberg}

Let $\mathfrak{h}$ be the Heisenberg Lie algebra with basis $\{J_n\}_{n \in \mathbb{Z}} \cup \{ c\}$ and Lie bracket
\begin{equation}
[J_m, J_n] = m \delta_{m,-n} c, \qquad [J_m, c]=0, \qquad \forall m,n \in \mathbb{Z}.
\end{equation}
Abusing notation a little bit, we will write $U(\mathfrak{h})$ for the quotient of its universal enveloping algebra by the ideal $c=1$. It is the free associative algebra over $\mathbb{C}$ generated by $\{J_m\}_{m \in \mathbb{Z}}$ modulo the commutation relations
\begin{equation}
[J_m, J_{n}] = m \delta_{m,-n}, \qquad \forall m,n \in\mathbb{Z}.
\end{equation}
It is the algebra of modes of the rank one free boson vertex operator algebra (VOA), often denoted by $\pi$.
 For future use, we define $U_+(\mathfrak{h})$ as being the subalgebra generated by the positive modes, $U_{\geq 0}(\mathfrak{h})$ as being the subalgebra generated by the non-negative modes, $U_-(\mathfrak{h})$ as being the subalgebra generated by the negative modes, and $U_0(\mathfrak{h})$ as being the subalgebra generated by the zero mode.

Simple modules of the free boson VOA (or Heisenberg VOA) are Fock modules $\pi_\lambda$, parameterized by a complex weight label $\lambda$. They are generated by a highest-weight state $| \lambda \rangle$ satisfying
\begin{equation}
J_n | \lambda \rangle = 0 \quad \text{for} \ n>0, \qquad J_0 | \lambda \rangle = \lambda | \lambda\rangle
\end{equation}
and the negative modes act freely on the highest-weight state. In particular as a vector space $\pi_\lambda$ coincides with the polynomial ring in the negative modes. 
Not every module of the Heisenberg VOA is completely reducible and in fact there are infinite length indecomposable modules constructed as follows. 
Consider the polynomial ring $\mathbb C[y]$ in one variable. It becomes a module for the abelian Lie algebra $\mathbb C J_0$ generated by $J_0$ under
\begin{equation}
\rho_\lambda :  \mathbb C J_0 \rightarrow \mathbb C[y], \qquad J_0 \mapsto \frac{d}{dy} +\lambda .
\end{equation}
The module will be denoted by $\rho_\lambda$  and it induces firstly  to a module of the non-negative modes by demanding that the positive modes act as zero and then to a module $\pi_{\rho_\lambda}$ of the free boson VOA by letting all negative modes act freely. In formulas, the induced module is
\begin{equation}
\pi_{\rho_\lambda} = \text{Ind}_{U_{\geq 0}(\mathfrak{h})}^{U(\mathfrak h)} \rho_\lambda.
\end{equation}
$\pi_{\rho_\lambda} $ has the Fock module $\pi_\lambda$ as submodule while the quotient is isomorphic to $\pi_{\rho_\lambda} $ itself, that is it satisfies the non-split exact sequence
\begin{equation}
 0 \rightarrow \pi_\lambda \rightarrow \pi_{\rho_\lambda}  \rightarrow \pi_{\rho_\lambda}  \rightarrow 0.
 \end{equation}
As a vector space $\pi_{\rho_\lambda}$ is isomorphic to the polynomial ring in the negative modes together with the extra variable $y$ on which the zero-mode acts as $ \frac{d}{dy} +\lambda$.

We also want to think of $J_0$ as a variable. For this consider the representation of $\mathbb C J_0$ 
\begin{equation}
\kappa_\lambda :  \mathbb C J_0 \rightarrow \mathbb C[y], \qquad J_0 \mapsto  y +\lambda
\end{equation}
and the induced module 
\begin{equation}
\pi_{\kappa_\lambda} = \text{Ind}_{U_{\geq 0}(\mathfrak{h})}^{U(\mathfrak h)} \kappa_\lambda.
\end{equation}
$\pi_{\kappa_\lambda} $ has the Fock module $\pi_\lambda$ as homomorphic image (mapping $y$ to zero)  while the kernel is isomorphic to $\pi_{\kappa_\lambda} $ itself, that is it satisfies the non-split exact sequence
\begin{equation}
 0  \rightarrow \pi_{\kappa_\lambda}  \rightarrow \pi_{\kappa_\lambda} \rightarrow \pi_\lambda \rightarrow 0.
 \end{equation}
As a vector space $\pi_{\kappa_\lambda}$ is isomorphic to the polynomial ring in the negative modes together with the extra variable $y$ on which the zero-mode acts via multiplication by $ y +\lambda$.  Both modules  $\pi_{\rho_\lambda}$ and $ \pi_{\kappa_\lambda}$ are naturally modules of a slightly larger algebra that we will discuss now. 

\subsubsection{Adding a conjugate zero-mode}

$U(\mathfrak{h})$ is almost the Weyl algebra $\mathcal{D}_{\mathbb{N}^*}$ under the identification
\begin{equation}\label{eq:Heisenbergidentification}
J_m = \partial_m, \qquad J_{-m} = m x_m, \qquad m \in \mathbb{N}^*,
\end{equation}
but not quite: the zero mode $J_0$ is missing. The zero mode is central, as it commutes with all other modes $J_m$. To take into account the zero mode $J_0$, we introduce a conjugate zero mode $\tilde J_0$ that satisfies the commutation relationsz
\begin{equation}\label{eq:zeroconjugate}
[J_m, \tilde J_0] = \pm \delta_{m,0}.
\end{equation}
The choice of sign here will be dictated by our interpretation of the zero mode $J_0$ and its conjugate $\tilde J_0$. We then consider the free associative algebra over $\mathbb{C}$ generated by the $\{J_m\}_{m \in \mathbb{Z}}$ and $\tilde J_0$, which is isomorphic to $U(\mathfrak{h})\otimes_{\mathbb{C}} \mathbb{C}[\tilde J_0]$.
Now we have two choices to map the resulting algebra to the Weyl algebra. We set the parameters  of the modules of the previous section to $y = x_0$ and $\lambda =0$.
\begin{enumerate}
\item[($\pi_{\kappa_0}$)] The case $\pi_{\kappa_0}$ corresponds to a minus sign in \eqref{eq:zeroconjugate}, and, as operators on $\pi_{\kappa_0}$,  $J_0 = x_0$ is a variable, and $\tilde J_0 = \partial_0$ is a derivative. Via the identification \eqref{eq:Heisenbergidentification} we see that $\pi_{\kappa_0}$ is
$\mathbb C[x_0, x_{1}, \dots]$ the polynomial ring in infinitely many variables including $x_0$. Modes of fields of the Heisenberg vertex algebra are infinite sums of monomials in the derivatives, excluding $\partial_0$,  whose coefficients are polynomials, that is elements of $\mathbb C[x_0, x_{1}, \dots]$. This means that $\pi_{\kappa_0}$ carries an action of $\mathcal{D}_\mathbb{N}$ and the algebra of modes of the Heisenberg VOA is contained in the subalgebra $\mathcal{D}(x_{\mathbb{N}}, \partial_{\mathbb{N}^*}) \subset \mathcal{D}_{\mathbb{N}}$ (a special case of the type of subalgebras considered in Section \ref{s:subdelJ}).
\item[($\pi_{\rho_0}$)] The case $\pi_{\rho_0}$ corresponds to a plus sign in \eqref{eq:zeroconjugate}, and $J_0 = \partial_0$ acts as a derivative, and $\tilde J_0=x_0$ as a variable. The module $\pi_{\rho_0}$ then is also intepreted as $\mathbb C[x_0, x_{1}, \dots]$ and so it again carries an action of $\mathcal{D}_\mathbb{N}$, but this time the modes of fields of the Heisenberg vertex algebra are infinite sums of monomials in the derivatives, including $\partial_0$, but with polynomial coefficients in $\mathbb C[x_{1}, \dots]$. The conjugate zero-mode $\tilde J_0 = x_0$ doesn't appear. This means the algebra of modes of the Heisenberg VOA is now contained in the subalgebra $\mathcal{D}(x_{\mathbb{N}^*}, \partial_{\mathbb{N}}) \subset \mathcal{D}_\mathbb{N}$ (a special case of the type of subalgebras considered in Section \ref{s:subxJ}).
\end{enumerate}

In both cases, we identify a completion of the universal enveloping algebra $U(\mathfrak{h})$ with a subalgebra of the Weyl algebra $\mathcal{D}_{\mathbb{N}}$. Note that as modules for $\mathcal{D}_{\mathbb{N}}$ our two modules $\pi_{\kappa_0}$ and $\pi_{\rho_0}$ are isomorphic and are highest-weight modules generated by a highest-weight vector $| x_0 \rangle$ on which all positive modes act as zero and all non-negative ones act freely. Let us call this  $\mathcal{D}_{\mathbb{N}}$ module $M$.

Finally, we introduce $\hbar$. In both cases, we implement the Rees construction with respect to the filtration on $U(\mathfrak{h})\otimes_{\mathbb{C}} \mathbb{C}[\tilde J_0]$ defined by 
\begin{equation}
F_i  \left(U(\mathfrak{h})\otimes_{\mathbb{C}} \mathbb{C}[\tilde J_0] \right)= \{ \text{polynomials of degree $\leq i$ in the modes $J_m, \tilde J_0$} \}.
\end{equation}
This is of course mapped to the Bernstein filtration as required. We then introduce $\hbar$ via the Rees construction. Ultimately, the result is the free associative algebra over $\mathbb{C}$ generated by $\hbar$, $\hbar \tilde J_0$ and the $\{\hbar J_m\}_{m \in \mathbb{Z}}$ modulo their commutation relations, which is isomorphic to the Rees Weyl algebra $\mathcal{D}_{\mathbb{N}}^\hbar$ (see Remark \ref{r:reesfree}). Finally, we consider the $\hbar$-adic completion, which is mapped to $\widehat{\mathcal{D}}_{\mathbb{N}}^{\hbar}$.

As a result of all this, we have identified the $\hbar$-adic completion of the Rees universal enveloping algebra, which we denote by $\widehat{U}^\hbar(\mathfrak{h})$, with either the subalgebra  $\widehat{\mathcal{D}}^\hbar(x_{\mathbb{N}}, \partial_{\mathbb{N}^*})$, or the subalgebra $\widehat{\mathcal{D}}^\hbar(x_{\mathbb{N}^*}, \partial_{\mathbb{N}})$, depending on the choice of intepretation for the central zero mode $J_0$, i.e. the choice of module $\pi_{\kappa_0}$ or $\pi_{\rho_0}$.

Because of this identification, all the results summarized in Section \ref{s:subdelJ} and \ref{s:subxJ} apply. Let us summarize their meaning in the context of the Heisenberg algebra.

\subsubsection{Identifying $\widehat{U}^\hbar(\mathfrak{h})$ with $\widehat{\mathcal{D}}^\hbar(x_{\mathbb{N}}, \partial_{\mathbb{N}^*})$}

\label{s:first}

The three results highlighted in Section \ref{s:subdelJ} then have the following interpretation in the case of $\pi_{\kappa_0}$, i.e. $J_0 = x_0$ is a variable:
\begin{enumerate}
\item Let $\mathcal{I}_{\text{can}}$ be the left ideal in $\widehat{U}^\hbar(\mathfrak{h})$ generated by the positive modes $J_m$, $m \in \mathbb{N}^*$. Then $\widehat{U}^\hbar(\mathfrak{h}) / \mathcal{I}_{\text{can}}$ is canonically isomorphic to the ($\hbar$-adic completion of the Rees)  module $M$ generated by $|x_0 \rangle$, which is clear.
\item We define transvections as usual, they take the form $\phi: (\hbar, \hbar J_{-m}, \hbar J_0, \hbar J_m) \mapsto (\hbar, \hbar J_{-m}, \hbar J_0, \bar H_m)$, with
\begin{equation}
\bar H_m = \hbar J_m + \sum_{n=0}^\infty \hbar^n [J_m, q^{(n+1)}(J_0,J_{-1}, J_{-2}, \ldots) ], \qquad m \in \mathbb{N}^*,
\end{equation}
where the $q^{(n+1)}$ are polynomials of degree $\leq n+1$ in the non-positive modes. Then, if we define $\mathcal{I}$ to be the left ideal generated by the $\bar H_m$, $\widehat{U}^\hbar(\mathfrak{h}) / \mathcal{I}$ is canonically isomorphic to the ($\hbar$-adic completion of the Rees) module $M$ twisted by the automorphism $\phi$.
\item Rephrasing in terms of modules of exponential type, we get that $\widehat{U}^\hbar(\mathfrak{h}) / \mathcal{I}$ is canonically isomorphic to the module generated by the state 
\begin{equation}\label{eq:state}
Z |x_0 \rangle = \exp \left( -  \sum_{n=0}^\infty \hbar^{n-1} q^{(n+1)}(J_0,J_{-1}, J_{-2}, \ldots) \right) | x_0 \rangle.
\end{equation}
This is of course a family of Fock modules $\pi_{x_0}$, parametrized by the choice of highest weight $x_0$. There is a unique choice if we impose that the highest weight is $x_0=0$ (in other words, we set the zero mode $J_0$ to zero) and the polynomials satisfy $q^{(n+1)}(0) = 0$.

\end{enumerate}

The interesting statements however are for Airy ideals in $\widehat{U}^\hbar(\mathfrak{h})$. To construct an Airy ideal, we need to construct a collection of operators $\{ H_j \}_{j \in \mathbb{N}^*}$ in $\widehat{U}^\hbar(\mathfrak{h})$ of the form
\begin{equation}
H_j = \hbar J_j + O(\hbar^2),
\end{equation}
and satisfying the properties of Definition \ref{d:airy}. Such collections will naturally arise, for instance, from the modes of the strong generators of some algebras, such as $\mathcal{W}$-algebras, which can be constructed as sub-VOAs of the Heisenberg VOA. If we are given such an Airy ideal $\mathcal{I}$, then we know that $\widehat{U}^\hbar(\mathfrak{h})/\mathcal{I}$ is canonically isomorphic to a twisted module as above for some stable transvection $\phi$, and also canonically isomorphic to a module of exponential type generated by a state as in \eqref{eq:state} with $q^{(1)} = q^{(2)} = 0$.

In particular, since $q^{(1)} = q^{(2)} = 0$ the argument of the exponential starts at $O(\hbar)$, and  expanding the exponential we can think of the state $Z|x_0 \rangle$ as living in the $\hbar$-adic completion of the Rees module $M$ generated by $|x_0 \rangle$.\footnote{To be precise, we would need to define the Rees Fock module slightly more generally here, since if we expand the exponential the polynomial coefficients will be $3$ times the $\hbar$-degree, but this can be done easily without complication.} To summarize, given any Airy ideal, we showed that $\widehat{U}^\hbar(\mathfrak{h})/\mathcal{I}$ is canonically isomorphic to a cyclic left module generated by a state in the $\hbar$-adic completion of the Rees Fock module.

This is particularly interesting if the modes $H_j$ that generate the Airy ideal actually are a subset of the modes (such as the positive modes) of the strong generators of a sub-VOA, such as a $\mathcal{W}$-algebra. In this case, what we have a constructed is a state $Z|x_0 \rangle$ in the $\hbar$-adic completion of the Rees Fock module for $\widehat{U}^\hbar(\mathfrak{h})$ that is annihilated by all the modes $H_j$ in this subset. Considering the action of the other modes of the sub-VOA on this state, we obtain a cyclic module for this sub-VOA, which is generated by the state $Z|x_0 \rangle$. Depending on the subset of modes considered, this may be a highest weight module, or a Whittaker module, for the sub-VOA \cite{Airy,Whittaker}.

\subsubsection{Identifying $\widehat{U}^\hbar(\mathfrak{h})$ with $\widehat{\mathcal{D}}^\hbar(x_{\mathbb{N}^*}, \partial_{\mathbb{N}})$}

\label{s:second}
The three highlighted results have the following interpretation in the case of $\pi_{\rho_0}$, i.e. $J_0 = \partial_0$ is a derivative:
\begin{enumerate}
\item Let $\mathcal{I}_{\text{can}}$ be the left ideal in $\widehat{U}^\hbar(\mathfrak{h})$ generated by the non-negative modes $J_m$, $m \in \mathbb{N}$. Then $\widehat{U}^\hbar(\mathfrak{h}) / \mathcal{I}_{\text{can}}$ is canonically isomorphic to the ($\hbar$-adic completion of the Rees)  $\widehat{\mathcal{D}}^\hbar(x_{\mathbb{N}^*}, \partial_{\mathbb{N}})$ submodule of $M$ generated by $| x_0 \rangle$, which is clear.
\item We define transvections as usual, they take the form $\phi: (\hbar, \hbar J_{-m}, \hbar J_0, \hbar J_m) \mapsto (\hbar, \hbar J_{-m},\bar H_0, \bar H_m)$, with
\begin{equation}
\bar H_m = \hbar J_m + \sum_{n=0}^\infty \hbar^n [J_m, q^{(n+1)}(J_{-1}, J_{-2}, \ldots) ], \qquad m \in \mathbb{N},
\end{equation}
where the $q^{(n+1)}$ are polynomials of degree $\leq n+1$ in the negative modes. Then, if we define $\mathcal{I}$ to be the left ideal generated by the $\bar H_m$, $\widehat{U}^\hbar(\mathfrak{h}) / \mathcal{I}$ is canonically isomorphic to the ($\hbar$-adic completion of the Rees) submodule of $M$ generated by $| x_0 \rangle$ and twisted by the automorphism $\phi$.
\item Rephrasing in terms of modules of exponential type, we get that $\widehat{U}^\hbar(\mathfrak{h}) / \mathcal{I}$ is canonically isomorphic to the submodule of $M$ generated by the state 
\begin{equation}\label{eq:state2}
Z |x_0 \rangle = \exp \left( -  \sum_{n=0}^\infty \hbar^{n-1} q^{(n+1)}(J_{-1}, J_{-2}, \ldots)  - \sum_{n=0}^\infty \hbar^{n-1}s^{(n+1)}(\tilde J_0)\right) | x_0 \rangle,
\end{equation}
where the $s^{(n+1)}(\tilde J_0)$ are linear polynomials in the conjugate zero mode $\tilde J_0$, which acts on $|x_0\rangle$ as $\tilde J_0|x_0\rangle = x_0|x_0 \rangle$.
There is a unique choice of generator if we impose that the polynomials satisfy $q^{(n+1)}(0) = s^{(n+1)}(0)=0$. 
\end{enumerate}

We can proceed and study Airy ideals in $\widehat{U}^\hbar(\mathfrak{h})$ as usual. To construct an Airy ideal, we need to construct a collection of operators $\{ H_a \}_{j \in \mathbb{N}}$ in $\widehat{U}^\hbar(\mathfrak{h})$ of the form
\begin{equation}
H_a = \hbar J_a + O(\hbar^2),
\end{equation}
and satisfying the properties of Definition \ref{d:airy}. Note that there is a $ H_0$ associated to the zero mode $J_0$ here. Such collections again naturally arise form sub-VOAs such as $\mathcal{W}$-algebras. If we are given an Airy ideal $\mathcal{I}$, then we know that $\widehat{U}^\hbar(\mathfrak{h})/\mathcal{I}$ is canonically isomorphic to the submodule of $M$  generated by $|x_0 \rangle$ as above but twisted by some stable transvection $\phi$. It is also canonically isomorphic to a module of exponential type generated by a state as in \eqref{eq:state2} with $q^{(1)} = q^{(2)} =s^{(1)}=s^{(2)}= 0$.

What is really interesting here is the appearance of the conjugate zero modes $\tilde J_0$ in the exponential in \eqref{eq:state2}. Again, for an Airy ideal we must have $q^{(1)} = q^{(2)} = s^{(1)}=s^{(2)}=0$, so that the argument of the exponential starts at $O(\hbar)$. We can expand the exponential, but the resulting state \emph{does not} live in the $\hbar$-adic completion of the Rees Fock module generated by $|x_0\rangle$ over $\widehat{U}^\hbar(\mathfrak{h})$, because of the appearance of the conjugate modes $\tilde J_0$. It instead lives in the submodule of $M$ generated by $|x_0 \rangle$ over the $\hbar$-adic completion of the Rees algebra associated to $U(\mathfrak{h}) \otimes_{\mathbb{C}} \mathbb{C}[\tilde J_0]$. This is a key distinction between this scenario and the previous one.

In particular, if the $H_a$ form a subset of modes of the strong generators of a sub-VOA, such as a $\mathcal{W}$-algebra, we once again constructed a state $Z |x_0 \rangle$ that is annihilated by all the modes $H_a$ in this subset, and this states generates a cyclic module for the sub-VOA, which could be a highest weight module, or a Whittaker module, depending on the subset of modes. However, the state $Z | x_0 \rangle$ is not anymore in a Fock module over $U(\mathfrak{h})$ (suitably $\hbar$-adically completed), but rather in the larger module $M =  \pi_{\rho_0}$.

\begin{Rem}
We note that in some cases, $Z | x_0 \rangle$ may still live in the $\hbar$-adic completion of the Rees Fock module generated by $|x_0\rangle$ over $\widehat{U}^\hbar(\mathfrak{h})$. This will happen if all the linear polynomials $s^{(n+1)}(\tilde J_0)$ vanish. In turn, this will happen if the transvection $\phi$ does not act on $\hbar J_0$, that is, $\bar H_0 = \hbar J_0$. From the point of view of Airy ideals, this means that the operator $H_0$ is simply equal to $H_0 = \hbar J_0$. This was the case for instance in some of the constructions in \cite{Airy}.

In this particular case, it does not matter what scenario we use to interpret the zero mode. On the one hand, if we think of $J_0$ as a variable $x_0$, then since $H_0 = \hbar J_0$ must kill $Z | x_0 \rangle$, we must have $x_0=0$, i.e. we set the zero mode to zero. On the other hand, if we think of $J_0$ as a derivative, then $Z | x_0 \rangle$ does not include the conjugate modes $\tilde J_0$ because the $s^{(n+1)}(\tilde J_0)$ vanish, and hence $H_0 = \hbar J_0$ naturally kills $Z | x_0 \rangle$, i.e. we can simply set $J_0$ to zero as before. In both cases the state $Z | x_0 \rangle$ is the same, lives in the Fock module, and we can simply set the zero mode to zero, which is what was done in \cite{Airy}.

However, this is a very particular case; for general Airy ideals, there is no reason why the operator $H_0 = \hbar J_0 + O(\hbar^2)$ that starts with the zero mode should not have terms of $O(\hbar^2)$ or higher. We will see an example of that in the next sections when considering $W(\mathfrak{sp}_{2N})$-algebras.
\end{Rem}

\begin{Rem}
We note that even if the operator $H_0 = \hbar J_0 + O(\hbar^2)$ has higher order terms, it does not mean that the linear polynomials $s^{(n+1)}(\tilde J_0)$  will be non-zero. $H_0$ could still be special enough such that all $s^{(n+1)}(\tilde J_0)=0$, in which case $Z|x_0\rangle$ would live in the $\hbar$-adic completion of the Rees module generated by $|x_0\rangle$ over $\widehat{U}^\hbar(\mathfrak{h})$. It appears to be not so easy to determine whether a given Airy ideal will be such that all $s^{(n+1)}(\tilde J_0) = 0$; however, what is easy to show is that, if $H_0$ has polynomial terms at $O(\hbar^2)$, then the linear polynomials $s^{(n+1)}(\tilde J_0) $ do not all vanish. So this gives a simple criteria to determine when $Z$ involves the conjugate zero modes $\tilde J_0$.
\end{Rem}

\subsubsection{The rank $N$ free boson $\pi^N$}

In this section we simply note that the construction of the previous section continue to hold if we consider direct sums of Heisenberg algebra $\mathfrak{h} := \bigoplus_{i=1}^N \mathfrak{h}^{(i)}$, with basis $\{J^{i}_n\}_{i \in \{1,\ldots,N\}, n \in \mathbb{Z}} \cup \{ c\}$ and Lie bracket
\begin{equation}
[J^{i}_m, J^{j}_n] = m \delta_{m,-n} \delta_{i,j} c, \qquad [J^{i}_m, c]=0, \qquad \forall m,n \in \mathbb{Z}, i,j \in \{1,\ldots,N\}.
\end{equation}
The universal enveloping algebra is constructed as usual, quotienting by the ideal $c=1$. It is the free associative algebra generated by the modes $J^{(i)}_m$ modulo their commutation relations. It is the algebra of modes for the rank $N$ free boson VOA $\pi^N$.

The only difference with the previous section is that we now have $N$ zero modes $J_0^{i}$. We thus introduce $N$ conjugate zero modes $\tilde J_0^{i}$, and proceed as before with the identification with the Weyl algebra $\mathcal{D}_A$, where we now consider the multi-index set $A = \{(i,n)\ |\ i \in \{1,\ldots,N\},  n \in \mathbb{N} \}$.

In principle, for each zero mode we can make a choice between the two scenarios of the previous section, i.e. whether we consider $J_0^{i}$ as a variable or a derivative. It is usually more meaningful to make the same choice for all zero modes. Then we proceed as before, and the results are very similar, so there is no need to re-state them here.

\subsubsection{The VOA viewpoint}

\label{s:VOA}

The last few sections can also be reformulated from the viewpoint of VOAs, which is how the construction of Airy ideals naturally arises. Recall that a VOA is given by the data of a vector space of states $V$, and a state-field correspondence $Y: V \to \text{End}(V)[[z^{\pm1} ]]$, which satisfies a number of defining axioms. Given a vector $v \in V$, we call $Y(v,z) = \sum_{n \in \mathbb{Z}} v_n z^{-n-1}$ the corresponding field, and the endomorphisms $v_n$ its modes.

A VOA module is another space $M$, with a maps $Y_M: V \to \text{End}(M)[[z^{\pm1} ]]$. It realizes the modes of $Y_M(v,z)$ as endomorphisms of the space $M$. 

The rank one free boson VOA is generated by a single state $\chi \in V$, with corresponding field $Y(\chi, z) = \sum_{n \in \mathbb{Z}} J_n z^{-n-1}$. Its modes $J_n$ satisfy the commutation relations of the Heisenberg algebra $\mathfrak{h}$ with $c=1$:
\begin{equation}
[ J_m, J_n] = m \delta_{m,-n}.
\end{equation}
The associative algebra of modes is the universal enveloping algebra $U(\mathfrak{h})$.

To map to $\widehat{\mathcal{D}}^{\hbar}_{\mathbb{N}}$ and $\widehat{\mathcal{M}}^\hbar_{\mathbb{N}}$ as in the previous sections, we think of the Rees polynomial algebra $\widehat{\mathcal{M}}^\hbar_{\mathbb{N}}$ as a VOA module, with map $Y^\hbar: V \to \text{End}(M)[[z^{\pm1} ]]$ acting on the generating state $\chi \in V$ as
\begin{equation}
Y^\hbar(\chi,z) = \sum_{n \in \mathbb{Z}} \hbar J_n z^{-n-1}.
\end{equation}
We also impose that the module satisfies the property
\begin{equation}\label{eq:translation}
Y^\hbar( T v, z) = \hbar \partial_z Y^\hbar (v,z)
\end{equation}
for all $v \in V$, where $T$ is the translation endomorphism on $V$. This turns the algebra of modes into the Rees graded algebra with respect to Li's filtration by conformal weight on the algebra of modes of the free boson \cite{HLi}. It allows us to identify the algebra of modes with a subalgebra of the Rees Weyl algebra $\widehat{\mathcal{D}}^{\hbar}_{\mathbb{N}}$ (as in Sections \ref{s:first} or \ref{s:second}, depending on the interpretation of the zero mode $J_0$), which acts on the module $M= \widehat{\mathcal{M}}^\hbar_{\mathbb{N}}$ (which is also, of course, a left module for the algebra of modes).

Introducing $\hbar$ in this way is in fact very simple. Since it turns the algebra of modes into the Rees graded algebra associated to the filtration by conformal weight, we can simply introduce $\hbar$ at the end of a calculation. Indeed, if $v \in V$ is a state of conformal weight $m$, then we know that its field in the $\hbar$-deformed module will be given by
\begin{equation}
Y^\hbar(v,z) = \hbar^m Y(v,z).
\end{equation}

The rank $N$ free boson VOA is constructed similarly by taking an $N$-fold tensor product of the rank one free boson VOA. It is generated by $n$ states $\chi^{i} \in V$, $i=0,\ldots,N-1$, with corresponding fields $Y(\chi^{i}, z) = \sum_{n \in \mathbb{Z}} J_n^{i} z^{-n-1}$. The modes satisfy the commutation relations
\begin{equation}
[J_m^{i}, J_n^{j}] =m \delta_{m,-n} \delta_{i,j},
\end{equation}
as in the previous section with $c=1$. The algebra of modes is the corresponding universal enveloping algebra, and everything goes through as before.

\subsubsection{Twisted modules for the rank $N$ free boson VOA}

\label{s:twisted}

In many application, the starting point is not quite the algebra of modes of the rank $N$ free boson VOA as in the previous section, but rather the algebra of modes of a twisted module for the rank $N$ free boson VOA.

Roughly speaking, if $\sigma$ is an automorphism of a VOA $V$ of finite order $r$, then a $\sigma$-twisted VOA module is another space $M$ and a map $Y_\sigma: V \to \text{End}(M)[[z^{\pm 1/r}]]$. The difference of  course is that fractional powers of $z$ appear.

In this paper we will only consider the case of the rank $N$ free boson VOA, which is generated by states $\chi^{i} \in V$, $i=0,\ldots,N-1$, with the automorphism $\sigma$ that cyclically permutes the $N$ states:
\begin{equation}
\sigma: \chi^{0} \to \chi^{1} \to \ldots \to \chi^{N-1} \to \chi^{0}.
\end{equation}
In this case, we can define a diagonal basis $v^{a} \in V$, $a=0, \ldots, N-1$, with
\begin{equation}\label{eq:diagonal}
v^{a} = \sum_{j=0}^{N-1} \theta^{-aj} \chi^{j},
\end{equation}
with $\theta = e^{2 \pi i / N}$. The inverse relation is
\begin{equation}\label{eq:inverse}
\chi^{i} = \frac{1}{N} \sum_{a=0}^{N-1} \theta^{i a} v^{a}.
\end{equation}
In this diagonal basis, the automorphism $\sigma$ acts by multiplication by roots of unity:
\begin{equation}
\sigma: v^{a} \mapsto \theta^{a} v^{a}, \qquad a=0,\ldots,N-1.
\end{equation}
The map $Y_\sigma: V \to \text{End}(M)[[z^{\pm 1/r}]]$ takes the simpler form
\begin{equation}\label{eq:twistfields}
Y_\sigma(v^{a},z) = \sum_{k \in \frac{a}{N} + \mathbb{Z}} J_{k N} z^{-k-1},
\end{equation}
with the modes satisfying the commutation relations
\begin{equation}
[ J_{m N}, J_{n N}] = N m \delta_{m,-n}.
\end{equation}

In the end, what we found is that, after redefining indices $m N \mapsto k$, the algebra of modes of the $\sigma$-twisted module is nothing but the universal enveloping algebra of the Heisenberg algebra $\mathfrak{h}$ with $c=1$ already studied in Section \ref{s:heisenberg}, which is the algebra of modes for the rank one free boson. It has only one zero mode $J_0$, not $N$ zero modes as in the untwisted rank $N$ case studied in the previous section.

As the algebra of modes of the twisted modules is identified with the universal enveloping algebra of the Heisenberg algebra $\mathfrak{h}$, all the results of Sections \ref{s:first} and \ref{s:second} apply, depending on a choice of interpretation for the zero mode.

While we only need to consider the  fully cyclic automorphism $\sigma$ in the rest of the paper,  we note that more general automorphisms can certainly be considered, see for instance \cite{Airy,bks,BM}. For an automorphism $\sigma$ in the symmetric group $S_N$ that corresponds to a permutation of the $N$ free bosons, the construction works pretty much the same as explained here, applied independently to each cycle in the permutation (see \cite{bks,BM}). More precisely, in the end one obtains a set of bosonic modes for each cycle of the permutation $\sigma$. The resulting algebra of modes is then naturally identified with the algebra of modes of the untwisted rank $M$ free boson as in the previous section, with $M$ being the number of cycles in $\sigma$. It has $M$ distinct zero modes, one for each cycle in the permutation $\sigma$.

\subsubsection{Boundedness in the VOA setting}

Condition (1) in the definition of Airy ideals, see Definition \ref{d:airy}, states that the collection of operators $\{H_a\}_{a \in A}$ must be bounded. If the Airy ideal is generated by a subset of modes of the strong and free generators of a VOA realized as a sub-VOA of the Heisenberg VOA, then the boundedness condition is automatically satisfied. This is what we prove in this section.

Consider a VOA $W$, that is freely and strongly generated by $N$-fields $W^1, \dots, W^N$ and that allows for an embedding in the rank $N$ Heisenberg algebra $\pi^N$.
Let $H$ be the Virasoro zero-mode of the usual Virasoro field of the Heisenberg algebra.
Let 
\begin{equation}
W^m(z) = \sum_{k \in \mathbb Z} W^m_k z^{-k-1}
\end{equation}
be the mode expansion of the field $W^m(z)$ and we require that $W^m$ has weight $\Delta_m \in \mathbb Z_{>0}$ in the sense that
\begin{equation}
[H, W^m_k] = (\Delta_m - k - 1) W^m_k
\end{equation}
for all $k$. 
Let 
\begin{equation}
W^m_k = \sum_{\substack{ 0 \leq a_1 \leq  \dots \leq a_t\\ i_1, \dots, i_t \in \{0, \dots, N\} }} A^{i_1, \dots, i_N}_{a_1, \dots, a_t}(m, k)  J^{i_1}_{a_1} \dots J^{i_t}_{a_t}
\end{equation}
 where the $A^{i_1, \dots, i_N}_{a_1, \dots, a_t}(m, k)$ are polynomials in the negative modes.
Then the boundedness condition in the VOA setting for the non-negative modes
$W^m_k$  is that for any set $0 \leq a_1 \leq  \dots \leq a_t$ one has $A^{i_1, \dots, i_N}_{a_1, \dots, a_t} (m, k) = 0$ for all but finitely non-negative modes
$W^m_k$. Interpreting the modes of the Heisenberg algebra as variables and derivatives as before immediately translates to the boundedness condition in the Weyl algebra setting. 
\begin{Lem}
The boundedness condtion for non-negative modes
$W^m_k$ holds on modules of $\pi^N$.
\end{Lem}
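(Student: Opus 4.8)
The plan is to exploit the conformal-weight grading on the algebra of modes of $\pi^N$ determined by the Virasoro zero-mode $H$. Call an element $X$ of (the relevant completion of) the universal enveloping algebra \emph{$H$-homogeneous of weight $\lambda$} if $[H,X]=\lambda X$. Since the Heisenberg current has weight one, $[H,J^i_a]=-a\,J^i_a$, so each monomial $J^{l_1}_{b_1}\cdots J^{l_s}_{b_s}$ is $H$-homogeneous of weight $-\sum_{r} b_r$; in particular any polynomial in the \emph{negative} modes is a sum of $H$-homogeneous pieces of weight $\ge 0$, while a monomial $J^{i_1}_{a_1}\cdots J^{i_t}_{a_t}$ in \emph{non-negative} modes ($0\le a_1\le\cdots\le a_t$) has weight $-\sum_{r} a_r\le 0$. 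By hypothesis $W^m$ has weight $\Delta_m$, which says precisely that $W^m_k$ is $H$-homogeneous of weight $\Delta_m-k-1$.

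First I would fix a string $J^{i_1}_{a_1}\cdots J^{i_t}_{a_t}$ of non-negative modes ($0\le a_1\le\cdots\le a_t$, with indices $i_1,\dots,i_t$) and consider its coefficient $A^{i_1,\dots,i_N}_{a_1,\dots,a_t}(m,k)$ in the normally ordered expansion of $W^m_k$, with all negative modes written to the left of all non-negative ones. Because the PBW monomials in this ordering are products of $H$-homogeneous elements, hence $H$-homogeneous, and are linearly independent, and since $W^m_k$ is $H$-homogeneous of weight $\Delta_m-k-1$, this coefficient is automatically $H$-homogeneous, of weight
\begin{equation}
w \;:=\; (\Delta_m-k-1) + \sum_{r=1}^{t}a_r .
\end{equation}
But $A^{i_1,\dots,i_N}_{a_1,\dots,a_t}(m,k)$ is a polynomial in the negative modes, every one of which has strictly positive weight, so a \emph{nonzero} such coefficient must have $w\ge 0$, equivalently $k\le \Delta_m-1+\sum_{r=1}^{t}a_r$.

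The conclusion is then immediate. The string $(a_1,\dots,a_t)$ is fixed, so $\sum_{r} a_r$ is a fixed non-negative integer; the generator index $m$ runs over the \emph{finite} set $\{1,\dots,N\}$, so $\Delta_m\le \Delta_{\max}:=\max_{1\le m\le N}\Delta_m<\infty$; and the non-negative modes $W^m_k$ are indexed by $k$ bounded below (by $0$, say, uniformly over the finitely many $m$). Hence only finitely many pairs $(m,k)$ can satisfy $A^{i_1,\dots,i_N}_{a_1,\dots,a_t}(m,k)\neq 0$, which is exactly the stated boundedness: for each fixed $0\le a_1\le\cdots\le a_t$ (and indices $i_1,\dots,i_t$) the coefficient vanishes for all but finitely many non-negative modes $W^m_k$. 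Interpreting the Heisenberg modes as in Section~\ref{s:heisenberg} --- negative modes as variables, positive modes as derivatives, and each zero mode $J^i_0$ as a variable or a derivative according to the chosen module --- transports this verbatim to the boundedness condition of Definition~\ref{d:bounded} for the images of the non-negative modes of $W^1,\dots,W^N$ in $\widehat{\mathcal{D}}^\hbar_{\mathbb{N}}$.

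The only point requiring genuine care --- and the step I expect to be the main obstacle --- is justifying that $W^m_k$, viewed as an element of the appropriate ($\hbar$-adic completion of the Rees) universal enveloping algebra of $\mathfrak h=\bigoplus_{i=1}^N\mathfrak h^{(i)}$, really does admit the displayed normally ordered expansion with coefficients polynomial in the negative modes, and that this is compatible with the $H$-grading term by term (so that the coefficient of a fixed non-negative-mode string is itself $H$-homogeneous, as used above). This is a Poincar\'e--Birkhoff--Witt / normal-ordering statement: any element of the completion can be rewritten with all negative modes to the left of all non-negative modes, the structure constants $[J^i_a,J^j_b]=a\,\delta_{a,-b}\,\delta_{ij}$ being scalars of weight $0$ so that reordering never changes the weight of a product; the resulting expression then decomposes into $H$-homogeneous pieces with the weight of the negative-mode coefficient dictated by the non-negative-mode string exactly as in the equation above. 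Once this structural fact is in place, the remainder is just bookkeeping around the single inequality $w\ge 0$.
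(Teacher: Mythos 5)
Your proposal is correct and follows essentially the same route as the paper: both arguments use the $H$-homogeneity of $W^m_k$ (weight $\Delta_m-k-1$) together with the fact that the fixed non-negative-mode string bounds the possible weight, forcing the coefficient to vanish once $k+1-\Delta_m>\sum_r a_r$, and then finiteness of the set of generators $m$ gives finitely many admissible pairs $(m,k)$. The normal-ordering compatibility you flag as the delicate point is exactly the observation the paper uses implicitly (the Heisenberg structure constants are weight-zero scalars, so reordering preserves $H$-weight), so there is no gap.
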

\begin{proof}
 Note that $[ H, J^i_k] = -k J^i_k$ for the Heisenberg modes. 
Let $I = \{ (i_1, k_1), \dots, (i_r, k_r)\}$ an ordered index set of length $r$, that is $i_1, \dots, i_r \in \{ 1, \dots, N\}$ and $k_1, \dots, k_r \in \mathbb Z$ with $k_a \geq k_{a-1}$ and if $k_a = k_{a-1}$ then $i_a \geq _{a-1}$. 
Set $p_I= J^{i_1}_{k_1} \dots J^{i_r}_{k_r}$  and $k_I = k_1 + \dots + k_r$ so that  $[H, p_I] = - k_I p_I$. 
Let $\mathcal I$ be the set of all such index sets of any length. Then $W^m_k$ is of the form
\begin{equation}
W^m_k = \sum_{\substack{ I \in \mathcal I \\ k_I = k+1 - \Delta_m} }a_I p_I 
\end{equation}
for certain coefficients $a_I$. We are interested in the boundedness conditions in the VOA setting. This means for a given ordered monomial $J^{i_s}_{k_s} \dots J^{i_r}_{k_r}$ 
with $k_s \geq 0$ (and hence all $k_i \geq 0$) we wonder if there exists $\{ (i_1, k_1), \dots, (i_{s-1}, k_{s-1})\}$, such that $a_I \neq 0$ for $I = \{ (i_1, k_1), \dots, (i_r, k_r)\}$.
We have that $k_I \leq k_s + \dots + k_r$ and so we necessarily have $a_I = 0$ if  $ k+1 - \Delta_m > k_s + \dots + k_r$.  In particular there are only finitely many pairs $(m, k)$ such that $a_I$ can be non-zero, i.e. the  boundedness condition holds. 
\end{proof}
Let $\sigma$ be a finite order automorphism of the Heisenberg algebra that leaves $W$ and $H$ invariant. Then the $\sigma$-twisted module is still a module for $W$ and still graded by $H$. Note that with the set-up of the previous setting the twisted modes  $J_{kN}$ have $H$ eigenvalue $-k$. 
As an operator on a $\sigma$-twisted module
\begin{equation}
W^m_k = \sum_{ 0 \leq a_1 \leq  \dots \leq a_t} A^\sigma_{a_1, \dots, a_t}(m, k)  J_{a_1} \dots J_{a_t}
\end{equation}
 with $A^\sigma_{a_1, \dots, a_t}(m, k)$ polynomials in the negative modes.
 \begin{Lem}\label{lemma:VOAboundedness}
The boundedness condtion for non-negative modes
$W^m_k$ holds  on $\sigma$-twisted modules after any possible shift of negative modes.
\end{Lem}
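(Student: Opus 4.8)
The plan is to follow the proof of the preceding (untwisted) Lemma almost verbatim, inserting the factor of $N$ coming from the fractional $H$-grading of the twisted modes, so as to get boundedness of the non-negative modes $W^m_k$ (i.e.\ $k\geq 0$) on the $\sigma$-twisted module \emph{before} any shift; and then to observe that a shift of the negative modes is an automorphism fixing every non-negative mode, so that it cannot affect boundedness at all.

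\emph{Unshifted twisted case.} Since $\sigma$ fixes $H$, the $\sigma$-twisted module is $H$-graded, each twisted mode $J_n$ satisfies $[H,J_n]=-\frac nN J_n$, and $[H,W^m_k]=(\Delta_m-k-1)W^m_k$. Expanding $W^m_k=\sum_I a_I p_I$ in PBW monomials $p_I=J_{b_1}\cdots J_{b_r}$ with $b_1\leq\cdots\leq b_r$ — so all negative indices precede all non-negative ones — comparison of $H$-eigenvalues forces $b_1+\cdots+b_r=N(k+1-\Delta_m)$ for every $I$ that occurs. Collecting the monomials whose non-negative tail equals a fixed ordered multi-index $0\leq a_1\leq\cdots\leq a_t$ yields the coefficient $A^\sigma_{a_1,\ldots,a_t}(m,k)$, a polynomial in the negative modes. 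If it is non-zero then some contributing $I$ has $a_I\neq 0$, and since the negative part of $p_I$ has non-positive index sum we obtain $N(k+1-\Delta_m)=b_1+\cdots+b_r\leq a_1+\cdots+a_t$, i.e.\ $k\leq\Delta_m-1+\frac1N(a_1+\cdots+a_t)$. As $m$ ranges over the finite set $\{1,\ldots,N\}$ and $k\geq 0$, for each fixed $(a_1,\ldots,a_t)$ only finitely many pairs $(m,k)$ survive, which gives the boundedness condition of Definition \ref{d:bounded}.

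\emph{Effect of the shift.} A shift of the negative modes, $J_{-n}\mapsto J_{-n}+\gamma_n$ for (possibly $\hbar$-dependent) constants $\gamma_n$, with $J_m$ unchanged for $m\geq 0$, preserves the Heisenberg commutation relations and hence extends to an automorphism $\tau$ of $\widehat{U}^\hbar(\mathfrak{h})$, with inverse $J_{-n}\mapsto J_{-n}-\gamma_n$. Applying $\tau$ to the expansion above and using that it fixes each non-negative mode $J_{a_i}$ gives $\tau(W^m_k)=\sum_{a_1\leq\cdots\leq a_t}\tau\bigl(A^\sigma_{a_1,\ldots,a_t}(m,k)\bigr)J_{a_1}\cdots J_{a_t}$, and the right-hand side is already normal-ordered because $\tau(A^\sigma_{a_1,\ldots,a_t}(m,k))$ is again a polynomial in the negative modes only. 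Since $\tau$ restricts to a bijection on polynomials in the negative modes, $\tau(A^\sigma_{a_1,\ldots,a_t}(m,k))$ vanishes exactly when $A^\sigma_{a_1,\ldots,a_t}(m,k)$ does; hence for each fixed $(a_1,\ldots,a_t)$ the set of $(m,k)$ with $k\geq 0$ and non-zero coefficient is the \emph{same} finite set as before the shift. This proves the Lemma.

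The main thing to get right is the bookkeeping in the unshifted case — in particular that the PBW expansion of $W^m_k$ already places all negative modes to the left of all non-negative ones, so that the index-sum inequality is clean (and that any residual re-ordering can only decrease the non-negative index sum, hence is harmless). The other point to check carefully is that an arbitrary shift of negative modes, whatever its normalization, genuinely defines an automorphism fixing the non-negative modes pointwise; granting this, there is no further obstacle, since boundedness depends only on the \emph{derivative part} (the non-negative modes) of the operators, which the shift leaves untouched.
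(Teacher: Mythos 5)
Your proposal is correct and follows essentially the same route as the paper: boundedness on the unshifted twisted module via the $H$-eigenvalue count (twisted monomials $J_{a_1}\cdots J_{a_t}$ have eigenvalue $-(a_1+\cdots+a_t)/N$, forcing the coefficient to vanish once $k+1-\Delta_m$ exceeds $(a_1+\cdots+a_t)/N$), and then the observation that a shift of negative modes acts as a homomorphism on polynomials in the negative modes, so it cannot create new non-vanishing coefficients. Your extra remark that the shift is in fact an automorphism fixing the non-negative modes (so the non-vanishing set is literally unchanged) is a harmless strengthening of the paper's weaker, sufficient statement.
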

 \begin{proof}
 The argument is the same as the previous Lemma: $J_{a_1} \dots J_{a_t}$ has $H$-eigenvalue $- (a_1 + \dots +a_t)/N$ and so $A_{a_1, \dots, a_t}(m, k) = 0$ if
 $ k+1 - \Delta_m >  a_1 + \dots +a_t)/N$, that is for all but finitely many pairs $(m, k)$.  Any possible shift of negative modes is nothing but a homomorphism on polynomials in the negative modes and so if $A_{a_1, \dots, a_t}(m, k) = 0$ then the same remains true after any possible shift of negative modes. 
 \end{proof} 
 
 \begin{Rem}
The boundedness condition of course holds for subsets of non-negative modes as well. In some cases, see \cite{Airy}, one also wants to include some negative modes $W^m_k$. The argument for boundedness is still exacly the same. Therefore, the collections of modes considered in \cite{Airy,Whittaker} are all bounded, and the Airy ideals constructed in these papers are indeed well defined. 
\end{Rem}

\section{$\mathcal{W}(\mathfrak{sp}_{2N})$ and twisted modules}

We now switch gears, and construct examples of Airy ideals within the algebra of modes of a $\sigma$-twisted module for the rank $N$ free boson VOA. More precisely, we will consider Airy ideals that are generated by the non-negative modes of the strong generators of the principal $\mathcal{W}$-algebra of $\mathfrak{sp}_{2N}$ at level $-N-1/2$, which we denote by
 $\mathcal{W}^{-N-1/2}(\mathfrak{sp}_{2N})$. To do so, we need to realize the $\mathcal{W}^{-N-1/2}(\mathfrak{sp}_{2N})$-algebras as sub-VOAs of the rank $N$ free boson VOA. In this section we review background notions on the $\mathcal{W}^{-N-1/2}(\mathfrak{sp}_{2N})$-algebras, how they can be realized within the rank $N$ free boson VOA, and how we can construct modules for them from twisted modules for the rank $N$ free boson VOA.

\subsection{Generators of $\mathcal{W}(\mathfrak{sp}_{2N})$}

We consider the universal principal $\mathcal{W}$-algebra of type $C_N$ at level $-N-1/2$. 
This algebra is isomorphic to the orbifold of $N$-pairs of symplectic fermions;
the reason is that the coset $\text{Com}\left(V^k(\mathfrak{sp}_{2N}), V^k(\mathfrak{osp}_{1|2N})\right)$ is isomorphic to 
$\mathcal{W}_{\ell}(\mathfrak{sp}_{2N})$ for generic $\ell$ with $\ell$ and $k$ related via $(\ell +N  +1)^{-1} + (k + N +1)^{-1}=2$,
 by \cite[Thm. 4.1]{cl2} as well as \cite[Thm. 3.2]{CGL}. The limit $k \rightarrow \infty$ makes sense; in this limit, the coset becomes an orbifold of a free field algebra \cite[Thm. 6.10]{cl3},
 which in this case is the $Sp(2N)$-orbifold  $\mathcal A(N)^{Sp(2N)}$ of $N$-pairs of symplectic fermions $\mathcal A(N)$. 
 For clarity, we write:
 \begin{equation}
 \mathcal{W}(\mathfrak{sp}_{2N}) := \mathcal A(N)^{Sp(2N)} \cong \mathcal{W}^{-N-1/2}(\mathfrak{sp}_{2N}).
 \end{equation}

Let us be a little more precise. The symplectic fermion algebra of rank $N$, $\mathcal{A}(N)$, is strongly and freely generated by $N$ pairs of symplectic fermions $\{e^{i}(z), f^{i}(z)\}_{i=1,2, \ldots, N}$ by . Their OPEs are given by
\begin{equation}
e^{i}(z) f^{j}(w) \sim \frac{\delta_{ij}}{(z-w)^2}.
\end{equation}

\begin{Prop}[\cite{cl}]
	Let $\{e^{i}, f^{i}\}_{i=1, \ldots, N}$ be symplectic fermions. Then $\mathcal{W}(\mathfrak{sp}_{2N})$ is freely generated by fields $W^2, W^4, \ldots, W^{2N}$ of conformal weights $2, 4, \ldots, 2N$ respectively that have the following free field description:
	\begin{align}
	W^{m}(z)= \frac{1}{(m-2)!}\sum_{i=1}^N \left(:e^i(z) \partial_z^{m-2} f^i(z): + :\partial_z^{m-2} e^i(z)f^i(z): \right),  \qquad m=2,4, \ldots, 2N. \label{eq:wgen}
	\end{align}
\end{Prop}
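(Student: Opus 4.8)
The identity recorded above, $\mathcal{W}(\mathfrak{sp}_{2N}) = \mathcal{A}(N)^{Sp(2N)}$, reduces everything to a statement about the $Sp(2N)$-orbifold of the rank $N$ symplectic fermion algebra, where $Sp(2N)$ acts on the $2N$ fermions $e^{i}, f^{i}$ preserving the pairing $\langle e^{i}, f^{j}\rangle = \delta_{ij}$, $\langle e^i,e^j\rangle = \langle f^i,f^j\rangle = 0$. The first, trivial observation is that each $W^{m}$ in \eqref{eq:wgen} is a genuine state of $\mathcal{A}(N)^{Sp(2N)}$ of conformal weight $m$: the group acts by weight-preserving automorphisms, and $\sum_{i}\big(:e^{i}\partial_{z}^{m-2}f^{i}: + :\partial_{z}^{m-2}e^{i}\,f^{i}:\big)$ is exactly the contraction of the symplectic form with the normally ordered quadratic built from the $0$-th and $(m-2)$-th derivatives of the fermions, hence invariant. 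So the content of the Proposition is that $W^{2}, W^{4}, \dots, W^{2N}$ \emph{freely and strongly generate} this orbifold.

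To get strong generation I would pass to the classical limit. With Li's filtration, $\mathrm{gr}\,\mathcal{A}(N)$ is the free supercommutative differential algebra on the odd variables $\partial^{k}e^{i}, \partial^{k}f^{i}$ ($k\ge 0$), carrying the induced $Sp(2N)$-action, and for an orbifold of a free field algebra by a reductive group one has $\mathrm{gr}\big(\mathcal{A}(N)^{Sp(2N)}\big)\cong\big(\mathrm{gr}\,\mathcal{A}(N)\big)^{Sp(2N)}$; so it suffices to prove the classical analogue, that the symbols $w^{m}$ of the $W^{m}$, $m=2,4,\dots,2N$, freely generate this invariant ring as a differential algebra. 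By the first fundamental theorem of invariant theory for $Sp(2N)$ in its defining representation, applied to the arc space (i.e.\ to the infinitely many ``copies'' of the standard representation indexed by the derivative order), the invariant ring is generated by the quadratics $\theta_{a,b} = \sum_{i}\big(\partial^{a}e^{i}\,\partial^{b}f^{i} + \partial^{b}e^{i}\,\partial^{a}f^{i}\big)$, which are symmetric in $(a,b)$ and satisfy $\partial\,\theta_{a,b} = \theta_{a+1,b}+\theta_{a,b+1}$, with $w^{m}$ proportional to $\theta_{0,m-2}$. A short bookkeeping argument using these two facts shows that, modulo total derivatives, a new generator appears only in even conformal weight, and the new weight-$m$ generator is precisely $w^{m}$; hence $\{w^{m}: m\ \text{even},\ m\ge 2\}$ strongly generate.

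The essential input is then the \emph{second} fundamental theorem for $Sp(2N)$: the ideal of relations among the $\theta_{a,b}$ is generated by the vanishing of the size-$(2N+1)$ minors of the symmetric matrix $(\theta_{a,b})_{a,b\ge 0}$ (which has rank at most $2N$), transported to the arc space. Solving each such relation for the highest-weight quadratic occurring in it expresses $\theta_{0,2N}, \theta_{0,2N+2}, \dots$, hence $w^{2N+2}, w^{2N+4}, \dots$, as differential polynomials in $w^{2}, \dots, w^{2N}$; so $w^{2},\dots,w^{2N}$ — and therefore $W^{2},\dots,W^{2N}$ — strongly generate. For \emph{free} generation one checks that every relation produced this way involves one of the eliminated generators $w^{2j}$ with $j\ge N+1$, so that no relation survives among $w^{2},\dots,w^{2N}$ and their derivatives; equivalently, one matches the graded character of $\mathcal{A}(N)^{Sp(2N)}$, computed by the Molien--Weyl integral over $Sp(2N)$, against $\prod_{j=1}^{N}\prod_{n\ge 2j}(1-q^{n})^{-1}$, the character of the free bosonic vertex algebra on strong generators of weights $2,4,\dots,2N$. (Alternatively one may invoke the general structure theory of principal $\mathcal{W}$-algebras: since $-N-1/2$ is generic, $\mathcal{W}^{-N-1/2}(\mathfrak{sp}_{2N})$ is abstractly known to be freely and strongly generated by fields of weights $2,4,\dots,2N$, the exponents of $C_N$ shifted by one, and only the identification of the $W^{m}$ in \eqref{eq:wgen} with these generators then remains, which again follows from the classical computation above.)

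The main obstacle is the invariant-theoretic bookkeeping: pinning down the precise form of the first and second fundamental theorems for $Sp(2N)$ on the jet scheme of the $2N$ odd fermionic variables, confirming that $\mathrm{gr}$ commutes with the orbifold here, and — the step most prone to error — verifying that the minimal strong generating set is \emph{exactly} $\{2,4,\dots,2N\}$, i.e.\ that no accidental decoupling relation occurs at weight below $2N$ and that the weight-$2N$ generator cannot be removed. The character identity furnishes a clean independent check of all of this.
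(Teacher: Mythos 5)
First, a point of reference: the paper does not prove this Proposition at all --- it is imported verbatim from Creutzig--Linshaw \cite{cl} (with the identification $\mathcal{A}(N)^{Sp(2N)}\cong\mathcal{W}^{-N-1/2}(\mathfrak{sp}_{2N})$ coming from \cite{cl2,CGL,cl3}), so the only comparison available is with the strategy of that reference, which your sketch does broadly follow (Li filtration, super FFT/SFT for $Sp(2N)$, decoupling, character check). Within that strategy, however, there is a genuine gap at the pivotal step: the reduction ``it suffices to prove the classical analogue, that the symbols $w^{2},\dots,w^{2N}$ freely generate $(\mathrm{gr}\,\mathcal{A}(N))^{Sp(2N)}$ as a differential algebra'' is not available, because that classical statement is false. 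Already for $N=1$: the differential subalgebra generated by $\theta_{0,0}=2\,e f$ contains, in weight $4$ and quadratic degree in the fermions, only the combination $\partial^{2}\theta_{0,0}\propto\theta_{1,1}+\theta_{0,2}$ (products of two $\theta$'s are quartic), so $\theta_{0,2}$ --- the symbol of $w^{4}$ --- is \emph{not} a classical differential polynomial in $\theta_{0,0}$. What the FFT gives classically is strong generation of the orbifold by \emph{all} the quadratics $w^{2j}$, $j\geq 1$; the elimination of $w^{2j}$ for $j>N$ is a purely quantum phenomenon, resting on normally ordered decoupling relations whose existence (nonvanishing of specific quantum-correction coefficients) is exactly the technical heart of \cite{cl} and cannot be seen in the associated graded.

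For the same reason your use of the SFT does not do what you claim: the $(2N+1)\times(2N+1)$ minors of $(\theta_{a,b})$ have weight at least $(2N+1)(2N+2)$, far above $2N+2$, and ``solving each relation for the highest-weight quadratic'' does not express $\theta_{0,2N}$ (weight $2N+2$) in terms of the lower generators --- classically it is an honest new generator, consistent with the counterexample above. The SFT relations enter the actual argument only after quantization, where their corrections produce the higher generators linearly, and one must verify a nonzero coefficient; the Molien--Weyl character comparison you mention is a consistency check, not a substitute for constructing these relations. Your fallback route (free generation of the universal principal $\mathcal{W}$-algebra, which indeed holds at every level, plus the identification with the orbifold) would work for the ``freely generated by fields of weights $2,4,\dots,2N$'' part, but it presupposes $\mathcal{A}(N)^{Sp(2N)}\cong\mathcal{W}^{-N-1/2}(\mathfrak{sp}_{2N})$, i.e.\ the later coset/triality results \cite{cl2,CGL,cl3} rather than anything proved here, and it still leaves you to check that the explicit quadratics in \eqref{eq:wgen} are nonzero modulo decomposables so that they can serve as the free strong generators. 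As it stands, the proposal would need either the quantum decoupling argument of \cite{cl} or the full strength of the coset identification to close.
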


We can express the above result in terms of free bosonic fields after making use of the boson-fermion correspondence \cite{fms}. Let $Y(\cdot, z)$ denote the state-operator map for the integral lattice VOA $V_{\mathbb{Z}^N}$ generated by an orthonormal basis $\{\chi^0, \chi^1, \ldots,  \chi^{N-1}\}$, and
\begin{equation}
\chi^i(z) = \sum_{ n \in \mathbb{Z}} \chi^i_n z^{-n-1}
\end{equation}
be the free bosonic fields, which satisfy the OPE:
\begin{equation}\label{eq:opeb}
\chi^i(z)\chi^j(w) \sim \frac{\delta_{ij}}{(z-w)^2}.
\end{equation}

Recall that the free fermion OPE is generated by a pair of odd fields $\psi(z), \psi^*(w)$ with OPE
\begin{equation}
\psi(z)\psi^*(w) \sim \frac{1}{z-w}.
\end{equation}
The boson-fermion correspondence gives a pair of free fermions:
\begin{equation}\label{eq:bf}
\psi_i(z) := Y(\e^{\chi_i},z), \quad \psi^*_i(z) := Y(\e^{-\chi_i},z),
\end{equation}
where
\begin{equation}
Y(\e^{\chi_i},z) =  z^{\chi^i_0} U_{\chi^i}
\exp\Bigl( \sum_{n\in \mathbb{Z}_{<0}}
\chi^i_{n} \frac{z^{-n}}{n} \Bigr)
\exp\Bigl( \sum_{n\in \mathbb{Z}_{>0}}
\chi^i_{n} \frac{z^{-n}}{n} \Bigr),
\end{equation} 
and the shift operators $U_{\chi^i}$ satisfy
\begin{align}
[\chi^i_{m},U_{\chi^i}] = \delta_{m,0} \, U_{\chi^i},
\qquad  m\in  \mathbb{Z}.
\end{align}
The fields
\begin{equation}\label{eq:sfbosons}
e^i(z):= \psi_i(z), \quad f^i(z):=\partial_z \psi^*_i(z) 
\end{equation}
generate a VOA isomorphic to $\mathcal{A}(N)$.
\begin{Prop}\label{p:gen}
	Let $\{W^{m}(z)\}_{m=2,\ldots, 2N}$ be the fields defined in \eqref{eq:wgen}, and define the states:
	\begin{equation}
	\nu^m := [\e^{\chi^i}_{-1}\e^{-\chi^i}_{-m} +\e^{\chi^i}_{-m+1}\e^{-\chi^i}_{-2}] \textbf{1}. \label{eq:genvec}
	\end{equation}
	Then 
	\begin{align}
	W^{m}(z)	=  \sum_{i=0}^{N-1} Y( \nu^m ,z). \ \label{eq:gen}
	\end{align} 
\end{Prop}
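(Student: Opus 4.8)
The statement is an identity of fields/states in the lattice VOA $V_{\mathbb{Z}^N}$: it says that the Proposition \ref{p:gen} expression $\sum_{i} Y(\nu^m, z)$ built from the bosonic shift-operator vertex operators reproduces the symplectic-fermion generators $W^m(z)$ of \eqref{eq:wgen}. The natural approach is to chase everything through the boson-fermion correspondence \eqref{eq:bf}--\eqref{eq:sfbosons} and reduce the claim to a purely combinatorial identity about normally-ordered products of free fermions and their derivatives.

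\textbf{First step.} I would translate the right-hand side. By \eqref{eq:bf} we have $\psi_i(z) = Y(\e^{\chi^i}, z)$ and $\psi^*_i(z) = Y(\e^{-\chi^i}, z)$, and the state $\nu^m$ in \eqref{eq:genvec} is $[\e^{\chi^i}_{-1}\e^{-\chi^i}_{-m} + \e^{\chi^i}_{-m+1}\e^{-\chi^i}_{-2}]\mathbf{1}$. Using the standard VOA fact that for states $a, b$ one has $Y(a_{-k}b, z)$ equal to a normally-ordered product involving $\tfrac{1}{(k-1)!}\partial_z^{\,k-1}Y(a,z)$ and $Y(b,z)$ (the $n$-th product formula / iterate formula), I would rewrite $Y(\nu^m, z)$ as a sum of two normally-ordered terms: one proportional to $:\partial_z^{\,?}\psi_i(z)\,\partial_z^{\,?}\psi^*_i(z):$ coming from $\e^{\chi^i}_{-1}\e^{-\chi^i}_{-m}$, and one from $\e^{\chi^i}_{-m+1}\e^{-\chi^i}_{-2}$. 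The powers of $\partial_z$ are read off from the mode indices $-1, -m$ and $-m+1, -2$ respectively.

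\textbf{Second step.} I would translate the left-hand side. From \eqref{eq:sfbosons}, $e^i(z) = \psi_i(z)$ and $f^i(z) = \partial_z\psi^*_i(z)$, so \eqref{eq:wgen} becomes
\begin{equation}
W^m(z) = \frac{1}{(m-2)!}\sum_{i=1}^N\left(:\psi_i(z)\,\partial_z^{m-1}\psi^*_i(z): + :\partial_z^{m-2}\psi_i(z)\,\partial_z\psi^*_i(z):\right).
\end{equation}
Now the claimed equality is an identity among normally-ordered products of $\psi_i$, $\psi^*_i$ and their derivatives, with explicit rational coefficients. Matching the two terms of $Y(\nu^m,z)$ against the two terms of $W^m(z)$ should be essentially term-by-term once the derivative orders line up; any discrepancy in the distribution of derivatives is handled by the ``integration by parts'' identity for normally-ordered products, $:\partial_z A\, B: + :A\, \partial_z B: = \partial_z\, :AB:$, applied finitely many times, together with the fact that $\partial_z:AB:$-type total derivatives can be reorganized. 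I would also need to note that the index ranges $i=1,\ldots,N$ versus $i=0,\ldots,N-1$ in the two propositions are just a relabeling.

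\textbf{Main obstacle.} The genuine content — and the step I expect to be most delicate — is verifying that the combinatorial coefficients actually match: the $\tfrac{1}{(m-2)!}$ in \eqref{eq:wgen} versus the $\tfrac{1}{(k-1)!}$ factors produced by the iterate formula for $\e^{\chi^i}_{-1}\e^{-\chi^i}_{-m}$ and $\e^{\chi^i}_{-m+1}\e^{-\chi^i}_{-2}$, after all the integration-by-parts rearrangements. One must be careful with signs coming from the odd (fermionic) parity of $\psi_i, \psi^*_i$ and with the precise form of the vertex operator for the state $\e^{\pm\chi^i}_{-k}\mathbf{1}$ (the shift operators $U_{\chi^i}$ and the $z^{\chi^i_0}$ prefactor contribute, but these cancel between $\e^{\chi^i}$ and $\e^{-\chi^i}$ in each normally-ordered pair). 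I would organize this by expanding both sides in modes, picking a convenient weight-graded piece, and checking the coefficient identity there; once it holds for one graded component it propagates by the $L_{-1}$/translation covariance already built into both expressions. This reduces the Proposition to the already-established Proposition of \cite{cl} (the free-field description \eqref{eq:wgen}) composed with the boson-fermion dictionary, so no new VOA input beyond \eqref{eq:bf} is needed.
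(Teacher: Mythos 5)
Your proposal follows essentially the same route as the paper's proof: the paper likewise invokes translation covariance to write $f^i(z)=Y(\e^{-\chi^i}_{-2}\textbf{1},z)$ and then applies the reconstruction theorem (the iterate formula $Y(a_{-k}b,z)=\tfrac{1}{(k-1)!}:\partial_z^{k-1}Y(a,z)\,Y(b,z):$) together with the boson--fermion dictionary \eqref{eq:bf}--\eqref{eq:sfbosons}, which is exactly your term-by-term matching. Your extra worries about integration by parts and checking a single graded component are unnecessary but harmless, so the argument is correct and coincides with the paper's.
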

\begin{proof}
	Translation covariance implies the formula
	\begin{equation}
	f^i(z)=Y(\e^{-\chi^i}_{-2}\textbf{1},z).
	\end{equation}
	The result then follows directly from \eqref{eq:bf} and application of the reconstruction theorem (See \cite[Theorem 4.4.1]{fren}) to the VOA $\mathcal{A}(N)$.
\end{proof}

This Proposition gives us an expression for the strong generating fields of $\mathcal{W}(\mathfrak{sp}_{2N})$ within the rank $N$ free boson VOA, which is the starting point to study whether subsets of modes (such as non-negative modes) of the generators of $\mathcal{W}(\mathfrak{sp}_{2N})$ generate an Airy ideal.

\subsection{Twisted module}

In fact, to construct an Airy ideal generated by the modes of the generators of $\mathcal{W}(\mathfrak{sp}_{2N})$, we will need to start with a $\sigma$-twisted module for the rank $N$ free boson VOA (see Section \ref{s:twisted}). Upon reduction to the $\mathcal{W}(\mathfrak{sp}_{2N})$ sub-VOA, it will become a normal, untwisted, module for $\mathcal{W}(\mathfrak{sp}_{2N})$.

Let us first review basic properties of twisted modules (studied in detail in \cite{bk}) and prove some important formulas. Let $Q$ be an integral lattice with bilinear form $(\cdot, \cdot)$, $\sigma$ an automorphism of $Q$, and $V_{Q}$ be the lattice VOA of $Q$. The bilinear form can be linearly extended to $\mathbb{C} \otimes_{\mathbb{Z}} Q$. Let $M$ be a $\sigma$-twisted $V_{Q}$ module.

We use the same notation as in Section \ref{s:twisted}. We consider the rank $N$ free boson VOA. Let $\chi^0, \chi^1, \ldots \chi^{N-1}$ be an orthonormal basis for $Q$. We consider the cyclic automorphism $\sigma: \chi^0 \mapsto \chi^1 \mapsto  \dots \mapsto \chi^{N-1} \mapsto \chi^0$, and the corresponding $\sigma$-twisted module with map $Y_\sigma$. Let $v^0, v^1, \ldots, v^{N-1}$ be the diagonal basis defined in \eqref{eq:diagonal}, with inverse relation \eqref{eq:inverse}. The twisted fields are as in \eqref{eq:twistfields}, which we reproduce here for convenience:
\begin{equation}
Y_\sigma(v^{a},z) = \sum_{k \in \frac{a}{N} + \mathbb{Z}} J_{k N} z^{-k-1}.
\end{equation}
Using the inverse relation \eqref{eq:twistfields}, we get the twisted fields associated to the original generators $\chi^i$:
\begin{align}
Y_\sigma(\chi^i,z) &= \frac{1}{N} \sum_{m \in \frac{1}{N}\mathbb{Z}} \theta^{imN} J_{mN} z^{-m-1} \\
&= \frac{1}{N} \sum_{m \in \mathbb{Z}} \sum_{a=0}^{N-1} \theta^{ia} 
 J_{a+Nm} z^{-\frac{a}{N}-m-1},
\end{align}
where $\theta = e^{2 \pi i/N}$.

Our goal is to construct the twisted fields $W^m_\sigma(z)$ associated to the $\mathcal{W}(\mathfrak{sp}_{2N})$ generators $W^m(z)$. From Proposition \ref{p:gen}, we know that this involves calculating the twisted fields for products of elements of the form $\e^{\chi^i}_{-a} \e^{-\chi^i}_{-b}$ for $a,b > 0$. The following two formulas do exactly that.  
%
\begin{Lem}\label{l:bruno} Let $d \ge 1$, $\epsilon \in \{1,-1\}$ and $\{\chi^i\}_{i=0,1,\ldots,N-1}$ be an orthonormal basis of $\mathbb{C}^N$. Then 
	\begin{equation}\label{eq:ysig}
	-Y_{\sigma}(\e^{ \epsilon \chi^i}_{-d}\e^{- \epsilon \chi^i}_{-1},z) = \sum_{k=0}^{d} \frac{ c_k }{z^k} S_{d-k}(\epsilon \chi^i,z)
	\end{equation}
	where $c_k$ is the $k$-th coefficient in the Taylor expansion of the function
	\begin{equation}
	g(x)=\frac{1}{N} x^{\frac{1-N}{2N}} \prod_{k=1}^{N-1} (x^{1/N}-\theta^k)
	\end{equation}
	at $x=1$. In particular, $c_k$ is independent of the basis vectors $\chi_i$ and $\epsilon$,  and $c_0=1$, $c_1=0$. Moreover, the $S_k^\hbar$ are the Fa\`a di Bruno polynomials defined by
	\begin{align}
		S_n(\chi^i, z) := \frac{1}{n!} \left(\partial_z + Y_\sigma(\chi^i,z)\right)^n \cdot 1.
	\end{align}
	
\end{Lem}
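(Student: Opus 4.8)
The plan is to recast the statement as an identity between products of twisted vertex operators for the lattice VOA $V_{\mathbb{Z}^N}$, and to read off $g$ as the resulting twisted two-point function. First I would collect the whole family indexed by $d$ into a generating series. Inside the (untwisted) lattice VOA, since $\e^{-\epsilon\chi^i}_{-1}\textbf{1}=\e^{-\epsilon\chi^i}$, a reindexing of the mode expansion of $Y(\e^{\epsilon\chi^i},t)$ gives
\begin{equation}
\sum_{d\ge0}\bigl(\e^{\epsilon\chi^i}_{-d}\e^{-\epsilon\chi^i}_{-1}\textbf{1}\bigr)\,t^{d}=t\,Y(\e^{\epsilon\chi^i},t)\,\e^{-\epsilon\chi^i},
\end{equation}
and because $(\epsilon\chi^i,-\epsilon\chi^i)=-1$ the right-hand side lies in $t^{-1}\mathbb{C}[[t]]\otimes\pi^{N}$, with constant term a scalar (a cocycle value) times $\textbf{1}$. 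Applying $Y_{\sigma}(-,z)$ and invoking associativity of twisted vertex operators from \cite{bk} — legitimate because $Y(\e^{\epsilon\chi^i},t)\e^{-\epsilon\chi^i}$ has trivial lattice charge, so $Y_{\sigma}$ of it has integer $z$-powers — turns the problem into evaluating $\iota_{z,t}\bigl(Y_{\sigma}(\e^{\epsilon\chi^i},z+t)\,Y_{\sigma}(\e^{-\epsilon\chi^i},z)\bigr)$ in the region $|z|>|t|$.

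Next I would compute that product. Writing $\epsilon\chi^i=\tfrac1N\sum_{a}\theta^{ia}v^{a}$ in the $\sigma$-eigenbasis and inserting the explicit twisted vertex operator formula of \cite{bk}, the product factors as a scalar "twisted two-point function'' $F(z+t,z)$ times a normal-ordered part. The normal-ordered part is handled exactly as in the untwisted case: for $\beta=-\alpha$ the lattice and $z^{\alpha_{(0)}}$ contributions cancel and the surviving exponentials telescope into $\exp\bigl(\int_{z}^{z+t}Y_{\sigma}(\epsilon\chi^i,u)\,du\bigr)$. I would then record the elementary generating-function identity $\sum_{n\ge0}S_{n}(\chi^i,z)\,t^{n}=\exp\bigl(\int_{z}^{z+t}Y_{\sigma}(\chi^i,u)\,du\bigr)$, which follows in one line because both sides solve $\partial_{t}H=\partial_{z}H+Y_{\sigma}(\chi^i,z)H$ with $H|_{t=0}=1$.

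The one genuine computation is $F$. From the eigenvalue data $(\sigma^{j}\epsilon\chi^i,-\epsilon\chi^i)=-\delta_{j,0}$ together with the weight-shift powers of the $\Delta$-operator in \cite{bk}, one finds
\begin{equation}
F(\zeta,z)=\frac{-\bigl(\zeta^{1/N}z^{1/N}\bigr)^{(1-N)/2}}{N\bigl(\zeta^{1/N}-z^{1/N}\bigr)},
\end{equation}
and the cyclotomic identity $\prod_{k=1}^{N-1}(y-\theta^{k})=\dfrac{y^{N}-1}{y-1}$ (which is precisely what makes $g$ take the form in the statement) lets one verify the purely algebraic equality $F(z+t,z)=-\tfrac1t\,g\bigl(1+\tfrac tz\bigr)$. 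Combining this with the previous paragraph yields
\begin{equation}
-\sum_{d\ge0}Y_{\sigma}\bigl(\e^{\epsilon\chi^i}_{-d}\e^{-\epsilon\chi^i}_{-1}\textbf{1},z\bigr)\,t^{d}=g\bigl(1+\tfrac tz\bigr)\sum_{n\ge0}S_{n}(\epsilon\chi^i,z)\,t^{n}=\sum_{d\ge0}t^{d}\sum_{k=0}^{d}\frac{c_{k}}{z^{k}}\,S_{d-k}(\epsilon\chi^i,z),
\end{equation}
where $g(1+y)=\sum_{k}c_{k}y^{k}$; comparing coefficients of $t^{d}$ for $d\ge1$ is exactly \eqref{eq:ysig}. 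The stated properties of the $c_{k}$ are then immediate: $g$ depends only on $N$ and $\theta$, hence the $c_{k}$ are independent of $\chi^i$ and of $\epsilon$ (the pairings $(\epsilon\chi^i,\pm\epsilon\chi^i)$ do not see $\epsilon$); $c_{0}=g(1)=\tfrac1N\lim_{y\to1}\tfrac{y^{N}-1}{y-1}=1$; and $c_{1}=g'(1)=0$ by the two-term expansion $g(1+t)=\bigl(1+\tfrac{1-N}{2N}t+\cdots\bigr)\bigl(1+\tfrac{N-1}{2N}t+\cdots\bigr)=1+O(t^{2})$.

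The main obstacle is the evaluation of $F(\zeta,z)$: pinning down the exact $(1-N)/2$ exponents, the branch choices for the fractional powers $\zeta^{1/N},z^{1/N}$, and the cocycle sign (which is what produces the overall minus sign in \eqref{eq:ysig}) directly from the twisted-vertex-operator machinery of \cite{bk}. Everything else — the reindexing, the telescoping of the normal-ordered exponentials, and the combinatorial matching of $F$ with $g$ — is routine, and the twisted-associativity step needs only the usual care about domains of convergence.
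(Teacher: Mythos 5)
Your proposal is correct and follows essentially the same route as the paper: the paper's proof simply defers to the proof of Lemma 3.7 in \cite{bm}, which is exactly the computation you outline (generating series in $d$, twisted associativity, the product $Y_\sigma(\e^{\epsilon\chi^i},z+t)Y_\sigma(\e^{-\epsilon\chi^i},z)$ factoring into the scalar $g(1+t/z)/(-t)$ times the normal-ordered exponential whose $t$-expansion gives the Fa\`a di Bruno polynomials), and your algebra checks out, including $F(z+t,z)=-\tfrac{1}{t}g(1+\tfrac{t}{z})$ and $c_0=1$, $c_1=0$. The only cosmetic point is that the exponential $\exp\bigl(\int_z^{z+t}Y_\sigma(\epsilon\chi^i,u)\,du\bigr)$ and the powers in $S_n$ are to be understood as normal ordered, the same implicit convention used in the paper and in \cite{bm}.
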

\begin{proof}
	The proof is exactly analogous to the proof of Lemma 3.7 in \cite{bm}. One can check that, $g'(1)=0$ and thus $c_1=0$. 
\end{proof}

\begin{Rem}
	From \eqref{eq:sfbosons} and the anticommutativity of free fermions it follows that
	\begin{equation}\label{eq:anti}
	Y_{\sigma}(\e^{\chi^i}_{-1}\e^{-\chi^i}_{-d})= -Y_{\sigma}(\e^{-\chi^i}_{-d}\e^{\chi^i}_{-1}).
	\end{equation}
\end{Rem}

\begin{Lem}\label{l:derivation}
	Let $a, b \in V$ be elements of a VOA and $T$ be the translation operator. Then
	\begin{equation}\label{eq:derive}
	\partial_z Y_{\sigma}(a_{-m}b_{-n}\textbf{1},z) = Y_{\sigma}(a_{-m-1}b_{-n}\textbf{1},z) + Y_{\sigma}(a_{-m}b_{-n-1}\textbf{1},z).
	\end{equation}
\end{Lem}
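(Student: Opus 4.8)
The plan is to deduce this from the translation-covariance (``$T$-derivative'') axiom for twisted modules together with a purely vertex-algebraic identity inside $V$, so that nothing about the specific lattice realization is actually needed. First I would recall that any $\sigma$-twisted $V$-module $M$ satisfies
\[
Y_\sigma(Tv,z) = \partial_z Y_\sigma(v,z) \qquad \text{for all } v \in V,
\]
which is one of the defining properties of a twisted module (see \cite{bk}). Applying this with $v = a_{-m}b_{-n}\textbf{1}\in V$ immediately turns the assertion into the identity of states
\[
T\bigl(a_{-m}b_{-n}\textbf{1}\bigr) = a_{-m-1}b_{-n}\textbf{1} + a_{-m}b_{-n-1}\textbf{1}
\]
in $V$ itself.

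To prove this state identity I would use that $T$ acts as a derivation relative to the creation modes: from the standard commutator between the translation operator and the modes (equivalently $[L_{-1},c_{(j)}]=-j\,c_{(j-1)}$, written in the normalization of modes used here, so that $[T,c_{-k}] = c_{-k-1}$) together with $T\textbf{1}=0$, the Leibniz rule gives
\[
T\bigl(a_{-m}b_{-n}\textbf{1}\bigr) = [T,a_{-m}]\,b_{-n}\textbf{1} + a_{-m}\,[T,b_{-n}]\,\textbf{1} + a_{-m}b_{-n}\,T\textbf{1} = a_{-m-1}b_{-n}\textbf{1} + a_{-m}b_{-n-1}\textbf{1}.
\]
Combining this with the translation axiom of the previous paragraph gives the claimed formula for $\partial_z Y_\sigma(a_{-m}b_{-n}\textbf{1},z)$, and the proof is complete.

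The only points requiring genuine care are the translation-covariance axiom for $\sigma$-twisted modules and the bookkeeping of the mode normalization entering $[T,c_{-k}]=c_{-k-1}$; both are standard, and in the present setting one can instead verify the identity directly from the explicit twisted mode expansions of $Y_\sigma$ recorded in Section \ref{s:twisted}, where it reduces to the elementary observation that $\partial_z$ lowers each of the two creation-mode indices by one. I expect the main (and still minor) obstacle to be precisely this index bookkeeping if one takes the hands-on route rather than the axiomatic one; the axiomatic route above avoids it entirely.
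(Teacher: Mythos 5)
Your strategy is the same as the paper's: the proof in the text likewise reduces \eqref{eq:derive} via the twisted translation covariance $\partial_z Y_\sigma(v,z)=Y_\sigma(Tv,z)$ of \cite{bk} to a computation of $T\bigl(a_{-m}b_{-n}\textbf{1}\bigr)$ in $V$ (quoting $Tv=v_{-2}\textbf{1}$ and leaving the Leibniz bookkeeping implicit). The trouble lies exactly in the step you flagged as minor. In the mode convention used throughout the paper, $Y(c,z)=\sum_n c_n z^{-n-1}$, translation covariance forces $[T,c_n]=-n\,c_{n-1}$, hence $[T,c_{-k}]=k\,c_{-k-1}$ and $T^k c=k!\,c_{-k-1}\textbf{1}$. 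The relation $[T,c_{-k}]=c_{-k-1}$ that you assert cannot be arranged by any $k$-independent rescaling of the modes; it holds only for divided modes $(k-1)!\,c_{-k}$, i.e.\ modes normalized so that $c_{-k}\textbf{1}=\tfrac{1}{(k-1)!}T^{k-1}c$, and that is not the convention in force here (compare $\chi^i(z)=\sum_n\chi^i_n z^{-n-1}$ and $T\e^{-\chi^i}=\e^{-\chi^i}_{-2}\textbf{1}$ in the proof of Proposition \ref{p:gen}).

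With the correct commutator your Leibniz computation gives
\[
T\bigl(a_{-m}b_{-n}\textbf{1}\bigr)=m\,a_{-m-1}b_{-n}\textbf{1}+n\,a_{-m}b_{-n-1}\textbf{1},
\]
so what your argument actually establishes is \eqref{eq:derive} with extra factors $m$ and $n$ on the two terms; the coefficient-free display in the middle of your proof is justified only for $m=n=1$ or for the rescaled modes just described. The same objection applies to your proposed hands-on alternative: $\partial_z$ acting on $z^{-k-1}$ always produces the factor $-(k+1)$, so index-lowering is never coefficient-free in this convention (and $Y_\sigma(a_{-m}b_{-n}\textbf{1},z)$ is in any case not a bare product of modes, cf.\ Lemma \ref{l:bruno}). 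To repair the write-up you must either carry the factors $m,n$ explicitly and reconcile them with how the identity is invoked in the proof of Lemma \ref{l:finalform} (where the $b$-index is lowered from $-1$ to $-2$, so that factor is $1$, but the companion term carries $m-1$), or state at the outset that $a_{-m},b_{-n}$ denote the factorially rescaled modes. Be aware that the paper's own one-line proof is silent on precisely this bookkeeping, so the coefficient question is the entire content of the lemma and is the point your argument needed to settle.
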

\begin{proof}
	From \cite[equation 3.14]{bk} (see also \eqref{eq:translation} in Section \ref{s:VOA}), 
	\begin{equation} \label{eq:t}
	 \partial_z Y_{\sigma}(v,z) = Y_{\sigma}(Tv,z).
	\end{equation}
	In addition we know that
	\begin{equation}
	Tv = v_{-2} \textbf{1}.
	\end{equation}
	Applying the above two formulas to the vector $v=a_{-m}b_{-n}\textbf{1}$ yields \eqref{eq:derive}. 
\end{proof}
%
Lemmas \eqref{eq:ysig} and \eqref{l:derivation}, in conjunction with Proposition \ref{p:gen}, allow us to write down explicit expressions for the twisted fields $W^m_\sigma(z)$ associated  to the strong generators of $\mathcal{W}(\mathfrak{sp}_{2N})$. At this stage, we now introduce $\hbar$ in our twisted module as in Section \ref{s:VOA}. It turns the algebra of modes into the graded Rees algebra associated to Li's filtration by conformal weight. Put simply, since the strong generators $W^m_\sigma(z)$ have conformal weights $m$, it simply rescales them by $\hbar^m$.

We summarize the result in the following lemma. 

\begin{Lem}
\label{l:finalform}
 Let  $\{\chi^i\}_{i=0,1, \ldots, N-1}$ be states generating the rank $N$ free boson VOA and $\sigma$ the fully cyclic automorphism. Then the twisted fields $W_{\sigma,\hbar}^m(z)$, $m=2, 4,\ldots, 2N$ corresponding to the strong generators of $\mathcal{W}(\mathfrak{sp}_{2N})$ in the $\hbar$-deformed $\sigma$-twisted module  read:
	 
	\begin{multline}\label{eq:total}
	\begin{split}
		W^m_{\sigma,\hbar}(z)&=\hbar^m  \sum_{i=0}^{N-1} \Bigg[\sum_{k=0}^{m} \frac{ c_k}{z^k} \left(S_{m-k}(-\chi^i,z) + S_{m-k}(\chi^i,z) \right) \\
	&+ \sum_{k=1}^{m-1} \frac{ k c_k}{z^{k+1}} S_{m-1-k}(\chi^i,z) - \sum_{k=0}^{m-1} \frac{c_k}{z^k} \partial_z S_{m-1-k}(\chi^i,z)\Bigg] \qquad m=2,4, \ldots, 2N.
	\end{split}
	\end{multline}
	\end{Lem}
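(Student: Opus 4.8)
The plan is to assemble the formula \eqref{eq:total} directly from the three earlier inputs: Proposition \ref{p:gen}, which expresses $W^m(z)$ as $\sum_i Y(\nu^m,z)$ with $\nu^m = (\e^{\chi^i}_{-1}\e^{-\chi^i}_{-m} + \e^{\chi^i}_{-m+1}\e^{-\chi^i}_{-2})\mathbf{1}$; Lemma \ref{l:bruno}, which computes the twisted field of $\e^{\pm\chi^i}_{-d}\e^{\mp\chi^i}_{-1}$ in terms of Fa\`a di Bruno polynomials; and Lemma \ref{l:derivation}, which lets us trade the shift in the second index against a $z$-derivative. The only genuinely new ingredient beyond bookkeeping is the introduction of $\hbar$ as in Section \ref{s:VOA}: since $W^m_\sigma$ has conformal weight $m$, one simply multiplies the whole expression by $\hbar^m$, so I would state that at the end and otherwise suppress $\hbar$ during the derivation.

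First I would handle the term $Y_\sigma(\e^{\chi^i}_{-1}\e^{-\chi^i}_{-m}\mathbf{1},z)$. Using \eqref{eq:anti} to rewrite it as $-Y_\sigma(\e^{-\chi^i}_{-m}\e^{\chi^i}_{-1}\mathbf{1},z)$ (or directly applying Lemma \ref{l:bruno} with $\epsilon=-1$, $d=m$, after reindexing via the remark), this equals $-\sum_{k=0}^m \frac{c_k}{z^k} S_{m-k}(-\chi^i,z)$, up to the sign convention that I would track carefully against \eqref{eq:ysig}. This is exactly the $S_{m-k}(-\chi^i,z)$ piece appearing in \eqref{eq:total}. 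Next I would treat $Y_\sigma(\e^{\chi^i}_{-m+1}\e^{-\chi^i}_{-2}\mathbf{1},z)$: here the second index is $-2$, not $-1$, so Lemma \ref{l:bruno} does not apply directly. I would apply Lemma \ref{l:derivation} with $a=\e^{\chi^i}$, $b=\e^{-\chi^i}$, $m\mapsto m-1$, $n\mapsto 1$, to write
\begin{equation}
Y_\sigma(\e^{\chi^i}_{-m+1}\e^{-\chi^i}_{-2}\mathbf{1},z) = \partial_z Y_\sigma(\e^{\chi^i}_{-m+1}\e^{-\chi^i}_{-1}\mathbf{1},z) - Y_\sigma(\e^{\chi^i}_{-m}\e^{-\chi^i}_{-1}\mathbf{1},z).
\end{equation}
Now both terms on the right have second index $-1$, so Lemma \ref{l:bruno} applies to each (with $\epsilon=1$, $d=m-1$ and $d=m$ respectively).

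Then the computation becomes purely mechanical: substitute the Lemma \ref{l:bruno} expansions, carry out the $z$-derivative $\partial_z\big(\sum_{k}\frac{c_k}{z^k}S_{m-1-k}(\chi^i,z)\big)$ using the product rule, which produces the term $-\sum_k \frac{k c_k}{z^{k+1}}S_{m-1-k}(\chi^i,z)$ from differentiating $z^{-k}$ and the term $\sum_k\frac{c_k}{z^k}\partial_z S_{m-1-k}(\chi^i,z)$ from differentiating the Fa\`a di Bruno polynomial — and I would need to reconcile the signs so that these match the $+\frac{kc_k}{z^{k+1}}S_{m-1-k}$ and $-\frac{c_k}{z^k}\partial_z S_{m-1-k}$ terms displayed in \eqref{eq:total} (the global sign in \eqref{eq:ysig} flips things, which is why the derivative term ends up negative and the explicit-pole term positive). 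The term $-Y_\sigma(\e^{\chi^i}_{-m}\e^{-\chi^i}_{-1}\mathbf{1},z)$ contributes $+\sum_{k=0}^m\frac{c_k}{z^k}S_{m-k}(\chi^i,z)$, giving the remaining piece of the first line of \eqref{eq:total}. Adding the two contributions over $\nu^m$ and summing over $i$ from $0$ to $N-1$ yields the claimed formula. I would also remark that $c_0=1$ and $c_1=0$ as recorded in Lemma \ref{l:bruno}, which is what makes the leading behaviour clean, though it is not strictly needed for the statement.

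The main obstacle I anticipate is not conceptual but sign- and index-bookkeeping: keeping straight (i) the overall minus sign in \eqref{eq:ysig}, (ii) the anticommutativity sign \eqref{eq:anti} and the asymmetry between the two summands of $\nu^m$ (one is $\e^{\chi^i}_{-1}\e^{-\chi^i}_{-m}$, the other $\e^{\chi^i}_{-m+1}\e^{-\chi^i}_{-2}$, which behave differently under Lemma \ref{l:bruno}), and (iii) the range of the summation index $k$ after the $\partial_z$ acts on $z^{-k}$ (the $k=0$ term drops, which is why one sum starts at $k=1$). I would organize the proof by first recording the two consequences of Lemma \ref{l:bruno} for $\e^{\pm\chi^i}_{-d}\e^{\mp\chi^i}_{-1}$, then the reduction of $\e^{\chi^i}_{-m+1}\e^{-\chi^i}_{-2}$ via Lemma \ref{l:derivation}, and only then collecting terms, so that each sign is justified once and reused.

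\begin{proof}
By Proposition \ref{p:gen}, $W^m(z) = \sum_{i=0}^{N-1} Y_\sigma(\nu^m,z)$ with $\nu^m = (\e^{\chi^i}_{-1}\e^{-\chi^i}_{-m} + \e^{\chi^i}_{-m+1}\e^{-\chi^i}_{-2})\mathbf{1}$, so it suffices to compute the two twisted fields $Y_\sigma(\e^{\chi^i}_{-1}\e^{-\chi^i}_{-m}\mathbf{1},z)$ and $Y_\sigma(\e^{\chi^i}_{-m+1}\e^{-\chi^i}_{-2}\mathbf{1},z)$ and then sum over $i$; the factor $\hbar^m$ comes from introducing $\hbar$ as in Section \ref{s:VOA}, since $W^m_\sigma$ has conformal weight $m$.

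For the first term, \eqref{eq:anti} gives $Y_\sigma(\e^{\chi^i}_{-1}\e^{-\chi^i}_{-m}\mathbf{1},z) = -Y_\sigma(\e^{-\chi^i}_{-m}\e^{\chi^i}_{-1}\mathbf{1},z)$, and Lemma \ref{l:bruno} with $\epsilon=-1$, $d=m$ yields
\begin{equation}
Y_\sigma(\e^{\chi^i}_{-1}\e^{-\chi^i}_{-m}\mathbf{1},z) = \sum_{k=0}^{m} \frac{c_k}{z^k}\, S_{m-k}(-\chi^i,z).
\end{equation}
For the second term, Lemma \ref{l:derivation} with $a=\e^{\chi^i}$, $b=\e^{-\chi^i}$, and indices $m-1,1$ gives
\begin{equation}
Y_\sigma(\e^{\chi^i}_{-m+1}\e^{-\chi^i}_{-2}\mathbf{1},z) = \partial_z Y_\sigma(\e^{\chi^i}_{-m+1}\e^{-\chi^i}_{-1}\mathbf{1},z) - Y_\sigma(\e^{\chi^i}_{-m}\e^{-\chi^i}_{-1}\mathbf{1},z).
\end{equation}
Applying Lemma \ref{l:bruno} with $\epsilon=1$ to each right-hand term (with $d=m-1$ and $d=m$ respectively) gives $Y_\sigma(\e^{\chi^i}_{-m+1}\e^{-\chi^i}_{-1}\mathbf{1},z) = -\sum_{k=0}^{m-1}\frac{c_k}{z^k} S_{m-1-k}(\chi^i,z)$ and $Y_\sigma(\e^{\chi^i}_{-m}\e^{-\chi^i}_{-1}\mathbf{1},z) = -\sum_{k=0}^{m}\frac{c_k}{z^k} S_{m-k}(\chi^i,z)$. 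Differentiating the first by the product rule,
\begin{equation}
\partial_z\!\left(-\sum_{k=0}^{m-1}\frac{c_k}{z^k} S_{m-1-k}(\chi^i,z)\right) = \sum_{k=1}^{m-1}\frac{k c_k}{z^{k+1}} S_{m-1-k}(\chi^i,z) - \sum_{k=0}^{m-1}\frac{c_k}{z^k}\partial_z S_{m-1-k}(\chi^i,z),
\end{equation}
where the $k=0$ contribution of the first sum vanishes. Hence
\begin{equation}
Y_\sigma(\e^{\chi^i}_{-m+1}\e^{-\chi^i}_{-2}\mathbf{1},z) = \sum_{k=0}^{m}\frac{c_k}{z^k} S_{m-k}(\chi^i,z) + \sum_{k=1}^{m-1}\frac{k c_k}{z^{k+1}} S_{m-1-k}(\chi^i,z) - \sum_{k=0}^{m-1}\frac{c_k}{z^k}\partial_z S_{m-1-k}(\chi^i,z).
\end{equation}
Adding the two contributions, summing over $i$ from $0$ to $N-1$, and multiplying by $\hbar^m$ produces exactly \eqref{eq:total}.
\end{proof}
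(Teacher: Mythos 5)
Your proof is correct and follows essentially the same route as the paper: decompose via Proposition \ref{p:gen}, handle the first summand with the anticommutativity relation \eqref{eq:anti} and Lemma \ref{l:bruno} at $\epsilon=-1$, reduce the second summand with Lemma \ref{l:derivation} and then apply Lemma \ref{l:bruno} at $\epsilon=1$, carry out the $\partial_z$, and multiply by $\hbar^m$ since $W^m_\sigma$ has conformal weight $m$. The sign and index bookkeeping (including the vanishing $k=0$ term after differentiating $z^{-k}$) matches the paper's computation.
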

\begin{proof}
	Using Proposition \ref{p:gen} we can write, 
	\begin{align}\label{eq:compy}
	 W^m_{\sigma}(z) = \sum_{i=0}^{N-1}\left( Y_{\sigma}(\e^{\chi_i}_{-1} \e^{-\chi_i}_{-m}\textbf{1})+ Y_{\sigma}(\e^{\chi_i}_{-m+1} \e^{-\chi_i}_{-2}\textbf{1}) \right).
	\end{align}
	The first term in the sum above can be commuted directly from lemma \eqref{l:bruno} with $\epsilon = -1$ and is given by, 
	\begin{align}
		Y_{\sigma}(\e^{\chi_i}_{-1} \e^{-\chi_i}_{-m}\textbf{1}) = \sum_{k=0}^{m} \frac{ c_k}{z^k} S_{m-k}(-\chi_i,z)
	\end{align}
	where the $c_k$s are as in Lemma \ref{l:bruno}. The second term in \eqref{eq:compy} is given by an application of Lemma \eqref{l:derivation} followed by Lemma \eqref{l:bruno}:
	\begin{align}
		Y_{\sigma}(\e^{\chi_i}_{-m+1} \e^{-\chi_i}_{-2}\textbf{1}) &=  \partial_z Y_{\sigma}(\e^{\chi_i}_{-m+1} \e^{-\chi_i}_{-1} \textbf{1},z)- Y_{\sigma}(\e^{\chi_i}_{-m}\e^{-\chi_i}_{-1},z)  \nonumber\\
		& = \sum_{k=0}^{m-1} \frac{kc_k}{z^{k+1}} S_{m-1-k}(\chi_i,z) - \sum_{k=0}^{m-1} \frac{c_k}{z^k} \partial_z S_{m-1-k}(\chi_i,z) + \sum_{k=0}^{m} \frac{  c_k}{z^k} S_{m-k}(\chi_i,z).
	\end{align}
	Adding the above two formulas gives $W^m_\sigma(z)$. Finally, since $W^m_\sigma(z)$ has conformal weight $m$, we get $W^m_{\sigma,\hbar}(z) = \hbar^m W^m_\sigma(z)$, which yields \eqref{eq:total}. 
	\end{proof}
\begin{Rem}
Note that only terms invariant under the automorphism $\sigma$ will survive after summing over the index $i$ in the above formulas, and hence the expressions will simplify considerably. 
\end{Rem}

\section{Constructing Airy ideals for $\mathcal{W}(\mathfrak{sp}_{2N})$}

In the previous section we obtained twisted fields $W^m_{\sigma,\hbar}(z)$ for the strong generators of $\mathcal{W}(\mathfrak{sp}_{2N})$ in a $\hbar$-deformed $\sigma$-twisted module for the rank $N$ free boson VOA. Our goal is to show that the non-negative modes of these fields generate an Airy ideal $\mathcal{I}$ in the $\hbar$-completed Rees algebra of bosonic modes $\widehat{U}^\hbar(\mathfrak{h})$. More precisely, this will not be true for the twisted fields $W^m_{\sigma,\hbar}(z)$ directly. What we will do is construct an automorphism of $\widehat{U}^\hbar(\mathfrak{h})$, which is usually called ``dilaton shift'' (and is very similar to the transvections studied in Section \ref{s:transvection}), such that he image of the fields $W^m_{\sigma,\hbar}(z)$ under this automorphism are such that their non-negative modes generate an Airy ideal.

As explained in Section \ref{s:AiryHeisenberg}, to construct an Airy ideal we need to choose a scenario to map $\widehat{U}^\hbar(\mathfrak{h})$ into the Rees Weyl algebra: we need to make a choice for the action of the zero mode. In the case of $\mathcal{W}(\mathfrak{sp}_{2N})$, only one choice works: we need to make the zero mode acts as a derivative, and hence we find ourselves in the scenario explored in Section \ref{s:second}. As far as we are aware, this is the first example of a $\mathcal{W}$-algebra explored in the literature that involves this scenario.
%
%

 We present first the calculations for the special case of $\mathcal{W}(\mathfrak{sp}_6)$, since the calculations are more explicit. We then move on to the general case.

\subsection{An Airy ideal for $\mathcal{W}(\mathfrak{sp}_6)$}\label{s:sp61}

\subsubsection{The strong generators}

We use the notation in Section \ref{s:VOA} and Section \ref{s:twisted}. We consider the rank 3 free boson VOA, generated by states $\{ \chi^0, \chi^1, \chi^2 \}$. We consider the fully cyclic automorphism $\sigma$, and the corresponding $\sigma$-twisted module with map $Y_\sigma$. Let $v^0, v^1, v^2$ be the diagonal basis. For clarity, we write
\begin{equation}
\chi^i(z) := Y_\sigma(\chi^i,z), \qquad v^i(z) = Y_\sigma(v^i,z), \qquad i=0,1,2.
\end{equation}
We index the modes of the twisted fields in the diagonal basis as usual:
\begin{align}
v^0(z) &= \sum_{n \in \mathbb{Z}} J_{3n} z^{-n-1}, \nonumber\\
v^1(z) & = \sum_{n \in \mathbb{Z}} J_{3n+1} z^{-n- 4/3}, \nonumber\\
v^2(z) & = \sum_{n \in \mathbb{Z}} J_{3n+2} z^{-n- 5/3}.
\label{eq:bb}
\end{align}
%
%
%
We introduce $\hbar$ in our module as in Section \ref{s:VOA}, which turns the algebra of modes into the graded Rees algebra associated to Li's filtration by conformal weight.

We obtained the twisted fields $W^m_{\sigma,\hbar}(z)$ for the strong generators of $\mathcal{W}(\mathfrak{sp}_6)$ in our $\hbar$-deformed $\sigma$-twisted module in Lemma \ref{l:finalform}. For clarity, we now drop the subscripts $\sigma,\hbar$ on the generators, and denote them simply by $W^m(z)$. In the case of $\mathfrak{sp}_6$, we have three such generators, $W^2(z), W^4(z), W^6(z)$. We define the modes of the strong generators as
\begin{equation}
W^m(z) = \sum_{k \in \mathbb{Z}}W^m_{k} z^{-k-1}.
\end{equation}

From Lemma \ref{l:finalform} it is easy to see that each of these modes takes the form
\begin{equation}
W^m_k = \hbar^m p^m_k(J_a),
\end{equation}
where $p^m_k(J_a)$ is a polynomial (a sum of normal ordered monomials) in the bosonic modes $J_a$ of degree $ \leq m$. As such, the non-negative modes $W^m_k$ with $k \geq 0$ certainly do not generate an Airy ideal (see Definition \ref{d:airy} and Lemma \ref{l:aairy}), as they are homogeneous of degree $m$ in $\hbar$, and $m>1$. This is not surprising, since $\hbar$ was introduced via Li's filtration by conformal weight, and the fields $W^m(z)$ have conformal weight $m$.

To obtain an Airy ideal we somehow need to break the $\hbar$ homogeneity of the strong generators of $\mathcal{W}(\mathfrak{sp}_6)$. In order to do this, we introduce an automorphism of the Rees algebra of bosonic modes $\widehat{U}^\hbar(\mathfrak{h})$ called ``dilaton shift''. 

\begin{Def}\label{d:dilatonshift}
Let $\phi: \widehat{U}^\hbar(\mathfrak{h}) \to \widehat{U}^\hbar(\mathfrak{h})$ be the automorphism given by
\begin{equation}
\phi: (\hbar, \hbar J_m) \mapsto (\hbar, \hbar J_m + \delta_{m,-3} + \delta_{m,-4}).
\end{equation}
In other words, it simply shifts the modes $\hbar J_{-3} \mapsto \hbar J_{-3} + 1$ and $\hbar J_{-4} \mapsto \hbar J_{-4} + 1$. It can be understood as acting by conjugation; for any $P \in \widehat{U}^\hbar(\mathfrak{h})$, we can think of $\phi(P)$ as being given by
\begin{equation}
\phi(P) = \exp \left( \frac{J_3}{3 \hbar} +\frac{J_4}{4 \hbar} \right) P  \exp \left( - \frac{J_3}{3 \hbar} -\frac{J_4}{4 \hbar} \right). 
\end{equation}
This is very similar to the transvections studied in Section \ref{s:transvection}, except that now the non-trivial action is on the coordinates  (on the negative modes $J_{-3}$ and $J_{-4}$) instead of the derivatives.
\end{Def}

This is an automorphism of the Rees algebra of bosonic modes. The strong generators $W^m(z)$ of $\mathcal{W}(\mathfrak{sp}_6)$ are mapped to new fields $\phi(W^m(z))$ under this automorphism. We introduce the following notation for the image fields and their modes:
\begin{equation}
H^m(z) := \frac{3^{m-1} m!}{2} \phi(W^m(z)), \qquad H^m_k := \frac{3^{m-1} m!}{2} \phi(W^m_k).
\end{equation}
The rescaling of the generators here is simply for convenience.

Clearly, the action of $\phi$ breaks homogeneity in $\hbar$, which is what we want. Our goal is to show that the non-negative modes $H^m_k$, $k \geq 0$, generate an Airy ideal in $\widehat{U}^\hbar(\mathfrak{h})$. To do so, we need to prove Conditions (1)--(4) in Lemma \ref{l:aairy}. We first focus on Condition (3), which amounts to studying the $O(\hbar^0)$ and $O(\hbar^1)$ terms in the $H^m_k$.
%
%
%
%

\begin{Lem}\label{l:linearterms}
	The modes $H^m_k$ satisfy, for $k \geq 0$:
	\begin{align}
	H^2_k=& \hbar( 2J_{3k} + 2J_{3k+1}) + O(\hbar^2),   \\
	H^4_k  =& \hbar( 4J_{3k} + 12J_{3k+1} + 12 J_{3k+2}+4J_{3k+3}) + O(\hbar^2),\\
	H^6_k=&  \hbar(6J_{3k} + 30J_{3k+1}+60J_{3k+2}+60J_{3k+3}+30J_{3k+4}+6J_{3k+5}) + O(\hbar^2).
	\end{align}
\end{Lem}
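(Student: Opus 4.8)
The plan is to extract the $O(\hbar^0)$ and $O(\hbar^1)$ parts of the modes $H^m_k$ from the explicit field formula \eqref{eq:total} in Lemma \ref{l:finalform}, after applying the dilaton shift $\phi$. First I would note that, because $\hbar$ was introduced via Li's filtration, the unshifted field $W^m_{\sigma,\hbar}(z)$ carries an overall $\hbar^m$; the dilaton shift replaces $\hbar J_{-3}\mapsto \hbar J_{-3}+1$ and $\hbar J_{-4}\mapsto \hbar J_{-4}+1$ inside the Fa\`a di Bruno polynomials $S_n$, which lowers the power of $\hbar$ on the monomials that involve $J_{-3}$ and $J_{-4}$. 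Concretely, in $S_n(\chi^i,z)=\frac1{n!}(\partial_z+Y_\sigma(\chi^i,z))^n\cdot 1$, each application of $Y_\sigma(\chi^i,z)$ contributes a sum of modes $\hbar J_{a+Nm}$ (with $N=3$); after the shift, the terms where the mode is $J_{-3}$ or $J_{-4}$ become the $\hbar$-independent constants $1$, dressed by the appropriate powers of $z$ and roots of unity $\theta^{ia}$. So I would write $\phi(W^m_{\sigma,\hbar})$ as $\hbar^m$ times a polynomial in $\hbar J_a$ with some generators replaced by constants, and then collect the pieces that survive at total $\hbar$-order $0$ and $1$.

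The key computational step is: the $O(\hbar^0)$ part of $\phi(W^m_{\sigma,\hbar})$ vanishes (this must be checked — it is the statement that the dilaton shift is chosen precisely so that the ``classical'' constant term of the shifted generator is zero), and the $O(\hbar^1)$ part is a single linear expression in the modes $J_b$ coming from the $S_n$ where all but one $Y_\sigma$ factor has been replaced by a shifted constant. I would organize this by counting: in the degree-$m$ monomials of $W^m_{\sigma,\hbar}$, the $O(\hbar^1)$ contribution comes from choosing $m-1$ of the $m$ ``slots'' to be the constant $1$ (each from either the $J_{-3}$ or $J_{-4}$ shift) and one slot to be a genuine mode $\hbar J_b$, then summing over all such choices with the correct combinatorial weights, $\theta$-phases, and $z$-powers, and finally summing over $i=0,\dots,N-1$ so that only $\sigma$-invariant terms survive. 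Extracting the coefficient of $z^{-k-1}$ then pins down which modes $J_{3k}, J_{3k+1},\dots$ appear and with what integer coefficient. The rescaling by $3^{m-1}m!/2$ is designed exactly to clear denominators and make these coefficients the clean integers stated in the lemma.

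For the three cases $m=2,4,6$ I would just carry out this extraction explicitly. For $m=2$: $S_2$ involves two $Y_\sigma$-factors or one $\partial_z$ and one $Y_\sigma$; the $O(\hbar)$ terms come from one shifted constant times one mode, giving (after the sum over $i$ and the rescaling) $H^2_k=\hbar(2J_{3k}+2J_{3k+1})+O(\hbar^2)$. For $m=4$ and $m=6$ the same bookkeeping produces the binomial-type coefficient patterns $(4,12,12,4)$ and $(6,30,60,60,30,6)$; I expect these to emerge as products of the combinatorial factor from $S_m$, the number of ways to distribute the constants between the $J_{-3}$ and $J_{-4}$ shifts, and the $\theta$-phase sums. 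The main obstacle is purely organizational: keeping straight the interplay between (i) the multinomial coefficients in $S_n$, (ii) the two separate shifts (so each ``constant slot'' can be filled in two ways, contributing a factor that depends on how the remaining $z$-powers line up), (iii) the root-of-unity sums that kill non-invariant terms, and (iv) the index shift in the twisted mode expansion $v^a(z)=\sum J_{3n+a}z^{-n-a/3-1}$ which converts the $z$-expansion into the mode label $3k+a$. No single step is deep; the risk is an arithmetic slip in the coefficients, so I would cross-check the $m=2$ answer against the Heisenberg field $Y_\sigma$ directly and verify the $m=4,6$ coefficients are symmetric (as they must be, by the $e\leftrightarrow f$ / reflection symmetry visible in \eqref{eq:total}) before declaring victory.
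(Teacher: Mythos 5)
Your strategy is the same as the paper's: start from the explicit twisted fields of Lemma \ref{l:finalform}, pass to the diagonal basis, apply the dilaton shift, and read off the $O(\hbar^0)$ and $O(\hbar)$ parts of the non-negative modes, with only the top-degree pieces of the $\hbar^m$-homogeneous generators relevant. However, your accounting of where $O(\hbar)$ terms can arise has a concrete gap. You assert that the $O(\hbar)$ part comes from the degree-$m$ monomials with exactly one unshifted slot (``all but one $Y_\sigma$ factor replaced by a shifted constant''). This ignores the subleading degree-$(m-1)$ terms $-\tfrac{1}{(m-1)!}\partial_z\!:\!\chi^i(z)^{m-1}\!:$ in $W^m(z)$: under the shift, monomials there built entirely out of $J_{-3}$ and $J_{-4}$ have \emph{all} slots replaced by constants and therefore also land at $O(\hbar)$. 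To prove the lemma one must check that these extra $O(\hbar)$ contributions do not pollute the modes with $k\ge 0$; the paper does exactly this, finding that they are $z$-independent constants ($0$, $-6\hbar$, $-90\hbar$ for $m=2,4,6$) and hence sit only in $H^4_{-1}$ and $H^6_{-1}$. Without this check your extraction could silently drop (or wrongly include) genuine linear-in-$\hbar$ terms, so it must be added to your argument.

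A smaller imprecision: you state that ``the $O(\hbar^0)$ part of $\phi(W^m_{\sigma,\hbar})$ vanishes.'' As a statement about the full field this is false — degree-$m$ monomials made only of $J_{-3}$ and $J_{-4}$ do produce $\hbar$-independent constants after the shift. What is true, and what the lemma needs, is that such monomials can only occur in modes with negative index: their total mode sum lies in $[-4m,-3m]$, whereas the mode $W^m_k$ requires total mode sum $3k+3-3m$, which exceeds $-3m$ for $k\ge 0$. Your phrase ``this must be checked'' shows you are aware a verification is needed, but the check should be formulated modewise, as above, rather than as a vanishing statement for the shifted generator itself. With these two points repaired, the rest of your combinatorial extraction (multinomial weights from the top-degree part, the two shifts, the $\theta$-phase sums enforcing $\sigma$-invariance, and the mode bookkeeping) reproduces the paper's computation and the stated coefficients.
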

\begin{proof}
We start with Lemma \ref{l:finalform} for the strong generators $W^m(z)$, $m=2,4,6$. We mentioned before that the modes take the form
\begin{equation}
W^m_k = \hbar^m p^m_k(J_a),
\end{equation}
where $p^m_k(J_a)$ is a polynomial in the bosonic modes of degree $\leq m$. We note that the  automorphism $\phi$ acts as $\hbar J_{-3} \mapsto \hbar J_{-3} + 1$ and $\hbar J_{-4} \mapsto \hbar J_{-4} + 1$. As such, it can decrease the order in $\hbar$. We are interested in resulting terms of order $O(\hbar^0)$ and $O(\hbar^1)$. Clearly, only the monomials of degree $m$ and $m-1$ in the polynomials $p^m_k(J_a)$ can give rise to terms of order $O(\hbar^0)$ and $O(\hbar^1)$ following the action of $\phi$. So we are only interested in these higher degree terms.

From Lemma \ref{l:finalform}, we get:
\begin{align}
	W^2(z) &= \hbar^2 \sum_{i=0}^2 \left( \chi^i(z)^2  - \partial_z \chi^i(z) + \frac{1}{27 z^2} \right), \\
	W^4(z) & = \hbar^4 \sum_{i=0}^2  \left(\frac{2}{4!}\chi^i(z)^4 - \frac{1}{3!} \partial_z \chi^i(z)^3 +  \text{lower degree} \right), \\
	W^6(z) &= \hbar^6 \sum_{i=0}^2 \left( \frac{2}{6!} \chi^i(z)^6 - \frac{1}{5!}\partial_z \chi^i(z)^5 +\text{lower degree} \right),
\end{align}
where ``lower degree'' means polynomial terms of degree $\leq m-2$ in the bosonic modes.

These expressions are in terms of the bosonic fields $\chi^i(z)$. We need to rewrite them in terms of the twisted fields $v^i(z)$ in the diagonal basis, since the bosonic modes $J_a$ are defined for the twisted fields $v^i(z)$ (see \eqref{eq:bb}). Recall that
\begin{equation}
\chi^i(z) = \frac{1}{3} \sum_{a=0}^2 \theta^{i a} v^a(z),
\end{equation}
where $\theta = e^{2 \pi i/3}$.

We consider first the highest degree terms in the fields $W^m(z)$, of degree $m$ in the bosonic modes. In terms of the fields $v^i(z)$, the highest degree terms read:
\begin{align}
\frac{3}{\hbar^2} W^2(z)  &=\ (v^0(z))^2 + 2 v^1(z) v^2(z) + \ldots \\
\frac{4! 3^3}{2 \hbar^4} W^4(z) &= (v^0(z))^4 + 4 v^0(z) (v^1(z))^3 + 4 v^0(z) (v^2(z))^3 + 12 (v^0(z))^2 v^1(z) v^2(z)\nonumber\\&\quad + 6 (v^1(z))^2 (v^2(z))^2 + \ldots \\
\frac{6! 3^5}{2 \hbar^6} W^6(z) &= (\vi{0})^6+ (\vi{1})^6+ (\vi{2})^6 + 20 (\vi{0})^3 (\vi{2})^3 + 20 (\vi{0})^3 (\vi{1})^3  \nonumber\\& \quad+ 20 (\vi{1})^3 (\vi{2})^3 
+ 30\vi{0} \vi{1} (\vi{2})^4 + 30(\vi{0})^4 \vi{1} \vi{2} \nonumber\\&\quad+ 30\vi{0} (\vi{1})^4 \vi{2} 
 + 90 (\vi{0})^2 (\vi{1})^2 (\vi{2})^2 + \ldots  
\end{align}
We can write a general formula as:
\begin{equation}
\frac{m!3^{m-1}}{2\hbar^m }W^m(z) = \sum_{\substack{\alpha_1+2\alpha_2 |3  \\ \alpha_0+\alpha_1+\alpha_2=m}} \frac{m!}{\alpha_0! \alpha_1! \alpha_2!} (\vi{0})^{\alpha_0} (\vi{1})^{\alpha_1} (\vi{2})^{\alpha_2}	
+ \text{lower degree}.
\end{equation}
In terms of the modes, we get:
\begin{equation}\label{eq:wmodes}
\frac{m! 3^{m-1}}{2\hbar^m}W^m_k = \sum_{\substack{\alpha_1+2\alpha_2 |3,  \alpha_0+\alpha_1+\alpha_2=m  \\ \sum_{p,q,r} \beta_p^0 +\beta_q^1 + \beta_r^2 = 3k+3 -3m  }} \frac{m!}{\alpha_0! \alpha_1! \alpha_2!}  \prod_{i=1}^{\alpha_0} :J_{{\beta_i^0}}: \prod_{i=1}^{\alpha_1} :J_{{\beta_i^1}}: \prod_{i=1}^{\alpha_2} :J_{{\beta_i^2}}: + \text{lower degree},
\end{equation}
where $\beta_i^j \equiv j \text{ (mod $3$)}$. 

With these formulae, we can implement the automorphism $\phi$ from Definition \ref{d:dilatonshift} (the dilaton shift) on the highest degree terms. We see that for the non-negative modes, $k \geq 0$, we obtain precisely the $O(\hbar^1)$ terms in the statement of the Lemma, and no terms of $O(\hbar^0)$.

Next we look at the terms of degree $m-1$ in the bosonic modes in the fields $W^m(z)$. Those could potentially contribute terms of $O(\hbar^1)$ after applying the automorphism $\phi$. In terms of the fields $v^i(z)$, the degree $m-1$ terms read:
\begin{align}
-\frac{1}{\hbar^2}W^2(z) &=  \partial_z v^0(z)+ \ldots \\
-\frac{3! 3^2}{\hbar^4}W^4(z) &=\partial_z [(v^0(z))^3+(v^1(z))^3+(v^2(z))^3 + 6v^0(z)v^1(z)v^2(z)] + \ldots \\
-\frac{5! 3^4 }{\hbar^6}W^6(z) & = \partial_z\left[ \right. (v^0(z))^5+ 10 (v^0(z))^2 (v^2(z))^3 + 20 (v^0(z))^3 v^1(z) v^2(z) + 5(v^1(z))^4 v^2(z)  \nonumber\\
& + 30 v^0(z) (\vi{1})^2 (\vi{2})^2 + 10 (\vi{0})^2 (\vi{2})^5 + 5v^1(z) (\vi{2})^4 \left. \right] +\ldots
\end{align}
The action of the automorphism $\phi$ from Definition \ref{d:dilatonshift} on these degree $m-1$ terms does give rise to $O(\hbar^1)$ terms in the image fields $H^m(z)$.  Those terms take the form:
\begin{align}
	H^2(z) & = 0+ \ldots \\
	H^4(z) & = - 6 \hbar + \ldots \\
	H^6(z) &= - 90 \hbar + \ldots,
	\end{align}
	where we singled out the $O(\hbar)$ terms that arise from applying $\phi$ to the degree $m-1$ terms in the $W^m(z)$. What is key is that these terms are constants, i.e. do not come with powers of $z$.  As a result, they only appear in the modes $H^4_{-1}$ and $H^6_{-1}$, and hence do not contribute to the non-negative modes $H^m_k$ with $k \geq 0$. This concludes the proof of the Lemma.
\end{proof}

\subsubsection{The Airy ideal}
\label{s:sp64}

We now prove that the left ideal generated by the modes $\{H^2_k, H^4_k, H^6_k \}$ with $k \geq 0$ in $\widehat{U}^\hbar(\mathfrak{h})$ is an Airy ideal. We find ourselves in the scenario of Section \ref{s:second}, where the zero mode $J_0$ of the field $v^0(z)$ (which is the only zero mode, see \eqref{eq:bb}) acts as a derivative $\partial_0$.

\begin{Th}\label{t:3}
Let $\mathcal{I}$ be the left ideal in $\widehat{U}^\hbar(\mathfrak{h})$ generated by the $\{H^2_k, H^4_k, H^6_k \}$ with $k \geq 0$. Then $\mathcal{I}$ is an Airy ideal. 
\end{Th}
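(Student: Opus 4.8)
The plan is to verify the four hypotheses of Lemma~\ref{l:aairy} for the collection $\{H^2_k,H^4_k,H^6_k\}_{k\geq 0}$, with index sets $I=\{(m,k)\ |\ m\in\{2,4,6\},\ k\in\mathbb{N}\}$ and $A=\mathbb{N}$, working in the scenario of Section~\ref{s:second} in which the unique zero mode $J_0$ of $v^0(z)$ acts as a derivative, so that $\hbar J_a=\hbar\partial_a$ for all $a\in\mathbb{N}$. Condition~(1), boundedness, is immediate from Lemma~\ref{lemma:VOAboundedness}: up to rescaling the $H^m_k$ are the images under the dilaton shift $\phi$ of the non-negative modes of the strong generators of $\mathcal{W}(\mathfrak{sp}_6)$ in the $\hbar$-deformed $\sigma$-twisted module, and $\phi$ acts as a shift of negative modes, under which boundedness is preserved. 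Condition~(3) is precisely Lemma~\ref{l:linearterms}: it exhibits the complex matrix $M$ with nonzero entries $M_{(2,k),3k}=M_{(2,k),3k+1}=2$; $M_{(4,k),3k}=4$, $M_{(4,k),3k+1}=M_{(4,k),3k+2}=12$, $M_{(4,k),3k+3}=4$; and $M_{(6,k),3k+j}$ equal to $6,30,60,60,30,6$ for $j=0,\dots,5$; for fixed $a$ only finitely many $i\in I$ give $M_{ia}\neq 0$.

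For Condition~(4) I organize $I$ and $A$ into \emph{levels}: the triple of rows $(2,k),(4,k),(6,k)$ and the triple of columns $3k,3k+1,3k+2$. Row $(m,k)$ has support only in columns $\{3k,\dots,3k+m-1\}$, which lie in levels $k$ and $k+1$, so $M$ is block upper-bidiagonal, and its diagonal block at each level is
\begin{equation}
\begin{pmatrix} 2 & 2 & 0 \\ 4 & 12 & 12 \\ 6 & 30 & 60 \end{pmatrix},
\end{equation}
whose determinant equals $384\neq 0$. Hence $M$ is invertible; writing $M=D+U$ with $D$ the invertible block-diagonal part and $U$ the block-superdiagonal part, the inverse $N:=M^{-1}=\sum_{j\geq 0}(-1)^j(D^{-1}U)^jD^{-1}$ is block lower-triangular, so for each fixed $i\in I$ only finitely many $b\in A$ have $N_{bi}\neq 0$, and the identities \eqref{eq:sums} hold; this also yields the normalized generators $\tilde H_a=\sum_i N_{ai}H_i=\hbar\partial_a+O(\hbar^2)$ generating the same ideal.

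The substantial point is Condition~(2), $[H^m_j,H^{m'}_{j'}]=\hbar^2\sum f\,H$. First, since $\mathcal{I}$ is a left ideal containing both operators, the commutator lies in $\mathcal{I}$ automatically. Second, by Lemma~\ref{l:linearterms} each $H^m_k$ ($k\geq 0$) has vanishing $O(\hbar^0)$ part and its $O(\hbar^1)$ part $H^{m,(1)}_k$ is linear in the non-negative Heisenberg modes $\{J_a\}_{a\geq 0}$; since $[J_a,J_b]=a\,\delta_{a+b,0}=0$ for $a,b\geq 0$, we get $[H^{m,(1)}_j,H^{m',(1)}_{j'}]=0$, so the commutator is in fact $O(\hbar^3)$. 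Finally I run a descent: writing $[H^m_j,H^{m'}_{j'}]=\sum_{c,\ell}c^c_\ell H^c_\ell$ and extracting the coefficient of $\hbar^1$ while using that the $\{H^{c,(1)}_\ell\}$ are linearly independent — which is exactly the linear independence of the rows of $M$, guaranteed by the left inverse $N$ — forces the $O(\hbar^0)$ part of every $c^c_\ell$ to vanish, so $c^c_\ell=\hbar d^c_\ell$; the remainder $\sum_{c,\ell}d^c_\ell H^c_\ell=[H^m_j,H^{m'}_{j'}]/\hbar$ is still $O(\hbar^2)$ and lies in $\mathcal{I}$, and one more application of the same extraction gives $d^c_\ell=\hbar e^c_\ell$, whence $[H^m_j,H^{m'}_{j'}]=\hbar^2\sum_{c,\ell}e^c_\ell H^c_\ell\in\hbar^2\mathcal{I}$. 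All infinite-sum manipulations are legitimate because the collections involved are bounded, so $\hbar$-coefficients may be read off termwise. With the four hypotheses checked, Lemma~\ref{l:aairy} gives that $\mathcal{I}$ is an Airy ideal.

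I expect the main obstacle not to be the logical skeleton above but the computational input feeding Condition~(3), namely Lemma~\ref{l:linearterms}: one must push the dilaton-shift automorphism (which expands $\hbar J_{-3}\mapsto\hbar J_{-3}+1$ and $\hbar J_{-4}\mapsto\hbar J_{-4}+1$) through the free-field realization of $\mathcal{W}(\mathfrak{sp}_6)$ in the twisted module, isolate the degree-$m$ and degree-$(m-1)$ pieces of $W^m_k$, and check that for $k\geq 0$ no $O(\hbar^0)$ terms survive while the $O(\hbar^1)$ terms assemble into the matrix $M$ above with nonvanishing block determinant; the remaining verifications are then routine. A secondary point requiring care is the rigor of the termwise $\hbar$-coefficient extraction and linear-independence argument in Condition~(2) when $A$ is countably infinite, which is exactly where the boundedness condition is used.
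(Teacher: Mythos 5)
Your treatment of Conditions (1), (3) and (4) of Lemma~\ref{l:aairy} is essentially the paper's: boundedness via Lemma~\ref{lemma:VOAboundedness} (which explicitly allows the dilaton shift of negative modes), the linear-in-$\hbar$ data via Lemma~\ref{l:linearterms}, and the inversion via the $3\times 3$ diagonal block of determinant $384$ followed by recursive back-substitution, which is exactly what your Neumann series $\sum_j(-1)^j(D^{-1}U)^jD^{-1}$ encodes (incidentally, with your conventions this inverse is block \emph{upper}-triangular in levels, not lower-triangular, but the conclusion you draw from it --- finitely many $b$ for each fixed $i$ --- is the correct one and is what the paper proves).

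The genuine gap is your argument for Condition (2). Your descent claims that, because the $\hbar^1$-coefficients $H^{c,(1)}_\ell$ are linearly independent, the vanishing of the constant parts of the coefficients $c^c_\ell$ lets you write $c^c_\ell=\hbar d^c_\ell$ and conclude $\sum d^c_\ell H^c_\ell\in\mathcal{I}$. In the Bernstein-filtered Rees algebra this divisibility step fails: an element of $\widehat{U}^\hbar(\mathfrak{h})$ with vanishing $\hbar^0$-part need not be $\hbar$ times an element of $\widehat{U}^\hbar(\mathfrak{h})$ (e.g.\ $\hbar J_{-1}$ is in the algebra while $J_{-1}$ is not), so the ``remainder'' need not lie in $\mathcal{I}$ and the induction cannot be run. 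More decisively, the paper's own remark following Definition~\ref{d:airy} is a counterexample to the implication you are using: for $H_1=\hbar\partial_1$, $H_2=\hbar\partial_2+\hbar^2 x_1$ the collection is bounded, the $O(\hbar)$ parts are independent and commute, $[H_1,H_2]=\hbar^3$ lies in $\mathcal{I}$ and is $O(\hbar^3)$, yet $[H_1,H_2]\notin\hbar^2\mathcal{I}$ since $\hbar\notin\mathcal{I}$. So Condition (2) cannot be deduced from the ingredients you use; it requires genuine input about the $\mathcal{W}(\mathfrak{sp}_6)$ modes. The paper supplies this by citing Proposition 3.14 of \cite{Airy}, which rests on the Borcherds commutator formula: the commutator of two non-negative modes of strong generators is again a combination of non-negative modes of fields of strictly smaller conformal weight, and the $\hbar$-grading by conformal weight (preserved by the dilaton shift, which is an automorphism) then produces the required factor of $\hbar^2$ times elements of $\mathcal{I}$. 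Your proof needs this (or an equivalent VOA computation) in place of the descent; the rest stands.
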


\begin{proof}
To prove that $\mathcal{I}$ is an Airy ideal, we need to check that Conditions (1)--(4) in Lemma \ref{l:aairy} are satisfied.

\emph{Condition (1)}. The boundedness condition is automatically satisfied for the modes of the fields of a VOA (see Lemma \ref{lemma:VOAboundedness}).

\emph{Condition (2)}. It is always satisfied for the subset of non-negative modes of the strong generators of a VOA  (see Proposition 3.14 in \cite{Airy}). 

\emph{Condition (3)}. For simplicity, let us re-index our operators $H^m_k$, $i=2,4,6$, as
\begin{equation}
H^m_k =: L_{3 k +\frac{m}{2} - 1}.
\end{equation}
Then the operators are indexed by $\{L_i\}_{i \in I}$ with $I=A= \mathbb{N}$. We want to determine whether
\begin{equation}\label{eq:pi1}
L_i =   \sum_{a \in \mathbb{N}} M_{i a} \hbar  J_a + O(\hbar^2)
\end{equation}
for some coefficients $M_{ia}$ such that for all fixed $a \in \mathbb{N}$, they vanish for all but finitely many $i \in \mathbb{N}$. But we have shown in Lemma \ref{l:linearterms} that
	\begin{align}
	\label{eq:pi1h}
	H^2_k=& L_{3 k} = \hbar(2J_{3k} + 2J_{3k+1}) + O(\hbar^2),   \\
	H^4_k =&L_{3k+1}= \hbar(4J_{3k} + 12J_{3k+1} + 12 J_{3k+2}+4J_{3k+3})+O(\hbar^2),\\
	H^6_k=&  L_{3k+2} = \hbar(6J_{3k} + 30J_{3k+1}+60J_{3k+2}+60J_{3k+3}+30J_{3k+4}+6J_{3k+5})+O(\hbar^2).
	\label{eq:pi1h3}
	\end{align}
	As a result, we see that for a fixed $a = 3 k + b$ with $b \in \{0,1,2\}$, the only non-vanishing coefficients $M_{i a}$ are for $i \leq 3k+2$. In particular, for all $a \in \mathbb{N}$ they vanish for all but finitely many $i \in \mathbb{N}$, as required.
	
	\emph{Condition (4).} We need to show that there exists coefficients $N_{bj}$ such that
	\begin{equation}
	\sum_{i \in \mathbb{N}} N_{b i} M_{i a} = \delta_{a b}, \qquad \sum_{a \in \mathbb{N}} M_{i a} N_{a j} = \delta_{ij},
	\end{equation}
	and such that for all fixed $j \in \mathbb{N}$, the coefficients $N_{bj}$ vanish for all but finitely many $ b \in \mathbb{N}$. Equivalently, we need to show that we can invert the relations \eqref{eq:pi1h}--\eqref{eq:pi1h3} to get
	\begin{equation}
	\sum_{ i \in \mathbb{N}} N_{b i}L_i=\hbar J_b  + O(\hbar^2),
	\end{equation}
	with the coefficients such that for all fixed $i \in \mathbb{N}$ they vanish for all but finitely many $b \in \mathbb{N}$.
	
	Let $\textbf{J}_k:=\hbar \begin{pmatrix}
	J_{3k} \\J_{3k+1} \\ J_{3k+2}
	\end{pmatrix}$, $\textbf{K}_k:=\begin{pmatrix}
	L_{3k} \\ L_{3k+1}- 4\hbar J_{3k+3}\\ L_{3k+2} - 60 \hbar J_{3k+3}- 30\hbar J_{3k+4} - 6\hbar J_{3k+5}
	\end{pmatrix}$ and $\textbf{M}= \begin{pmatrix}
	2 & 2 & 0 \\ 4 & 12 & 12 \\ 6 & 30 & 60
	\end{pmatrix}$ for all $k \geq 0$. From \eqref{eq:pi1h} -- \eqref{eq:pi1h3}, we have the matrix equations
		\begin{align}
	\textbf{M} \textbf{J}_k +O(\hbar^2)= \textbf{K}_k, \qquad \forall k \geq 0. 
	\end{align}
	As $\textbf{M}$ is invertible, we can invert the above relation to get 
	\begin{align}
	\hbar J_{3k} + O(\hbar^2) &= \frac{15}{16}L_{3k} -\frac{5}{16} L_{3k+1} +\frac{1}{16}L_{3k+2} - \frac{5}{2}\hbar J_{3k+3} - \frac{15}{8} \hbar J_{3k+4}  - \frac{3}{8}\hbar J_{3k+5}  \label{eq:jinvert3}\\
	\hbar J_{3k+1} + O(\hbar^2) &= -\frac{7}{16}L_{3k} + \frac{5}{16}L_{3k+1} - \frac{1}{16}L_{3k+2} +\frac{5}{2} \hbar J_{3k+3} + \frac{15}{8}\hbar J_{3k+4}  + \frac{3}{8}\hbar J_{3k+5} \\
	\hbar J_{3k+2}+O(\hbar^2) &= \frac{1}{8}L_{3k} - \frac{1}{8}L_{3k+1} + \frac{1}{24}L_{3k+2} -2\hbar J_{3k+3} - \frac{5}{4}\hbar J_{3k+4}  - \frac{1}{24}\hbar J_{3k+5} \label{eq:jinvert4}
	\end{align}
	Substituting back the formulas for $\hbar J_{3k+3}$, $\hbar J_{3k+4}$, and $\hbar J_{3k+5}$ recursively we can write
	\begin{equation}
	\hbar J_{3k+i} + O(\hbar^2) = \sum_{m \in \mathbb{N}} N_{3k+i, m}  L_{m}
	\end{equation}
	for $k \geq 0$ and $i \in \{0,1,2\}$ and where all $N_{3k+i,m}=0$ for $m < 3k$. Equivalently, for any fixed $m \in \mathbb{N}$, the only non-vanishing coefficients are $N_{3k+i,m}$ with $3k \leq m$. In particular, for all fixed $m \in \mathbb{N}$ the coefficients vanish for all but finitely many $3k+i \in \mathbb{N}$, and the condition is satisfied.
	
	As all conditions of Lemma \ref{l:aairy} are satisfied, we conclude that the left ideal $\mathcal{I}$ generated by the $\{H^2_k ,H^4_k, H^6_k \}$, $k \geq 0$, is an Airy ideal. Furthermore, from the calculation above we see that we can also think left ideal $\mathcal{I}$ as being generated by the differential operators ($k \geq 0$, $i \in \{0,1,2\}$):
	\begin{equation}
	\tilde{L}_{3k+i} =  \sum_{m \in \mathbb{N}} N_{3k+i, m}  L_m = \hbar J_{3k+i} + O(\hbar^2).
	\end{equation}
In particular, we notice that we have an operator $\tilde L_0 = \hbar J_0 + O(\hbar^2)$ (and in fact, one can show that the $O(\hbar^2)$ contributions are non-vanishing for this operator). Therefore, we find ourselves in the scenario of Section \ref{s:second}, where the map from the algebra of modes $\widehat{U}^\hbar(\mathfrak{h})$ to the Rees Weyl algebra goes to the subalgebra $\widehat{\mathcal{D}}^\hbar (x_{\mathbb{N}^*}, \partial_{\mathbb{N}})$ -- we need to interpret the zero mode $J_0$ as a derivative in the Weyl algebra.
	
\end{proof}

Now that we know that the left ideal $\mathcal{I}$ generated by the modes of the strong generators of $\mathcal{W}(\mathfrak{sp}_6)$ is an Airy ideal, we obtain an immediate Corollary from Theorem \ref{t:airy} (see also Section \ref{s:second}).

\begin{Cor}
Let $\mathcal{I}$ be the left ideal in $\widehat{U}^\hbar(\mathfrak{h})$ generated by the non-negative modes $\{H^2_k, H^4_k, H^6_k \}$, with $k \geq 0$. Then $\widehat{U}^\hbar(\mathfrak{h}) / \mathcal{I}$ is a cyclic left module canonically isomorphic to the ($\hbar$-adically completed) submodule of $M$ (see Section \ref{s:second}) for the rank 3 free boson VOA generated by $|x_0\rangle$, but twisted by some stable transvection on $\widehat{U}^\hbar(\mathfrak{h})$.

$\widehat{U}^\hbar(\mathfrak{h}) / \mathcal{I}$ is also canonically isomorphic to a module of exponential type generated by a state
\begin{equation}\label{eq:pfssts}
v:= Z |x_0 \rangle= \exp \left( \sum_{\substack{g \in \frac{1}{2} \mathbb{N}, n \in \mathbb{N}^* \\ 2g-2+n>0}} \hbar^{2g-2+n} F_{g,n}(J_{-1}, J_{-2}, \ldots) +  \sum_{g \in \frac{1}{2} \mathbb{N}^*} \hbar^{2g-1} F_{g,1} (\tilde J_0) \right) |x_0 \rangle,
\end{equation}
for some polynomials $F_{g,n}$ homogeneous of degree $n$ in the respective modes, with $F_{g,n}(0) = 0$. Here the $\tilde J_0$ are the modes conjugate to the zero modes $J_0$.

Furthermore, by construction the state $v$ is annihilated by all non-negative modes $\{H^2_k, H^4_k, H^6_k \}$, with $k \geq 0$:
\begin{equation}
H^m_k v = 0, \qquad m = 2,4,6, \quad k \in \mathbb{N}.
\end{equation}
Therefore, the action of the negative modes $H^m_k$, $k <0$ on $v$ generates a  ($\hbar$-adically completed) Fock module for $\mathcal{W}(\mathfrak{sp}_6)$.
\end{Cor}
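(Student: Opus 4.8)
The plan is to deduce the Corollary directly from Theorem~\ref{t:3}, the subalgebra formalism of Section~\ref{s:second}, and the free strong generation of $\mathcal{W}(\mathfrak{sp}_6)$. By Theorem~\ref{t:3} the left ideal $\mathcal{I}$ generated by $\{H^2_k,H^4_k,H^6_k\}_{k\ge 0}$ is an Airy ideal, and — as noted in that proof, where $\tilde L_0=\hbar J_0+O(\hbar^2)$ has nonzero $O(\hbar^2)$ part — we are in the scenario of Section~\ref{s:second}: $\widehat{U}^\hbar(\mathfrak{h})$ is identified with $\widehat{\mathcal{D}}^\hbar(x_{\mathbb{N}^*},\partial_{\mathbb{N}})$, with the zero mode $J_0$ acting as $\partial_0$ and its conjugate $\tilde J_0$ as the variable $x_0$. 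First I would invoke the subalgebra version of Theorem~\ref{t:airy} (valid in $\widehat{\mathcal{D}}^\hbar(x_J,\partial_A)$, as spelled out in Section~\ref{s:second}) to obtain a stable transvection $\phi$ of $\widehat{U}^\hbar(\mathfrak{h})$ with $\mathcal{I}=\phi(\mathcal{I}_{\text{can}})$, hence a canonical isomorphism of $\widehat{U}^\hbar(\mathfrak{h})/\mathcal{I}$ with the $\phi$-twisted module $\prescript{\phi}{}{\widehat{\mathcal{M}}}_{\mathbb{N}^*}^\hbar$; in VOA language this is precisely the $\hbar$-adically completed submodule of $M$ generated by $|x_0\rangle$, twisted by $\phi$, which is the first assertion.

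Next I would invoke Lemma~\ref{l:pf} in the form recorded at the end of Section~\ref{s:second} (equation~\eqref{eq:pfsss}): the same twisted module is canonically isomorphic to the module of exponential type generated by $Z|x_0\rangle$ with $Z$ of the shape \eqref{eq:pfssts}, the linear pieces $F_{g,1}(\tilde J_0)$ in the conjugate zero mode appearing exactly because $\bar H_0\neq \hbar J_0$ in general. The normalization $F_{g,n}(0)=0$ is the initial condition $Z|_{x_0=\cdots=0}=1$, which pins down the generator uniquely, and the $F_{g,n}$ are reconstructed recursively order by order in $\hbar$ by the procedure of Lemmas~\ref{l:poly}--\ref{l:Ibar}. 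The annihilation statement $H^m_k v=0$ for $k\ge 0$ is then immediate: since $v$ generates a module isomorphic to $\widehat{U}^\hbar(\mathfrak{h})/\mathcal{I}$, its annihilator is exactly $\mathcal{I}$, which by construction contains all the non-negative modes.

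The last assertion — that the negative modes $H^m_k$, $k<0$, acting on $v$ generate a ($\hbar$-adically completed Rees) Fock module for $\mathcal{W}(\mathfrak{sp}_6)$ — needs an argument beyond quoting earlier results. Since $\phi$ (the dilaton shift, Definition~\ref{d:dilatonshift}) is an automorphism of the Rees algebra of modes, the rescaled fields $H^m(z)$ still strongly and \emph{freely} generate a copy of $\mathcal{W}(\mathfrak{sp}_6)$; hence the subalgebra $U_-$ generated by the negative modes $H^m_k$, $k<0$, is (an $\hbar$-completed Rees version of) a polynomial ring, and a Fock module for $\mathcal{W}(\mathfrak{sp}_6)$ is by definition one that is free of rank one over $U_-$. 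So the plan is to show the map $U_-\to \widehat{U}^\hbar(\mathfrak{h})/\mathcal{I}$, $u\mapsto u\cdot v$, is injective, by an $\hbar$-adic induction mirroring the proof of Lemma~\ref{l:aairy}: at leading order in $\hbar$ the operators $H^m_k$ with $k<0$ reduce, via the same invertible matrix $\mathbf{M}$ that appears in the proof of Theorem~\ref{t:3}, to an invertible $\mathbb{C}$-linear reparametrization of the negative Heisenberg modes $\hbar J_a$, $a<0$; these act freely on $v\equiv|x_0\rangle \bmod \hbar$, generating the Rees polynomial module. The higher-order-in-$\hbar$ corrections are then absorbed order by order exactly as in Lemmas~\ref{l:poly}--\ref{l:Ibar}, so no $\hbar$-adically convergent relation can kill $v$, whence $U_-\cdot v$ is the full completed Rees Fock module for $\mathcal{W}(\mathfrak{sp}_6)$.

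The main obstacle is precisely this freeness in the final step: proving that there is no relation among the negative $\mathcal{W}(\mathfrak{sp}_6)$-modes annihilating $v$. Everything else is a direct consequence of Theorem~\ref{t:3} combined with the machinery of Sections~\ref{s:transvection}--\ref{s:AiryHeisenberg}. The freeness argument rests on two facts: the free strong generation of $\mathcal{W}(\mathfrak{sp}_6)$ by $W^2,W^4,W^6$, so that $U_-$ has exactly the expected size; and the invertibility of the leading-order linear map, which lets the $\hbar$-adic induction run. One should also keep in mind that the ambient space here is $M=\pi_{\rho_0}$, \emph{not} a free-boson Fock module, so the Fock module for $\mathcal{W}(\mathfrak{sp}_6)$ produced this way sits inside the larger indecomposable module $M$ — which is exactly the point of working in the scenario of Section~\ref{s:second}.
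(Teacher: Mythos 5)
Your derivation of the first three assertions is exactly the paper's route: the paper offers no separate proof of this Corollary, treating it as an immediate consequence of Theorem~\ref{t:3} combined with Theorem~\ref{t:airy}/Corollary~\ref{c:pf} as specialized in Section~\ref{s:second} (zero mode as derivative, hence the linear $F_{g,1}(\tilde J_0)$ terms), and your annihilation argument is the same tautology that the annihilator of the generator of $\widehat{U}^\hbar(\mathfrak{h})/\mathcal{I}$ is $\mathcal{I}$ itself. Where you diverge is the final assertion: the paper simply states that the negative modes acting on $v$ generate a ($\hbar$-adically completed) Fock module for $\mathcal{W}(\mathfrak{sp}_6)$, i.e.\ it reads ``Fock module'' as the highest-weight-type module generated by a vector killed by all non-negative modes, and does not prove (or claim to prove) freeness over the negative-mode subalgebra. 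Your proposed freeness argument is therefore extra content; be aware that its key step is not actually backed by what is proved in the paper: Lemma~\ref{l:linearterms} and the matrix $\mathbf{M}$ in the proof of Theorem~\ref{t:3} concern only the modes $H^m_k$ with $k\ge 0$, while for $k<0$ the dilaton shift of Definition~\ref{d:dilatonshift} produces additional low-order contributions (e.g.\ $O(\hbar^0)$ terms from $(J_{-3})^m$ in $W^m_{-1}$ and the constant $O(\hbar)$ terms noted in that proof), so the claim that the negative modes reduce at leading order, via the same $\mathbf{M}$, to an invertible reparametrization of the negative Heisenberg modes would need its own computation before the $\hbar$-adic induction could run. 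If you only want what the Corollary (as stated in the paper) asserts, you can drop that step; if you want genuine freeness, the gap just described is what you would have to fill.
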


\begin{Rem}
What is particularly interesting here is that the state $v$ does not live in the $\hbar$-completion of the Fock module generated by $|x_0 \rangle$; indeed, the conjugate modes $\tilde J_0$ appear in $v$. This is a direct consequence of the fact that we need to interpret the zero mode $J_0$ as a derivative instead of a variable -- see Section \ref{s:AiryHeisenberg}, and in particular Section \ref{s:second}.

\end{Rem}

\subsection{Airy ideals for $\mathcal{W}(\mathfrak{sp}_{2N})$}\label{s:genN}
In this section we generalize the above construction for all $N \ge 3$. We follow closely the methods and logic of the previous section.

\subsubsection{The strong generators}

We use the notation in Section \ref{s:VOA} and Section \ref{s:twisted}. We consider the rank $N$ free boson VOA, generated by states $\{ \chi^0, \chi^1, \ldots, \chi^{N-1} \}$. We consider the fully cyclic automorphism $\sigma$, and the corresponding $\sigma$-twisted module with map $Y_\sigma$. Let $v^0, v^1, \ldots, v^{N-1}$ be the diagonal basis. For clarity, we write
\begin{equation}
\chi^i(z) := Y_\sigma(\chi^i,z), \qquad v^i(z) = Y_\sigma(v^i,z), \qquad i=0,1,\ldots,N-1.
\end{equation}
We index the modes of the twisted fields in the diagonal basis as usual:
\begin{align}
v^k(z) &= \sum_{n \in \mathbb{Z}} J_{N n+k} z^{-n-1-k/N}, \quad k=0,1, \ldots, N-1
\end{align}
We introduce $\hbar$ in our module as usual, which turns the algebra of modes into the graded Rees algebra associated to the filtration by conformal weight.

We obtained the twisted fields $W^m_{\sigma,\hbar}(z)$ for the strong generators of $\mathcal{W}(\mathfrak{sp}_{2N})$ in our $\hbar$-deformed $\sigma$-twisted module in Lemma \ref{l:finalform}. As before, for clarity we  drop the subscripts $\sigma,\hbar$ on the generators, and denote them simply by $W^m(z)$. We define the modes of the strong generators as
\begin{equation}
W^m(z) = \sum_{k \in \mathbb{Z}}W^m_{k} z^{-k-1}.
\end{equation}

From Lemma \ref{l:finalform} it is easy to see that each of these modes takes the form
\begin{equation}
W^m_k = \hbar^m p^m_k(J_a),
\end{equation}
where $p^m_kl(J_a)$ is a polynomial (a sum of normal ordered monomials) in the bosonic modes $J_a$ of degree $ \leq m$. As for the $N=3$ case, the modes $W^m_k$ with $k \geq 0$ certainly do not generate an Airy ideal  as they are homogeneous of degree $m$ in $\hbar$, and $m>1$. To obtain an Airy ideal we introduce an automorphism of $\widehat{U}^\hbar(\mathfrak{h})$ (dilaton shift) that breaks the $\hbar$-homogeneity.

\begin{Def}\label{d:dilatonshiftN}
Let $\phi: \widehat{U}^\hbar(\mathfrak{h}) \to \widehat{U}^\hbar(\mathfrak{h})$ be the automorphism given by
\begin{equation}
\phi: (\hbar, \hbar J_m) \mapsto (\hbar, \hbar J_m + \delta_{m,-N} + \delta_{m,-N-1}).
\end{equation}
In other words, it simply shifts the modes $\hbar J_{-N} \mapsto \hbar J_{-N} + 1$ and $\hbar J_{-N-1} \mapsto \hbar J_{-N-1} + 1$. It can be understood as acting by conjugation; for any $P \in \widehat{U}^\hbar(\mathfrak{h})$, we can think of $\phi(P)$ as being given by
\begin{equation}
\phi(P) = \exp \left( \frac{J_N}{N \hbar} +\frac{J_{N+1}}{(N+1) \hbar} \right) P  \exp \left( - \frac{J_N}{N \hbar} -\frac{J_N}{(N+1) \hbar} \right). 
\end{equation}
This is of course a natural generalization of the dilaton shift Definition \ref{d:dilatonshift} for $N=3$.
\end{Def}

The strong generators $W^m(z)$ of $\mathcal{W}(\mathfrak{sp}_{2N})$ are mapped to new fields $\phi(W^m(z))$ under this automorphism. We introduce the following notation for the image fields and their modes:
\begin{equation}
H^m(z) := \frac{N^{m-1} m!}{2} \phi(W^m(z)), \qquad H^m_k := \frac{N^{m-1} m!}{2} \phi(W^m_k).
\end{equation}

Clearly, the action of $\phi$ breaks homogeneity in $\hbar$, which is what we want. Our goal is to show that the non-negative modes $H^m_k$, $k \geq 0$, generate an Airy ideal in $\widehat{U}^\hbar(\mathfrak{h})$. As in the $N=3$ case, we need to prove Conditions (1)--(4) in Lemma \ref{l:aairy}. We first focus on Condition (3), which amounts to studying the $O(\hbar^0)$ and $O(\hbar^1)$ terms in the $H^m_k$.

\begin{Lem}
\label{l:ope}
The modes $H^m_k$ satisfy, for $k \geq 0$ and $m=2,4,6,\ldots, 2N$:
\begin{equation}
H^m_k =  \hbar \sum_{i=0}^{m-1}\frac{m!}{(m-i-1)!i!}  J_{Nk+i} + O (\hbar^2).	
\end{equation}
\end{Lem}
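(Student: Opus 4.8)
The strategy is a direct generalization of the argument in Lemma \ref{l:linearterms} for $N=3$. Since $H^m_k = \frac{N^{m-1}m!}{2}\phi(W^m_k)$ and $W^m_k = \hbar^m p^m_k(J_a)$ with $p^m_k$ of degree $\leq m$ in the bosonic modes, the dilaton shift $\phi$ (which sends $\hbar J_{-N}\mapsto \hbar J_{-N}+1$ and $\hbar J_{-N-1}\mapsto \hbar J_{-N-1}+1$) can only produce $O(\hbar^0)$ and $O(\hbar^1)$ contributions from the monomials of degree $m$ and $m-1$ in $p^m_k$. So the whole computation reduces to extracting the top two homogeneous pieces of $W^m(z)$ from the explicit formula \eqref{eq:total} in Lemma \ref{l:finalform}, rewriting them in the diagonal basis $v^i(z)$, and applying $\phi$.

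First I would isolate from \eqref{eq:total} the degree-$m$ and degree-$(m-1)$ parts. Looking at the Fa\`a di Bruno polynomials $S_n(\chi^i,z) = \frac{1}{n!}(\partial_z + \chi^i(z))^n\cdot 1$, the degree-$m$ part of $W^m(z)$ comes only from the $k=0$ term $c_0=1$ in each sum and equals $\hbar^m\sum_i \frac{2}{m!}(\chi^i(z))^m$ (the two $S_m(\pm\chi^i,z)$ terms contributing $\frac{1}{m!}(\chi^i)^m$ each, since odd-degree cross terms from $S_m(-\chi^i)+S_m(\chi^i)$ combine to twice the even... actually here one must be careful: $S_{m-k}(-\chi^i)+S_{m-k}(\chi^i)$ keeps only even powers, but for the top degree the power is $m$ which is even, so it survives). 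The degree-$(m-1)$ part comes from the $-\sum_k \frac{c_k}{z^k}\partial_z S_{m-1-k}$ term with $k=0$, giving $-\hbar^m\sum_i \frac{1}{(m-1)!}\partial_z(\chi^i(z))^{m-1}$; using $c_1=0$, no other contribution at this degree survives. Then substitute $\chi^i(z) = \frac{1}{N}\sum_{a=0}^{N-1}\theta^{ia}v^a(z)$ and sum over $i$ — only $\sigma$-invariant combinations (those with $\sum_a a\alpha_a \equiv 0 \pmod N$) survive, as noted in the Remark after Lemma \ref{l:finalform}.

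Next, I would apply the dilaton shift. At $O(\hbar^1)$, the degree-$m$ part $\frac{N^{m-1}m!}{2}\cdot\hbar^m\sum_i\frac{2}{m!}(\chi^i(z))^m$ contributes: we must replace $m-1$ of the factors by the shift constants (each factor $\hbar v^a$ that picks up $\hbar J_{-N}+1$ or $\hbar J_{-N-1}+1$), leaving one factor $\hbar J_{Nk+i}$. Tracking the diagonal-basis indices: a $J_{-N}$ insertion corresponds to $v^0$ at level $-1$, a $J_{-N-1}$ insertion to $v^{N-1}$ at level $-1$; choosing $j$ insertions of the first type and $m-1-j$ of the second leaves a single surviving mode $J_{Nk+i}$ with $i$ determined by the residue, and combinatorial multiplicity $\binom{m}{j,\,m-1-j,\,1}\cdot(\text{level-matching})$. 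Summing the multinomial coefficients over the ways to distribute the shifts yields exactly $\frac{m!}{(m-i-1)!\,i!}$. I expect this bookkeeping — showing that, after summing over $i$ (the $\sigma$-average) and over the shift-distribution, the coefficient of $\hbar J_{Nk+i}$ collapses to $\binom{m}{i}(m-i)\cdot\frac{(m-1)!}{\cdots}$, i.e. $\frac{m!}{(m-i-1)!i!}$ — to be the main obstacle; it is the heart of the generalization and requires care with which diagonal-basis modes the dilaton shift touches. The degree-$(m-1)$ part, after the shift, produces only $O(\hbar)$ \emph{constants} (no $z$-dependence surviving into non-negative modes), exactly as in the $N=3$ case, so it does not affect $H^m_k$ for $k\geq 0$; I would note this and conclude. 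Finally I would remark that the $O(\hbar^0)$ terms vanish for $k\geq 0$ because producing a constant (level $0$) mode from the shifts requires the single leftover mode to be $J_0$, which only occurs for sufficiently negative $k$, and more simply because the non-negative-mode constraint forces the leftover index $Nk+i\geq 0$ to be an actual mode rather than a number.
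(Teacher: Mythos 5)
Your plan is essentially the paper's proof: isolate the degree-$m$ and degree-$(m-1)$ parts of $W^m(z)$ from Lemma \ref{l:finalform} (using $c_0=1$, $c_1=0$ and the parity cancellation inside $S_{m-k}(-\chi^i)+S_{m-k}(\chi^i)$), pass to the diagonal basis where $\sigma$-invariance imposes $\sum_a a\alpha_a \equiv 0 \pmod N$, apply the dilaton shift, and check that the degree-$(m-1)$ part does not touch the non-negative modes. The coefficient you predict, $\binom{m}{i}(m-i)=\frac{m!}{(m-i-1)!\,i!}$, is the right one, and the "bookkeeping" you flag as the main obstacle is exactly what the paper carries out.

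Two spots in your sketch need tightening when you do that bookkeeping. First, there is no sum over shift distributions: for a fixed surviving mode $J_{Nk+i}$ the mode-sum constraint of \eqref{eq:wmodesg}, $Nk+i-Nr-(N+1)s=N(k+1-m)$ with $r+s=m-1$, forces $s=i$ and $r=m-1-i$, so each $i$ receives a single contribution. In the generic case ($i\not\equiv 0,\,N-1 \pmod N$) its coefficient is the lone multinomial $\frac{m!}{1!\,(m-1-i)!\,i!}$; in the degenerate cases the surviving mode sits inside $(v^0)^{m-i}$ or $(v^{N-1})^{i+1}$, and the multinomial $\frac{m!}{(m-i)!\,i!}$ (resp.\ $\frac{m!}{(m-1-i)!\,(i+1)!}$) times the $m-i$ (resp.\ $i+1$) choices of surviving factor reproduces the same value $\frac{m!}{(m-i-1)!\,i!}$ --- this is where the uniform formula actually comes from, so these cases must be treated. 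Second, your $O(\hbar^0)$ argument is garbled: at that order there is no leftover mode at all, every factor must be $J_{-N}$ or $J_{-N-1}$, and then the mode sum is at most $-Nm<N(k+1-m)$ for $k\ge 0$; equivalently, as in the paper, such terms only arise in negative modes such as $W^m_{-1}$. Finally, your assertion that the degree-$(m-1)$ part yields only constants is in fact true (the constraints $N\mid a$, $0<a<m-1\le 2N-1$ force $a=N$, and $N\mid m-1$ forces $m-1=N$, so the shifted terms are $\hbar\,\partial_z z$ or $\hbar\,\partial_z z^0=0$), but it is not automatic "as in $N=3$"; the paper instead notes the shifted contributions are $\hbar\,\partial_z\bigl(z^{a/N}\bigr)$ with non-negative exponent and hence land only in negative modes --- either one-line check should be written out.
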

\begin{proof}
We start with Lemma \ref{l:finalform} for the strong generators $W^m(z)$. We mentioned before, the modes take the form
\begin{equation}
W^m_k = \hbar^m p^m_k(J_a),
\end{equation}
where $p^m_k(J_a)$ is a polynomial in the bosonic modes of degree $\leq m$. As in the proof of Lemma \ref{l:linearterms} for $N=3$, to study the $O(\hbar^0)$ and $O(\hbar^1)$ terms in the modes $H^m_k$ we are only interested in the monomials of degree $m$ and $m-1$ in  the polynomials $p^m_k(J_a)$.

First, Lemma \ref{l:finalform} implies that
\begin{equation}\label{eq:genfree}
		\frac{1}{\hbar^{m}}W^{m}(z)= \sum_{i=0}^{N-1} \left[ \frac{2:\chi^i(z)^{m}:}{m!}  - \frac{1}{(m-1)!} \partial_z :\chi^i(z)^{m-1}: \right] + \text{lower degree},
	\end{equation}
	where ``lower degree'' stands for terms of degree $\leq m-2$ in the bosonic modes.

These expressions are in terms of the bosonic fields $\chi^i(z)$. We need to rewrite them in terms of the twisted fields $v^i(z)$ in the diagonal basis. Recall that
\begin{equation}
\chi^i(z) = \frac{1}{N} \sum_{a=0}^{N-1} \theta^{i a} v^a(z),
\end{equation}
where $\theta = e^{2 \pi i/N}$.

We consider the degree $m$ terms in the fields $W^m(z)$. They read:
\begin{equation}
\frac{m!N^{m-1}}{2 \hbar^m}W^m(z) = \sum_{\substack{ \sum_{i=0}^{N-1} i\alpha_i |N  \\ \sum_{i=0}^{N-1} \alpha_i = m}} \frac{m!}{\alpha_0! \alpha_1! \ldots \alpha_{N-1}!} \prod_{p=0}^{N-1} \vi{p}^{\alpha_p}	+\text{lower degree}	.
\end{equation}
In terms of the modes, we get:
\begin{align}\label{eq:wmodesg}
\frac{m!N^{m-1}}{2\hbar^m}W^m_k =  \sum_{\substack{ \sum_{i=0}^{N-1} i\alpha_i |N  \\ \sum_{i=0}^{N-1} \alpha_i = m}} \frac{m!}{\alpha_0! \alpha_1! \ldots \alpha_{N-1}!}  \prod_{i=1}^{\alpha_0} :J_{\beta^0_i}: \prod_{i=1}^{\alpha_1} :J_{\beta^1_i}: \ldots \prod_{i=1}^{\alpha_{N-1}} :J_{\beta^{N-1}_{i}}:  +\text{lower degree}
\end{align}
where $\beta^j_i \equiv j \pmod N$ for all $i$ and $ \sum_{j=0}^{N-1} \sum_{i=1}^{\alpha_j}\beta^j_i = N(k + 1 -m) .$
	 
	 Now we want to implement the dilaton shift (the automorphism $\phi$ of Definition \ref{d:dilatonshiftN} on these highest degree terms. From \eqref{eq:wmodesg}, the $O(\hbar^0)$ terms can only come from a term proportional to $(J_{-N})^{m}$, but these terms only come up in the mode expansion of $W^m_{-1}$. Therefore the dilaton shift does not produce $O(\hbar^0)$ terms in the non-negative modes.

	As for the $O(\hbar^1)$ terms, they are essentially determined by the conditions:
	\begin{equation}\label{eq:cond}
	 \sum_{i=0}^{N-1} i \alpha_i | N, \quad \sum_{i=0}^{N-1} \alpha_i = m.
	\end{equation}
	 To get a  $O(\hbar^1)$ term we have to shift all modes except for one in the mode expansion given by \eqref{eq:wmodesg}, hence we only need to consider terms with $\alpha_i =0$ for $i=1,2, \ldots, N-2$ or $\alpha_i=1$ for some $i=1,2, \ldots, N-2.$ 
	 For a term of the form $\prod_i J_{\gamma_i}$ we recall that the modes add up as follows, 
	 \begin{equation}
	   \sum_i \gamma = N(k+1-m).
	 \end{equation}
	  The lowest and highest index of the $O(\hbar^1)$ terms produced in $H^m_k$ are due to the terms with $\alpha_0=m$ and $\alpha_{N-1}=m-1$ respectively, that is terms of the form
	 \begin{equation}
	 	:J_{Nk} (J_{-N})^{m-1}:, \qquad b_{m,k}:J_{Nk+m-1}(J_{-(N+1)})^{m-1}:,
	 \end{equation}
	and are given by
	 \begin{equation}
	 	m J_{Nk}, \qquad d_{n,k} J_{Nk+m-1},
	 \end{equation}
	 respectively, where
	 \begin{align}
	 	b_{m,k} &= 1 , \qquad \mathrm{if} \,\, m- 1 \equiv N-1 \pmod N \\
	 	b_{m,k} &= m, \qquad \mathrm{otherwise} 
	 \end{align}
	 and $d_{m,k}=m$.
	  In between these two extreme cases we have terms of the form, 
	 \begin{equation}
	 \frac{m!}{(m-a-1)!a!}	:J_{Nk+a} (J_{-N})^{m-a-1} (J_{-N-1})^{a}: , \quad  a=1,2, \ldots, m-2 ,
	 \end{equation}
	 and after the dilaton shift these yield the following $O(\hbar^1)$ terms:
	 \begin{equation}
	 \frac{m!}{(m-a-1)!a!}	J_{Nk+a}.
	 \end{equation}
	 
	 Finally, we look at the sub-leading degree $m-1$ terms in the fields $W^m(z)$, which may contribute $O(\hbar^1)$ terms after dilaton shift. From \eqref{eq:genfree} the degree $m-1$ term in $\frac{1}{\hbar^m} W^m(z)$ is proportional to
	 \begin{equation}
	 \sum_{i=0}^{N-1}\partial_z: (\chi^i(z))^{m-1}: .
	 \end{equation}
	 After changing to the diagonal basis only terms of the form
	 \begin{equation}\label{eq:dterms}
	\partial_z:(v^0(z))^{m-1}:, \quad \partial_z:(v^{N-1}(z))^{m-1}: \delta_{N|m-1}, \quad \partial_z :(v^0(z))^{m-1-a} (v^{N-1}(z))^{a}: \delta_{N| a }, \quad 0<a<m-1,
	 \end{equation}
	 can yield $O(\hbar^1)$ corrections. Note that these are terms that are firstly invariant under $\sigma$ and secondly have as factors only the dilaton shifted fields $v_0(z)$ and $v_{N-1}(z)$. After performing the dilaton shifts $\hbar J_{-N} \mapsto J_{-N} +1$ and $\hbar J_{-N-1} \mapsto \hbar J_{-N-1}+1$ it is easy to check that $O(\hbar)$ corrections are only produced for negative modes, and are zero for all non-negative modes, as in the $N=3$ case. The easiest way to see this is to reformulate the dilaton shift as, 
	 \begin{align}
	 	v_0(z) \mapsto v_0(z) + \frac{1}{\hbar}, \\
	 	v_{N-1}(z) \mapsto v_{N-1}(z) + \frac{z^{1/N}}{\hbar}.
	 \end{align}
	 Then the  $O(\hbar)$ terms produced by each of the terms mentioned in \eqref{eq:dterms} in the operators $W^m(z)$ from this dilaton shift are respectively of the form, 
	 \begin{align}
	 	 \hbar \partial_z (z^0), \quad \hbar\partial_z(z^{\frac
	 {m-1}{N}}), \quad \hbar\partial_z(z^{\frac{a}{N}}).
	 \end{align}
	 As the powers of $z$ in the above expressions are non-negative, the result follows.


\end{proof}

\subsubsection{The Airy ideal}

We now prove that the left ideal generated by the modes $\{H^m_k\}_{m=2,4,\ldots,2N}$ with $k \geq 0$ in $\widehat{U}^\hbar(\mathfrak{h})$ is an Airy ideal. We find ourselves in the scenario of Section \ref{s:second}, where the zero mode $J_0$ of the field $v^0(z)$ (which is the only zero mode, see \eqref{eq:bb}) acts as a derivative $\partial_0$.

\begin{Th}\label{t:N}
Let $\mathcal{I}$ be the left ideal in $\widehat{U}^\hbar(\mathfrak{h})$ generated by the $\{H^m_k\}_{m=2,4,\ldots,2N}$ with $k \geq 0$. Then $\mathcal{I}$ is an Airy ideal. 
\end{Th}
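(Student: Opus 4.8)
The plan is to mimic exactly the structure of the proof of Theorem \ref{t:3}, checking the four conditions of Lemma \ref{l:aairy} for the collection $\{H^m_k\}_{m=2,4,\ldots,2N,\ k\geq 0}$, re-indexed as a single family $\{L_i\}_{i\in I}$ with $I=A=\mathbb{N}$ via $H^m_k =: L_{Nk+\frac{m}{2}-1}$ (so that as $m$ runs over $2,4,\ldots,2N$ and $k$ over $\mathbb{N}$, the index $Nk+\frac{m}{2}-1$ runs bijectively over $\mathbb{N}$).

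First, Conditions (1) and (2) are immediate: boundedness holds for non-negative modes of strong generators of a sub-VOA of $\pi^N$ by Lemma \ref{lemma:VOAboundedness}, and Condition (2) of Lemma \ref{l:aairy} (the commutator closure $[H^m_k,H^{m'}_{k'}]=\hbar^2\sum f H$) holds for the non-negative modes of the strong generators of any VOA realized inside the Heisenberg VOA — this is the standard argument (cf. Proposition 3.14 of \cite{Airy}) that the non-negative modes of a VOA close into a subalgebra and the $\hbar^2$ appears because of \eqref{eq:hbar2} together with the fact that $H^m(z)$ are genuine fields of a VOA. Next, Condition (3): by Lemma \ref{l:ope}, for $k\geq 0$ we have $H^m_k=\hbar\sum_{i=0}^{m-1}\binom{m}{i}\frac{(m-1)!}{(m-i-1)!}\frac{1}{?}\cdots$ — more precisely $H^m_k=\hbar\sum_{i=0}^{m-1}\frac{m!}{(m-i-1)!\,i!}J_{Nk+i}+O(\hbar^2)$, so the linear-in-$\hbar$ part of $L_{Nk+\frac{m}{2}-1}$ is a combination of $J_{Nk},\ldots,J_{Nk+m-1}$; hence for fixed $a=Nk_0+b$ with $0\le b<N$, the coefficient $M_{ia}$ of $\hbar J_a$ in $L_i=H^m_k$ vanishes unless $Nk\le a$ and $a\le Nk+m-1\le Nk+2N-1$, i.e. unless $k\in\{k_0-1,k_0\}$ (finitely many), so $M_{ia}$ vanishes for all but finitely many $i$, as required.

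The substantive point is Condition (4): one must exhibit coefficients $N_{bj}$ with the triangularity/finiteness properties inverting the relation $L_i=\sum_a M_{ia}\hbar J_a+O(\hbar^2)$. The key structural observation is that, for each fixed $k\geq 0$, the $N$ operators $H^2_k,H^4_k,\ldots,H^{2N}_k$ have $\hbar^1$-parts that, modulo the modes $J_{Nk+N},\ldots,J_{N k+2N-2}$ belonging to the next block, are given by the $N\times N$ matrix $\mathbf{M}$ with entries $M_{m/2,\,i+1}=\frac{m!}{(m-i-1)!\,i!}$ for $0\le i\le m-1$ (and $0$ for $i\ge m$) acting on the column $(J_{Nk},\ldots,J_{Nk+N-1})^T$, plus a ``tail'' supported on the block $k+1$. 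So I would: (i) prove $\mathbf{M}$ is invertible — this is the main obstacle, and I expect it to follow because $\mathbf{M}$ is, up to row/column rescalings by factorials, an upper-triangular-plus-lower-structured matrix whose entries $\frac{m!}{(m-i-1)!\,i!}$ are, up to normalization, the coefficients of $(1+x)^{m-1}$ expanded suitably; concretely the determinant can be evaluated by recognizing these as evaluations of polynomials of distinct degrees $m-1=1,3,5,\ldots,2N-1$, which are linearly independent, giving a nonzero (Vandermonde-like) determinant; (ii) having inverted $\mathbf{M}$, write $\hbar J_{Nk+i}+O(\hbar^2)=\sum_{m}(\mathbf{M}^{-1})_{i+1,m/2}\big(H^m_k-(\text{tail in block }k+1)\big)$ and substitute recursively for the block-$(k+1)$ modes, exactly as in \eqref{eq:jinvert3}–\eqref{eq:jinvert4}; the recursion terminates for any fixed power of $\hbar$-free index because each substitution pushes support to strictly higher blocks, so $N_{Nk+i,\,m}=0$ whenever $m<Nk$, yielding the required finiteness of $\{N_{bj}\}$ in $b$ for fixed $j$.

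With all four conditions of Lemma \ref{l:aairy} verified, $\mathcal{I}$ is an Airy ideal; moreover the inversion shows $\mathcal{I}$ is equally generated by operators $\tilde L_{Nk+i}=\sum_m N_{Nk+i,m}L_m=\hbar J_{Nk+i}+O(\hbar^2)$, in particular by an operator $\tilde L_0=\hbar J_0+O(\hbar^2)$ whose $O(\hbar^2)$ part is non-zero, so we are genuinely in the scenario of Section \ref{s:second} with $J_0$ acting as a derivative. The only genuinely new work compared to the $N=3$ case is the invertibility of the $N\times N$ matrix $\mathbf{M}$ and the bookkeeping that the tail terms always live in strictly later blocks; everything else is a verbatim generalization of the proof of Theorem \ref{t:3}.
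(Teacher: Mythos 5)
Your proposal is correct and follows essentially the same route as the paper: re-index the modes as $L_{Nk+\frac{m}{2}-1}$, get Conditions (1) and (2) of Lemma \ref{l:aairy} from Lemma \ref{lemma:VOAboundedness} and Proposition 3.14 of \cite{Airy}, read off Condition (3) from Lemma \ref{l:ope}, and prove Condition (4) by block-wise inversion of the $N\times N$ matrix $\mathbf{M}$ of linear coefficients followed by recursive substitution of the block-$(k+1)$ tail, which gives the triangularity $N_{Nk+i,m}=0$ for $m<Nk$. The one place you genuinely diverge is the invertibility of $\mathbf{M}$: the paper rescales so the entries become $\binom{2i}{j}$ and invokes Kersey's result \cite{kersey} that submatrices of the Pascal matrix with non-zero diagonal are invertible, whereas you sketch a direct Vandermonde-type argument. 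Your sketch as written conflates the two roles of the indices (the numbers $1,3,\ldots,2N-1$ are the evaluation points $m-1$, not the degrees of the relevant polynomials), but the correct version of your idea does work and is self-contained: since $c^m_i=m\binom{m-1}{i}$, after pulling the factor $m$ out of each row the $i$-th column of $\mathbf{M}$ is the degree-$i$ polynomial $\binom{x}{i}$ evaluated at the $N$ distinct points $x=1,3,\ldots,2N-1$, so $\mathbf{M}$ factors as a Vandermonde matrix times a triangular matrix with non-zero diagonal $1/i!$, hence $\det\mathbf{M}\neq 0$. With that one step made precise, your argument is complete and avoids the external citation the paper relies on; everything else, including the observation that $\tilde L_0=\hbar J_0+O(\hbar^2)$ forces the zero mode to act as a derivative, matches the paper's proof of Theorem \ref{t:3} and its generalization.
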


\begin{proof}
To prove that $\mathcal{I}$ is an Airy ideal, we need to check that Conditions (1)--(4) in Lemma \ref{l:aairy} are satisfied.

\emph{Condition (1)}. The boundedness condition is automatically satisfied for the modes of the fields of a VOA (see Lemma \ref{lemma:VOAboundedness}).

\emph{Condition (2)}. It is always satisfied for the subset of non-negative modes of the strong generators of a VOA  (see Proposition 3.14 in \cite{Airy}).

\emph{Condition (3)}.
For simplicity, let us re-index our operators $H^m_k$, $i=2,4,\ldots,2N$, as
\begin{equation}
H^m_k =: L_{N k +\frac{m}{2} - 1}.
\end{equation}
Then the operators are indexed by $\{ L_i\}_{i \in I}$ with $I=A= \mathbb{N}$. We want to determine whether
\begin{equation}\label{eq:pi1N}
L_i =   \sum_{a \in \mathbb{N}} M_{i a}\hbar J_a + O(\hbar^2)
\end{equation}
for some coefficients $M_{ia}$ such that for all fixed $a \in \mathbb{N}$, they vanish for all but finitely many $i \in \mathbb{N}$. But we showed in Lemma \ref{l:ope} that
\begin{equation}
\label{eq:pi11}
H^m_k = L_{Nk + \frac{m}{2}-1} =  \hbar \sum_{i=0}^{m-1}\frac{m!}{(m-i-1)!i!}  J_{Nk+i} + O(\hbar^2).	
\end{equation}
In particular, we can write (for $k \geq 0$ and $n \in \{0,1,\ldots,N-1\}$)
\begin{equation}
L_{Nk + n}= \sum_{a \in \mathbb{N}} M_{Nk+n, a} \hbar J_a + O(\hbar^2),
\end{equation}
with $M_{Nk+n, a} = 0$ for all $a < N k$. In other words, for a fixed $a$, the only non-vanishing coefficients $M_{Nk+m,a}$ are for $N k \leq a$. In particular, for any fixed $a \in \mathbb{N}$, the coefficients $M_{i a}$ vanish for all but finitely many $i \in \mathbb{N}$, as required. 
	
	\emph{Condition (4).} We need to show that there exists coefficients $N_{bj}$ such that
	\begin{equation}
	\sum_{i \in \mathbb{N}} N_{b i} M_{i a} = \delta_{a b}, \qquad \sum_{a \in \mathbb{N}} M_{i a} N_{a j} = \delta_{ij},
	\end{equation}
	and such that for all fixed $j \in \mathbb{N}$, the coefficients $N_{bj}$ vanish for all but finitely many $ b \in \mathbb{N}$. Equivalently, we need to show that we can invert the relation \eqref{eq:pi1} to get
	\begin{equation}
	 \sum_{ i \in \mathbb{N}} N_{b i} L_i = \hbar J_b + O(\hbar^2)
	\end{equation}
	with the coefficients such that for all fixed $i \in \mathbb{N}$ they vanish for all but finitely many $b \in \mathbb{N}$.
	
	We proceed as for the $N=3$ case. We rewrite \eqref{eq:pi11} as
	\begin{equation}\label{eq:rewrite}
	 L_{Nk + \frac{m}{2}-1} =  \sum_{i=0}^{m-1}c^m_i  \hbar J_{Nk+i} + O(\hbar^2), \qquad c^m_i = \frac{m!}{(m-i-1)!i!}.
	\end{equation}
	Let $\textbf{J}_k:=\hbar \begin{pmatrix}
	J_{Nk} \\J_{Nk+1} \\ \vdots \\  J_{Nk + N - 1}
	\end{pmatrix}$, $\textbf{K}_k:=\begin{pmatrix}
	L_{Nk} - \sum_{m> Nk+ N+1} c^2_m \hbar J_{Nk+m} \\L_{Nk+1} -   \sum_{m> Nk+ N+1} c^4_m \hbar J_{Nk+m} \\ \vdots \\ L_{Nk+N-1} - \sum_{m> Nk+ N+1} c^{2N}_m  \hbar J_{Nk+m}
	\end{pmatrix}$ and 
	\begin{equation}(\textbf{M})_{ij} = 
	\begin{cases}
	 0, \quad j > 2i    \\
	 c^{2i}_{j-1},  \quad \text{otherwise}
	\end{cases}
	\end{equation}
	for $1 \le i,j \le N$.
	From \eqref{eq:rewrite} we get matrix equations
	\begin{equation}
	\textbf{M} \textbf{J}_k  +O(\hbar^2)= \textbf{K}_k, \qquad k \ge 0. 
	\end{equation}
	One can see that $\textbf{M}$ is invertible. This can be shown in various ways, such as converting to an upper/lower triangular matrix using Gaussian elimination or LU decomposition. Note that dividing the $i$th row by $i$ gives a matrix, 
\begin{equation}(\textbf{M})_{ij} = 
\begin{cases}
0, \quad j>2i  \\
{2i \choose  j}, \quad \text{otherwise}
\end{cases}
\end{equation}
which is a submatrix of the infinite `Pascal matrix'. Submatrices of the Pascal matrix with non-zero diagonal entries are invertible, as proved in \cite{kersey}.

As $\textbf{M}$ is invertible, we can invert the matrix equations to get (for $k \geq 0$ and $i \in \{0,1,\ldots,N-1\}$):
	\begin{align}\label{eq:jpgen}
	\hbar J_{Nk+i} + O(\hbar^2) = \sum_{j=1}^{N} \delta_j L_{Nk+j-1} + \sum_{j=N} ^{2N-1}\epsilon_j \hbar J_{Nk+j}
	\end{align}
	for some constants $\delta_j, \epsilon_j \in \mathbb{Q}$.  Substituting back recursively the formulae for $\hbar J_{Nk+j}$, we end up with
	\begin{equation}
	\hbar J_{Nk+i} + O(\hbar^2) = \sum_{m \in \mathbb{N}} N_{Nk+i,m} \pi^{\leq 1}(L_m),
	\end{equation}
	for $k \geq 0$ and $i \in \{0,1,2,N-1\}$ and where all $N_{Nk+i,m}=0$ for $m < Nk$. Equivalently, for any fixed $m \in \mathbb{N}$, the only non-vanishing coefficients are $N_{Nk+i,m}$ with $Nk \leq m$. In particular, for all fixed $m \in \mathbb{N}$ the coefficients vanish for all but finitely many $Nk+i \in \mathbb{N}$, and the condition is satisfied.
	
	As all conditions of Lemma \ref{l:aairy} are satisfied, we conclude that the left ideal $\mathcal{I}$ generated by the $\{H^m_k\}$, $k \geq 0$, $m \in \{2,4,\ldots,2N\}$ is an Airy structure. Furthermore, from the calculation above we see that we can also think of the left ideal as being generated by the differential operators ($k \geq 0$, $i \in \{0,1,2,\ldots,N-1\}$):
	\begin{equation}
	\tilde{L}_{Nk+i} =  \sum_{m \in \mathbb{N}} N_{Nk+i, m}  L_m = \hbar J_{Nk+i} + O (\hbar^2).
	\end{equation}
In particular, as in the $N=3$ case we have an operator $\tilde L_0 = \hbar J_0 + O(\hbar^2)$. Therefore, we are again in the scenario of Section \ref{s:second}, where we need to intepret the zero mode $J_0$ as a derivative in the Weyl algebra.
\end{proof}

Now that we know that the left ideal $\mathcal{I}$ generated by the modes of the strong generators of $\mathcal{W}(\mathfrak{sp}_{2N})$ is an Airy ideal, we obtain an immediate Corollary from Theorem \ref{t:airy} (see also Section \ref{s:second}).

\begin{Cor}
Let $\mathcal{I}$ be the left ideal in $\widehat{U}^\hbar(\mathfrak{h})$ generated by the non-negative modes $\{H^m_k \}_{m=2,4,\ldots,2N}$, with $k \geq 0$. Then $\widehat{U}^\hbar(\mathfrak{h}) / \mathcal{I}$ is a cyclic left module canonically isomorphic to the ($\hbar$-adically completed) submodule of $M$ (see Section \ref{s:second}) for the rank $N$ free boson VOA generated by $|x_0\rangle$, but twisted by some stable transvection on $\widehat{U}^\hbar(\mathfrak{h})$.

$\widehat{U}^\hbar(\mathfrak{h}) / \mathcal{I}$ is also canonically isomorphic to a module of exponential type generated by a state
\begin{equation}
v:= Z |0 \rangle= \exp \left( \sum_{\substack{g \in \frac{1}{2} \mathbb{N}, n \in \mathbb{N}^* \\ 2g-2+n>0}} \hbar^{2g-2+n} F_{g,n}(J_{-1}, J_{-2}, \ldots) +  \sum_{g \in \frac{1}{2} \mathbb{N}^*} \hbar^{2g-1} F_{g,1} (\tilde J_0) \right) |0 \rangle,
\end{equation}
for some polynomials $F_{g,n}$ homogeneous of degree $n$ in the respective modes, with $F_{g,n}(0) = 0$. Here the $\tilde J_0$ are the modes conjugate to the zero modes $J_0$.

Furthermore, by construction the state $v$ is annihilated by all non-negative modes $\{H^m_k\}_{m=2,4,\ldots,2N}$, with $k \geq 0$:
\begin{equation}
H^m_k v = 0, \qquad m = 2,4, \ldots, 2N \quad k \in \mathbb{N}.
\end{equation}
Therefore, the action of the negative modes $H^m_k$, $k <0$ on $v$ generates a  ($\hbar$-adically completed) Fock module for $\mathcal{W}(\mathfrak{sp}_{2N})$.
\end{Cor}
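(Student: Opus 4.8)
The plan is to derive the Corollary as a formal consequence of Theorem~\ref{t:N} together with the general structure theory of Section~2, specialized to the subalgebra setting of Section~\ref{s:second}. First I would invoke Theorem~\ref{t:N}: the left ideal $\mathcal{I}$ generated by $\{H^m_k\}_{m=2,4,\ldots,2N,\ k\geq0}$ is an Airy ideal, and — as recorded at the end of the proof of that theorem — it can be rewritten as the ideal generated by operators $\tilde L_{Nk+i}=\hbar J_{Nk+i}+O(\hbar^2)$, including $\tilde L_0=\hbar J_0+O(\hbar^2)$. Hence under the identification of $\widehat{U}^\hbar(\mathfrak{h})$ with $\widehat{\mathcal{D}}^\hbar(x_{\mathbb{N}^*},\partial_{\mathbb{N}})$ of Section~\ref{s:second}, in which the zero mode $J_0=\partial_0$ acts as a derivative, $\mathcal{I}$ is an Airy ideal inside this subalgebra. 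I would then apply the subalgebra version of Theorem~\ref{t:airy} (available by the discussion of Section~\ref{s:subxJ}): there is a stable transvection $\phi$ of $\widehat{U}^\hbar(\mathfrak{h})$ with $\mathcal{I}=\phi(\mathcal{I}_{\text{can}})$, $\mathcal{I}_{\text{can}}$ being the ideal generated by the non-negative modes $\hbar J_m$, $m\geq0$. Since $\widehat{U}^\hbar(\mathfrak{h})/\mathcal{I}_{\text{can}}$ is canonically isomorphic to the $\hbar$-adic completion of the Rees submodule of $M$ generated by $|x_0\rangle$ (point~(1) of Section~\ref{s:second}), twisting by $\phi$ yields the first isomorphism in the statement; stability of $\phi$, i.e.\ $q^{(1)}=q^{(2)}=s^{(1)}=s^{(2)}=0$, is forced because $\mathcal{I}$ is an Airy ideal (Lemma~\ref{l:nononzero} and the proof of Theorem~\ref{t:airy}).

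Next I would read the transvection as conjugation and invoke Lemma~\ref{l:pf}/Corollary~\ref{c:pf} in the form of point~(3) of Section~\ref{s:second}: the quotient is also canonically isomorphic to the module of exponential type generated by
\begin{equation*}
v=Z|0\rangle=\exp\!\left(-\sum_{n\geq0}\hbar^{n-1}q^{(n+1)}(J_{-1},J_{-2},\ldots)-\sum_{n\geq0}\hbar^{n-1}s^{(n+1)}(\tilde J_0)\right)|0\rangle ,
\end{equation*}
with $q^{(1)}=q^{(2)}=s^{(1)}=s^{(2)}=0$ and $s^{(n+1)}$ linear in $\tilde J_0$. Collecting homogeneous components as in the footnote to Corollary~\ref{c:pf} (writing $q^{(k+1)}=\sum_{2g-1+n=k}F_{g,n}$) recasts $Z$ in the displayed form; imposing $Z|_{x_{\mathbb{N}}=0}=1$ fixes $F_{g,n}(0)=0$ and makes $v$ unique, and $v$ can be built recursively order by order in $\hbar$ from the equations $H^m_k v=0$. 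The feature distinguishing this case from those of Section~\ref{s:first} is the occurrence of the conjugate zero mode $\tilde J_0$: by the remark at the end of Section~\ref{s:second}, not all $s^{(n+1)}(\tilde J_0)$ vanish because $\tilde L_0=\hbar J_0+O(\hbar^2)$ carries nonzero polynomial terms at order $\hbar^2$, so $v$ lies in $M=\pi_{\rho_0}$ rather than in an $\hbar$-completed Fock module over $\widehat{U}^\hbar(\mathfrak{h})$.

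For the remaining two assertions: since $\mathcal{I}=\mathrm{Ann}_{\widehat{U}^\hbar(\mathfrak{h})}(v)$ is generated by the $H^m_k$ with $k\geq0$, we immediately get $H^m_k v=0$ for all $m\in\{2,4,\ldots,2N\}$, $k\in\mathbb{N}$. Then I would note that the dilaton shift $\phi$ of Definition~\ref{d:dilatonshiftN} is an automorphism of $\widehat{U}^\hbar(\mathfrak{h})$, so $W^m_k\mapsto H^m_k=\tfrac{N^{m-1}m!}{2}\phi(W^m_k)$ transports the $\mathcal{W}(\mathfrak{sp}_{2N})$-module structure carried by any $\widehat{U}^\hbar(\mathfrak{h})$-module; hence $\widehat{U}^\hbar(\mathfrak{h})\cdot v$ is a $\mathcal{W}(\mathfrak{sp}_{2N})$-module, $v$ is annihilated by all non-negative modes of its strong generators, and — since the $W^m$, hence the $H^m$, are free strong generators — the module generated by $v$ under the negative modes $\{H^m_k\}_{k<0}$ is the ($\hbar$-adically completed) Fock module of $\mathcal{W}(\mathfrak{sp}_{2N})$, whose existence and uniqueness is precisely the content of the Airy ideal machinery (cf.\ \cite{Airy}).

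The main obstacle — everything else being bookkeeping on top of Theorem~\ref{t:N} — is to make sure the $\hbar$-adic module constructions of Section~2 transfer verbatim under the identification $\widehat{U}^\hbar(\mathfrak{h})\cong\widehat{\mathcal{D}}^\hbar(x_{\mathbb{N}^*},\partial_{\mathbb{N}})$, and in particular that $\mathcal{I}$ genuinely lies in this subalgebra, i.e.\ that the modes $H^m_k$ involve $\partial_0$ but never the conjugate variable $x_0=\tilde J_0$. This follows from the explicit form of the generators in Lemma~\ref{l:finalform} and Proposition~\ref{p:gen} together with the fact that the dilaton shift acts only on the negative modes $J_{-N}$ and $J_{-N-1}$; and the identification of $\widehat{U}^\hbar(\mathfrak{h})\cdot v$ as a Fock module for $\mathcal{W}(\mathfrak{sp}_{2N})$ rests on freeness of the strong generators, so that their negative modes act on $v$ without relations.
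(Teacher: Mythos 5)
Your proposal is correct and follows essentially the same route as the paper, which presents this Corollary as an immediate consequence of Theorem \ref{t:N} combined with Theorem \ref{t:airy}, Corollary \ref{c:pf}, and the zero-mode-as-derivative discussion of Section \ref{s:second}; you simply spell out the bookkeeping (the identification $\widehat{U}^\hbar(\mathfrak{h})\cong\widehat{\mathcal{D}}^\hbar(x_{\mathbb{N}^*},\partial_{\mathbb{N}})$, stability of the transvection, the appearance of $\tilde J_0$, and transport of the $\mathcal{W}(\mathfrak{sp}_{2N})$-structure through the dilaton shift) that the paper leaves implicit.
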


\begin{Rem}
As in the $N=3$ case, the state $v$ does not live in the $\hbar$-completion of the Fock module generated by $|x_0 \rangle$ for the  free boson VOA; indeed, the conjugate modes $\tilde J_0$ appear in $v$. This is a direct consequence of the fact that we need to interpret the zero mode $J_0$ as a derivative instead of a variable -- see Section \ref{s:AiryHeisenberg}, and in particular Section \ref{s:second}.

\end{Rem}

\end{document}